\newtheorem{definition}{Definition}
\newtheorem{theorem}{Theorem}
\newtheorem{lemma}{Lemma}
\newtheorem{corollary}{Corollary}
\newtheorem{proposition}{Proposition}
\title{Speedup Techniques for Switchable Temporal Plan Graph Optimization}
\author {
    % Authors
    He Jiang,
    Muhan Lin,
    Jiaoyang Li
}
\begin{document}

\maketitle

\begin{abstract}
Multi-Agent Path Finding (MAPF) focuses on planning collision-free paths for multiple agents. However, during the execution of a MAPF plan, agents may encounter unexpected delays, which can lead to inefficiencies, deadlocks, or even collisions. To address these issues, the Switchable Temporal Plan Graph provides a framework for finding an acyclic Temporal Plan Graph with the minimum execution cost under delays, ensuring deadlock- and collision-free execution. Unfortunately, existing optimal algorithms, such as Mixed Integer Linear Programming and Graph-Based Switchable Edge Search (GSES), are often too slow for practical use. This paper introduces Improved GSES, which significantly accelerates GSES through four speedup techniques: stronger admissible heuristics, edge grouping, prioritized branching, and incremental implementation. Experiments conducted on four different map types with varying numbers of agents demonstrate that Improved GSES consistently achieves over twice the success rate of GSES and delivers up to a 30-fold speedup on instances where both methods successfully find solutions.

\end{abstract}

% Uncomment the following to link to your code, datasets, an extended version or similar.
%
% \begin{links}
%     \link{Code}{https://aaai.org/example/code}
%     \link{Datasets}{https://aaai.org/example/datasets}
%     \link{Extended version}{https://aaai.org/example/extended-version}
% \end{links}

\section{Introduction}
Multi-Agent Path Finding (MAPF)~\citep{SternSoCS19} focuses on planning collision-free paths for multiple agents to navigate from their starting locations to destinations. However, during execution, unpredictable delays can arise due to mechanical differences, accidental events, or sim-to-real gaps. For instance, in autonomous vehicle systems, a pedestrian crossing might force a vehicle to stop, delaying its movements and potentially causing subsequent vehicles to halt as well. If such delays are not managed properly, they can lead to inefficiencies, deadlocks, or even collisions.

To address these issues, the Temporal Plan Graph (TPG) framework was first introduced to ensure deadlock- and collision-free execution \citep{honig2016multi, ma2017multi}. TPG encodes and enforces the order in which agents visit the same location using a directed acyclic graph, where directed edges represent precedence. However, the strict precedence constraints often lead to unnecessary waiting. For example, as shown in \Cref{fig: delayed_TPG}, a delay by Agent $1$ causes Agent $2$ to wait unnecessarily at vertex $F^2$ because TPG enforces the original order of visiting $G$. In such cases, switching the visiting order of these two agents would allow Agent $2$ to proceed first, avoiding unnecessary delays.

\begin{figure}[tb]
    \centering
    \begin{subfigure}[b]{0.15\textwidth}
      \raggedright
      \includegraphics[width=1\textwidth]{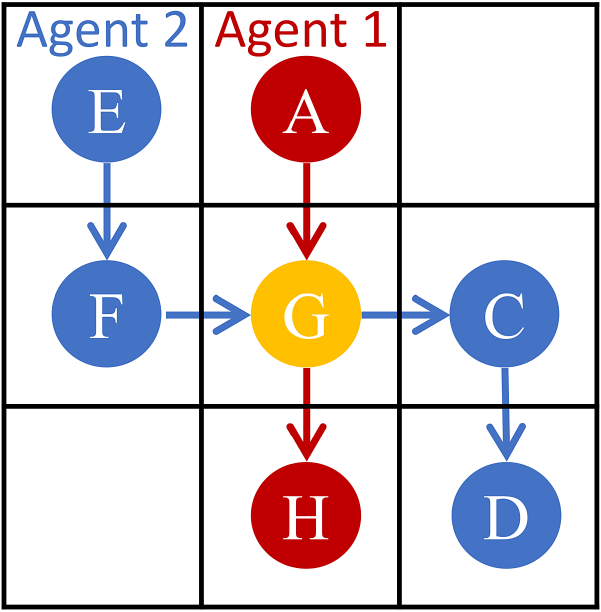}
      \caption{Grid}
      \label{fig: grid}
    \end{subfigure}%
    \hfill
    %\par\bigskip 
    \begin{subfigure}[b]{0.31\textwidth}
      \raggedleft
      \includegraphics[width=1\textwidth]{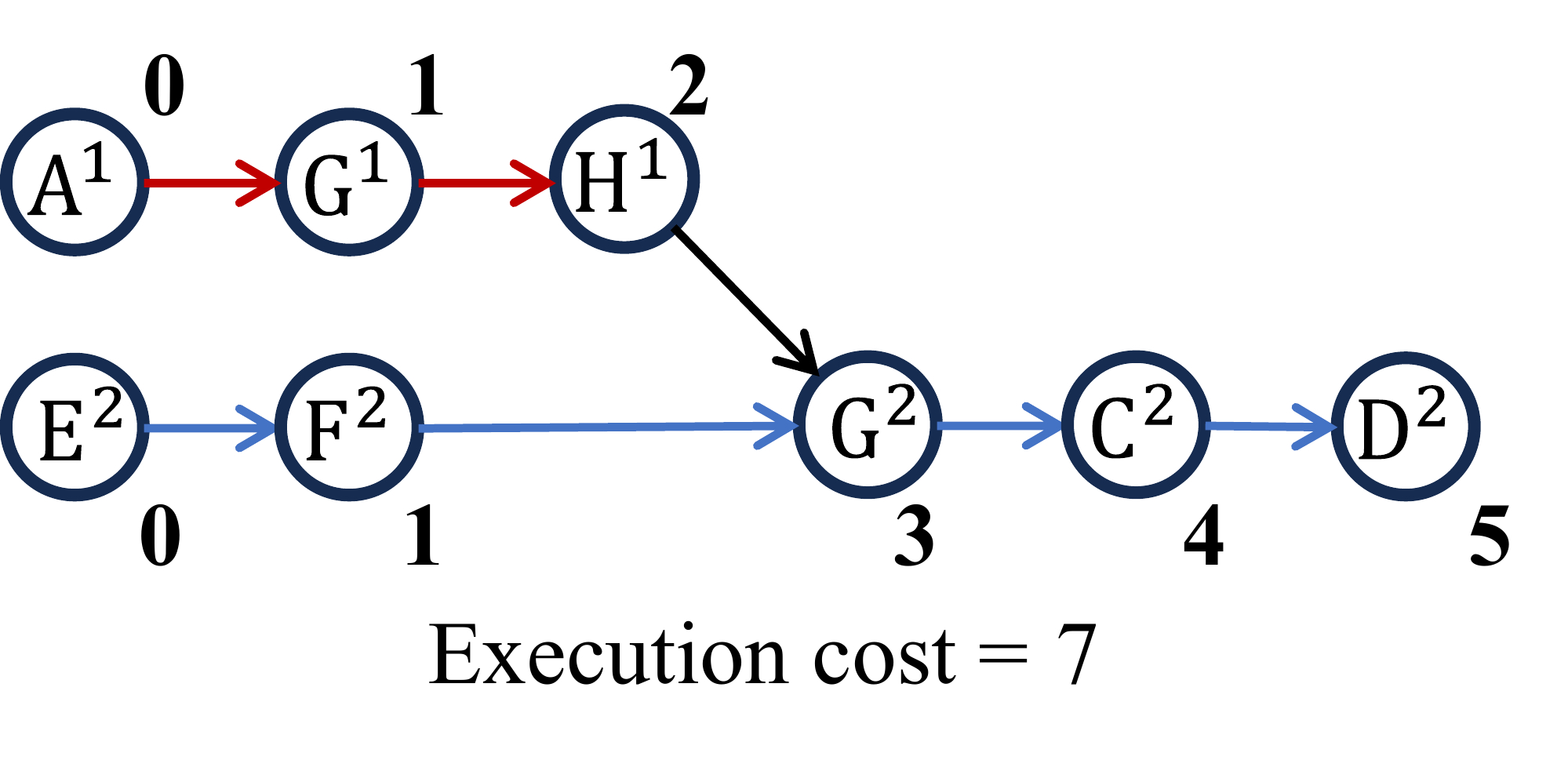}
      \caption{TPG}
      \label{fig: TPG}
    \end{subfigure}%
    \hfill

    \begin{subfigure}[b]{0.42\textwidth}
      \centering
      \includegraphics[width=1\textwidth]{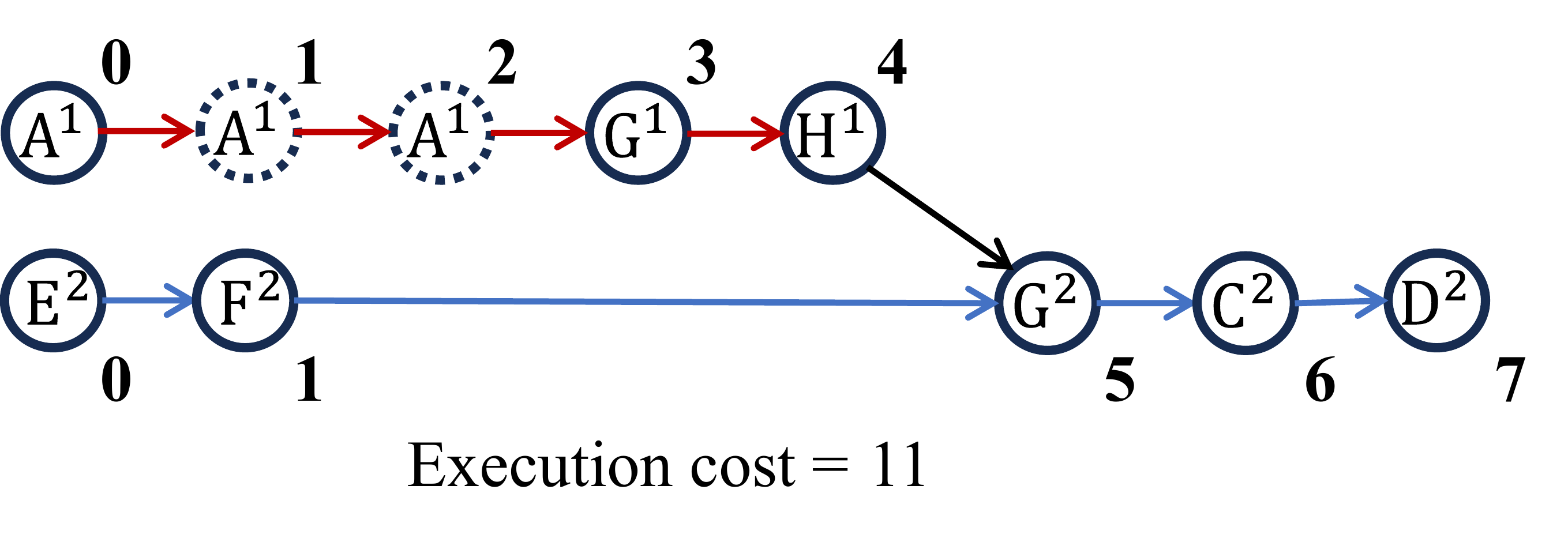}
      \caption{Delayed TPG}
      \label{fig: delayed_TPG}
    \end{subfigure}
    
    \caption{(a) shows an example of a MAPF problem. In the initial MAPF plan, Agent $1$  moves from $A$ to $H$ (red arrows) and visits $G$ first. Agent $2$ moves from $E$ to $D$ (blue arrows) and visits $G$ after Agent $1$. 
    %The node $G$ visited by both is highlighted in yellow. 
    (b) shows the TPG of the initial plan.
    % The red and blue edges are Type-1 edges for Agent 1 and 2, respectively. The black edge is the Type-2 edge. 
    Each row of vertices encodes an agent's path with superscripts  $1,2$ differentiating agents.
    The black arrow encodes the precedence that agent $2$ can only visit $G^2$ after agent $1$ arrives $H^1$.
    The bold number near a vertex $v^i$ shows the earliest possible time for agent $i$ to arrive at this vertex. The execution cost below the TPG shows the overall cost of these two agents. 
    (c) shows the TPG with a $2$-timestep delay at vertex $A^1$, which is encoded by two dashed vertices.
    }
    \label{fig: mapf_example}
\end{figure}

To capture this idea,~\citet{berndt2023receding} introduced the notion of a switchable edge that can be settled in one of two directions, each representing one possible visiting order of two agents at a location. Then, they proposed the Switchable Temporal Plan Graph (STPG), a superclass of TPG that contains a set of switchable edges. By gradually settling these switchable edges, STPG allows the search for a Temporal Plan Graph (TPG) that encodes optimal visiting orders to minimize total travel time under delays. However, both Berndt's original solution based on Mixed Integer Linear Programming (MILP) and the subsequent Graph-Based Switchable Edge Search (GSES) algorithm~\citep{FengICAPS24} are too slow to scale to scenarios with just 100 agents in the experiments.

This work makes a significant advancement in tackling the scalability issue of STPG. Specifically, we improve GSES with a series of speedup techniques that leverage the underlying structure of STPG. By estimating the future cost induced by currently unsettled switchable edges, we propose stronger admissible heuristics. To reduce the size of the search tree, we introduce a novel algorithm that identifies all the maximal groups of switchable edges whose directions can be determined simultaneously \citep{berndt2023receding}. Additionally, we introduce two straightforward yet highly effective engineering techniques---prioritized branching and incremental implementation---to enhance the efficiency of the search. We prove the correctness of these techniques and demonstrate through experiments that our final algorithm, Improved GSES (IGSES), significantly outperforms the baseline GSES and other approaches. For instance, IGSES consistently achieves more than double the success rates of GSES and shows a 10- to 30-fold speedup in average search time on instances solved by both algorithms.

\section{Related Work}
Approaches for handling delays in MAPF execution can be broadly categorized into offline and online methods.

Offline approaches consider delays during planning. Robust MAPF planning methods~\citep{atzmon2018robust,atzmon2020robust} generate robust plans under bounded or probabilistic delay assumptions. While these methods improve robustness, they tend to produce conservative solutions and require significantly more computation than standard MAPF algorithms.

Online approaches, on the other hand, react to delays as they occur. The simplest method is to replan paths for all agents upon a delay, but this is computationally expensive. TPG~\citep{honig2016multi,ma2017multi} avoids heavy replanning by adhering to the original paths and precedence constraints, but it often results in unnecessary waits due to overly strict ordering. To address this problem, \citet{berndt2023receding} introduce the STPG framework to optimize the precedence by MILP. Despite its generality, the MILP approach is too slow for large-scale problems. To improve the scalability, ~\citet{FengICAPS24} propose a dedicated search algorithm, GSES, to replace the MILP. However, GSES still suffers from inefficiencies due to the large search tree and redundant computation. In contrast,~\citet{kottinger2024introducing} proposed an alternative formulation that solves the same problem as STPG by introducing delays to agents' original plans. It constructs a graph for each agent and then applies standard MAPF algorithms to search for solutions. We refer to the optimal version of their algorithm, which applies Conflict-Based Search (CBS)~\citep{Sharon2015cbs}, as CBS with Delays (CBS-D). MILP, GSES, and CBS-D are all optimal algorithms, and we will compare our method against them in terms of planning speed in the experiments.

Different from these optimal methods, ~\citet{liu2024multi} propose a non-optimal heuristic approach called Location Dependency Graph (LDG), which reduces waits online using a formulation similar to STPG. Bidirectional TPG (BTPG)~\citep{su2024bidirectional} is another non-optimal approach based on TPG, but it falls into the offline category. BTPG post-processes a MAPF plan and produces an extended TPG with special bidirectional edges. These edges enable agents to switch visiting orders at certain locations in a first-come-first-served manner during execution. % Comparison to LDG and BTPG will be deferred to future work when suboptimal versions of GSES are developed. 

\section{Background}

In this section, we provide the necessary definitions and background for STPG optimization. For more details, please refer to the GSES paper~\cite{FengICAPS24}. 

\subsection{Preliminaries: MAPF, TPG, and STPG}
\begin{definition}[MAPF]
\label{definition: mapf}
MAPF problem aims at finding collision-free paths for a team of agents indexed by $i\in \mathcal{I}$ on a given graph, where each agent $i$ has a start location $s^i$ and a goal location $g^i$. At each timestep, an agent can move to an adjacent location or wait at its current location. We disallow two types of conflicts like previous works \cite{FengICAPS24}:
\begin{enumerate}
    \item \textbf{Vertex conflict}: two agents take the same location at the same timestep.
    \item  \textbf{Following conflict}: one agent enters a location occupied by another agent at the previous timestep.
\end{enumerate}

\end{definition}

% \begin{definition}[MAPF Plan]
% A MAPF plan is a set of collision-free paths $\mathcal{P}=\{p_i:i\in \mathcal{I}\}$ output by a $1$-robust MAPF solver, where each $p_i$ is a sequence of location-time tuples $(l_0^i,t_0^i)\rightarrow(l_1^i,t_1^i)\rightarrow \cdots \rightarrow(l_{z^i}^i,t_{z^i}^i)$ satisfying the following properties. (1) The sequence $0=t_0^i<t_1^i< \cdots <t_{z^i}^i$ forms a strict temporal ordering. (2) $l_0^i=s^i$ and $l_{z^i}^i=g^i$ are the start and goal locations. (3) Each tuple $(l_k^i,t_k^i)$ specifies a move action of agent $a_i$ from location $l_{k-1}$ to an adjacent $l_{k}$ at timestep $t_k^i$, while agent $a_i$ will wait from step $t_{k-1}^i+1$ to $t_{k}$ at location $l_{k-1}$. 
% %Each path costs $t_{z^i}$ steps to execute, and the total cost of a MAPF plan is $cost(\mathcal{P})=\sum_{i \in \mathcal{I}} t_{z^i}$
% \end{definition}

\begin{definition} [TPG]
A Temporal Plan Graph (TPG) is a directed graph $\mathcal{G}=(\mathcal{V},\mathcal{E}_1,\mathcal{E}_2)$ that encodes a MAPF plan by recording its precedence of visiting locations. 

The vertex set $\mathcal{V}=\{v_p^i: p\in[0,z^i], i \in \mathcal{I}\}$ records all locations that must be visited sequentially by each agent $i$.\footnote{Following the previous work, we merge consecutive vertices that represent the same location into one.} Specifically, $z^i$ is the number of locations for agent $i$, and each $v_p^i$ is associated with a specific location, $loc(v_p^i)$.

The edge set $\mathcal{E}_1$ contains all \textbf{Type-1} edges, which encode the precedence between an agent's two consecutive vertices. A Type-1 edge $(v_p^i,v_{p+1}^i)$ must be introduced for each pair of $v_p^i$ and $v_{p+1}^i$ to ensure that $v_p^i$ must be visited before $v_{p+1}^i$.

The edge set $\mathcal{E}_2$ contains all \textbf{Type-2} edges, which specify the order of two agents visiting the same location. Exactly one Type-2 edge must be introduced for each pair of $v_q^j$ and $v_p^i, i\neq j$, with $loc(v_q^j)=loc(v_p^i)$. We can introduce the Type-2 edge $(v_{q+1}^j,v_{p}^i)$ to encode that agent $i$ can only enter $v_p^i$ after agent $j$ leaves $v_{q}^j$ and arrives at $v_{q+1}^j$, or introduce its \textbf{reversed edge} $(v_{p+1}^i,v_{q}^j)$ to specify that agent $j$ can only enter $v_{q}^j$ after agent $i$ leaves $v_p^i$ and arrives at $v_{p+1}^i$.

% For each edge $e$, we can associate it with a cost $c(e)$ indicating the travel time along this edge.
\end{definition}

For example, in \Cref{fig: TPG}, the red and blue arrows are Type-1 edges, and the black arrow is a Type-2 edge.

When agent $i$ moves along an edge $e$ from $v_p^i$ to $v_{p+1}^i$ and suffers from a $t$-timestep delay, we insert $t$ vertices along $e$ to encode this delay (e.g., \Cref{fig: delayed_TPG}) so that each edge in the figures always takes $1$ timestep for an agent to move.\footnote{In the implementation, we actually define edge costs to encode delays compactly. However, for easy understanding, we explain our algorithm by inserting extra vertices in this paper.} 

A TPG is executed in the following way. At each timestep, an agent $i$ can move from $v^i_p$ to $v^i_{p+1}$ if for every edge $e=(v^j_q, v^i_{p+1}$), agent $j$ has visited $v^j_q$. The execution of a TPG is completed when all the agents arrive at their goals.

\begin{theorem}
    \label{theorem: acyclic_TPG}
    The execution of a TPG can be completed without collisions in finite time if and only if it is acyclic.~\cite{berndt2023receding}.
\end{theorem}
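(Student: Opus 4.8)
The plan is to prove both directions of the biconditional in \Cref{theorem: acyclic_TPG} by connecting the combinatorial structure of the TPG to the dynamics of its execution.

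\textbf{Forward direction (completion implies acyclic), proved by contrapositive.} Suppose $\mathcal{G}$ contains a directed cycle $v_{p_1}^{i_1} \to v_{p_2}^{i_2} \to \cdots \to v_{p_k}^{i_k} \to v_{p_1}^{i_1}$. I would argue that no agent on this cycle can ever make the move that would let it enter its cycle-vertex, because the execution rule requires, for each edge $(v_q^j, v_{p+1}^i)$ into a vertex, that agent $j$ has already visited $v_q^j$. Concretely, define for each edge on the cycle the ``prerequisite vertex'' that must be visited before the head vertex can be entered (for a Type-1 edge $(v_p^i, v_{p+1}^i)$ the prerequisite is $v_p^i$ itself; for a Type-2 edge $(v_q^j, v_{p}^i)$ it is $v_q^j$). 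Walking around the cycle shows that entering any one of these vertices requires that a strictly ``earlier'' vertex on the cycle already be visited, and chasing this around the loop yields that some vertex must be visited before itself — a contradiction. Hence at least one agent is stuck forever and execution never completes, so completing execution forces acyclicity. One subtlety to handle carefully is the off-by-one indexing in the Type-2 edge definition (the edge emanates from $v_{q+1}^j$ but encodes a constraint about leaving $v_q^j$); I would set up the prerequisite map so this bookkeeping is clean, and note that delay-vertices inserted along edges do not create or destroy cycles.

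\textbf{Reverse direction (acyclic implies completion without collisions in finite time).} Assuming $\mathcal{G}$ is acyclic, I would exhibit an explicit valid execution and bound its length. Since a finite DAG has a topological order, I can process vertices greedily: at each timestep, every agent that is able to advance (i.e., whose next vertex has all its in-edge prerequisites satisfied) does so. The key claim is that as long as not all agents are at their goals, at least one agent can advance — otherwise, pick any agent not yet done; it is blocked by some incoming Type-2 edge from a vertex not yet visited, whose owning agent is in turn blocked, and following this chain of ``blocked-by'' relations among finitely many agents must revisit an agent, producing a directed cycle in $\mathcal{G}$, contradicting acyclicity. Therefore progress is made every timestep, and since $\sum_i z^i$ (plus the finitely many delay-vertices) is finite, execution completes in finitely many steps. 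Finally I would verify this execution is collision-free: vertex conflicts and following conflicts are exactly what the Type-1 and Type-2 edges were defined to rule out, so any execution respecting all edge constraints — in particular the greedy one — avoids both conflict types by construction.

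\textbf{Main obstacle.} The routine parts are the topological-sort argument and the conflict-freeness check; the delicate part is making the ``blocked-by'' chain argument fully rigorous in both directions simultaneously — precisely, showing that a stuck configuration in the execution corresponds to a genuine directed cycle in $\mathcal{G}$ (not merely a cycle in some auxiliary ``waiting-for'' graph), and vice versa. This requires pinning down exactly which vertex an agent is ``waiting at'' versus ``waiting to enter,'' and handling the inserted delay-vertices and the $q$-versus-$q{+}1$ indexing of Type-2 edges so that the correspondence between execution deadlocks and graph cycles is exact rather than approximate. Once that correspondence is stated precisely as a lemma, both directions of the theorem follow quickly. (Since the excerpt attributes this result to \citet{berndt2023receding}, I would also just cite their proof, but the above is how I would reconstruct it.)
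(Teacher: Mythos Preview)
The paper does not actually prove \Cref{theorem: acyclic_TPG}; it merely states the result and attributes it to \citet{berndt2023receding} via citation. You correctly anticipated this at the end of your proposal. Your reconstruction is sound: the contrapositive for the forward direction and the ``blocked-by chain must close into a TPG cycle'' argument for the reverse direction are the standard way to establish this, and you have correctly flagged the one genuinely delicate point (translating a deadlock in the execution's waiting-for relation into an honest directed cycle in $\mathcal{G}$, with the $q$-versus-$q{+}1$ bookkeeping handled). Since the paper offers no proof of its own, there is nothing further to compare against; citing \citet{berndt2023receding} is exactly what the paper does.
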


We define \textbf{the execution cost of an acyclic TPG} as the minimum sum of travel time of all agents to complete the execution of the TPG. 
It can be obtained by simulation, 
%adhering to the precedence requirements imposed by all directed edges in the TPG, 
but the following theorem provides a faster way to compute it.

\begin{definition}[EAT]
    The \textbf{earliest arrival time (EAT)} at a vertex $v^i_p$ is the earliest possible timestep that agent $i$ can arrive at $v^i_p$ in an execution of TPG.
\end{definition}

\begin{theorem}
\label{theorem: TPG cost}
    The EAT at a vertex $v_p^i$ is equal to the \textbf{forward longest path length (FLPL)}, $L(v_p^i)=\max\{L(s^k,v_p^i), k\in \mathcal{I}, \text{ if }s^k\text{ is connected to }v_p^i\}$, where $L(s^k,v_p^i)$ is the longest path length from agent $k$'s start, $s^k$, to $v_p^i$. The execution cost of an acyclic TPG is equal to the sum of all agents' EATs at their goals, $\sum_{i \in \mathcal{I}}{L(g^i)}$.  \citep{FengICAPS24}. 
\end{theorem}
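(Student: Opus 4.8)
The plan is to connect the synchronous as-soon-as-possible (ASAP) execution of the TPG to a standard longest-path dynamic program on the DAG $\mathcal{G}$. Throughout I would treat each edge $e=(u,v)\in\mathcal{E}_1\cup\mathcal{E}_2$ as carrying a nonnegative cost $c(e)$ (for a Type-1 edge this is $1$ plus any inserted delay; for a Type-2 edge it is the constant dictated by the ``enters/leaves a location'' semantics under the no-following-conflict rule), so that $L(u,v)$ is the maximum over all $u$-to-$v$ paths of the sum of edge costs, $L(v_p^i)=\max\{L(s^k,v_p^i)\}$ over starts $s^k$ with a path to $v_p^i$, and this last set is nonempty because $s^i=v_0^i$ always reaches $v_p^i$ along Type-1 edges. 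The two statements then reduce to: (i) $\mathrm{EAT}(v_p^i)=L(v_p^i)$, and (ii) the minimum over executions of $\sum_i \mathrm{arr}(g^i)$ equals $\sum_i L(g^i)$.

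For (i) I would first prove the lower bound that in \emph{every} valid execution the arrival time of agent $i$ at $v_p^i$ is at least $L(v_p^i)$. Since $\mathcal{G}$ is acyclic by \Cref{theorem: acyclic_TPG}, fix a topological order and induct along it: a start vertex $v_0^k$ is reached at time $0=L(v_0^k)$; for a non-start vertex $v$, the execution rule forces, for each incoming edge $(u,v)$, that $u$ was visited early enough, which unpacks to $\mathrm{arr}(v)\ge \mathrm{arr}(u)+c(u,v)$, so by the induction hypothesis $\mathrm{arr}(v)\ge \max_{(u,v)}(L(u)+c(u,v))=L(v)$, the last equality being exactly the DAG longest-path recurrence. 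Second, I would exhibit one execution achieving equality: the synchronous ASAP policy in which every agent advances at the first timestep its incoming constraints permit. Acyclicity guarantees this policy deadlocks nowhere and terminates in finite time (\Cref{theorem: acyclic_TPG}, or directly: a topologically minimal unvisited vertex always has all predecessors visited, so the corresponding agent can move), so all EATs are finite; and because ASAP advances each agent the instant all incoming edges of its next vertex are satisfied, $\mathrm{arr}(v)=\max_{(u,v)}(\mathrm{arr}(u)+c(u,v))$, which with the induction hypothesis equals $L(v)$. Together the two bounds give $\mathrm{EAT}(v_p^i)=L(v_p^i)$.

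For (ii), an agent's travel time in an execution equals its arrival time at its goal $g^i=v_{z^i}^i$, so summing the lower bound over agents shows every execution has total cost $\ge \sum_{i\in\mathcal{I}}L(g^i)$, while the single ASAP execution attains $\sum_{i\in\mathcal{I}}L(g^i)$ because it is simultaneously earliest at every vertex; hence the minimum---the execution cost---equals $\sum_{i\in\mathcal{I}}L(g^i)$.

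The main obstacle I anticipate is not the induction itself but the bookkeeping at its inductive step: pinning down the exact constant $c(u,v)$ for Type-2 edges from the precise timestep semantics of entering/leaving a merged-vertex location under the no-following-conflict rule, and, relatedly, arguing cleanly that the ASAP policy is well-defined in the synchronous model and is optimal at \emph{all} vertices at once (i.e., that letting one agent move earlier never delays any arrival), rather than merely minimizing arrival time at a single chosen vertex.
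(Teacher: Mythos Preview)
The paper does not actually prove this theorem; it simply states the result and attributes it to \citet{FengICAPS24}, then remarks that the longest-path lengths can be computed by topological sort plus dynamic programming. So there is no in-paper argument to compare against.

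Your proposal is correct and is the standard way to establish this kind of result: a lower bound by induction along a topological order showing every valid execution satisfies $\mathrm{arr}(v)\ge L(v)$, and a matching upper bound via the ASAP execution, which satisfies the same recurrence $\mathrm{arr}(v)=\max_{(u,v)}(\mathrm{arr}(u)+c(u,v))$ and hence achieves $L(v)$ at every vertex simultaneously. The two concerns you flag are easily discharged in this paper's setting. First, the edge costs: in the exposition delays are encoded by inserting extra vertices, so every edge---Type-1 and Type-2 alike---has unit cost; the paper's own worked example $L(G^2)=\max\{L(H^1)+1,L(F^2)+1\}$ confirms the $+1$ for Type-2 edges, and the execution rule (``agent $i$ can move to $v^i_{p+1}$ once every in-neighbor has been visited'') yields exactly $\mathrm{arr}(v^i_{p+1})\ge \mathrm{arr}(u)+1$ for each in-neighbor $u$. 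Second, simultaneous optimality of ASAP: because all constraints are one-sided precedences, moving any agent earlier can only relax (never tighten) downstream constraints, so the greedy ASAP schedule is pointwise minimal at every vertex, not just at a single goal; this monotonicity is what makes the per-vertex lower bounds summable into the execution-cost equality $\sum_i L(g^i)$.
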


% \begin{corollary}
% The execution cost of an acyclic TPG is equal to the sum of all agents' EATs at their goals, $\sum_{i \in I}{L(g^i)}$.
% \end{corollary}

% Finding the longest path on general graphs is an NP-hard problem. However,
Since an acyclic TPG is a directed acyclic graph, we can obtain the longest path length of all vertices by first topological sort and then dynamic programming in linear-time complexity $O(|\mathcal{V}|+|\mathcal{E}_1|+|\mathcal{E}_2|)$ \citep{berndt2023receding}. We provide the pseudocode in Appendix A.1.
\footnote{The appendix is available in the extended version of the paper at: https://arxiv.org/abs/2412.15908.}
For example, in \Cref{fig: delayed_TPG}, we annotate the earliest arrival time at each vertex near the circles. The EAT of $G^2$ can be computed by $\max\{L(H^1)+1,L(F^2)+1\}=\max\{5,2\}=5$. Because the EAT of two goals $H^1$ and $D^2$ are $4$ and $7$, the execution cost of this TPG is $4+7=11$.

\begin{definition} [STPG]
\label{definition: STPG}
Given a TPG $\mathcal{G}=(\mathcal{V},\mathcal{E}_1,\mathcal{E}_2)$, a Switchable TPG (STPG) $\mathcal{G}^S=(\mathcal{V},\mathcal{E}_1,(\mathcal{S},\mathcal{N}))$ partitions Type-2 edges $\mathcal{E}_2$ into two disjoint edge sets, $\mathcal{S}$ for switchable edges and $\mathcal{N}$ for non-switchable edges. Any switchable edge $e=(v_{q+1}^j, v_p^i) \in \mathcal{S}$ can be \textbf{settled} to a non-switchable edge by one of the following two operations:
\begin{enumerate}
    \item \textbf{Fix} e: removes $e$ from $\mathcal{S}$ and adds it to $\mathcal{N}$.
    \item \textbf{Reverse} e: removes $e$  from $\mathcal{S}$ and adds its \textbf{reversed edge} $Re(e)=(v_{p+1}^i,v_q^j)$ to $\mathcal{N}$. %We assume there is also cost $c(Re(e))$ associated with the reversed edge $Re(e)$.
\end{enumerate}
For convenience, we also define the \textbf{direction of an edge} $e=(v_{q+1}^j, v_p^i)$ as from $j$ to $i$ and its reverse direction as from $i$ to $j$, where $i,j$ are the indices of agents.
\end{definition}

\begin{figure*}[t]
\centering
\includegraphics[width=1.0\textwidth]{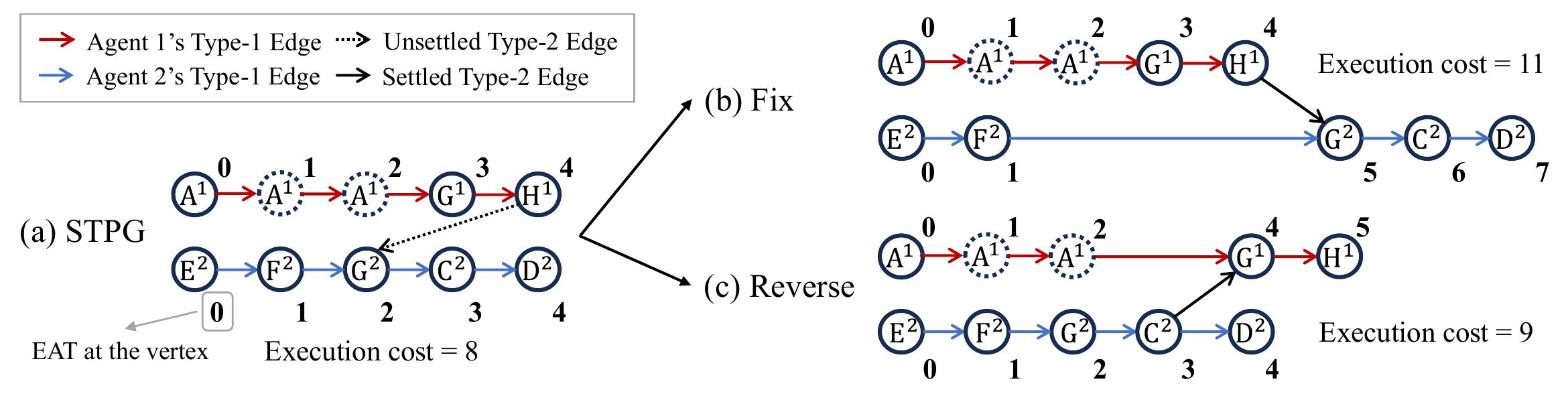}

\caption{(a) is the STPG built from the TPG in \Cref{fig: delayed_TPG}. The only switchable edge $(H^1,G^2)$ is dashed. It can be fixed to edge $(H^1,G^2)$ as in (b) or reversed to edge $(C^2,G^1)$ as in (c). The different choices result in different execution costs. Notably, when computing the execution cost of STPG (defined in \Cref{subsection: baseline}), the dashed switchable edge will be ignored.}
\label{fig: switch}
\end{figure*}

Clearly, STPG is a superclass of TPG. If all switchable edges are settled, it \textbf{produces} a TPG. The goal of \textbf{STPG Optimization} is to find an acyclic TPG with the minimum execution cost among all those produced by the STPG.

\subsection{The Baseline Algorithm: GSES}
\label{subsection: baseline}

\begin{algorithm}[htb]
\caption{\textcolor{red}{Improved} GSES}
\label{algo: GSES}
\begin{algorithmic}[1]
\Function{\textcolor{red}{Improved}GSES}{$\mathcal{G}_{init}^S$}
    \State \textcolor{red}{$Groups \gets \Call{EdgeGrouping}{\mathcal{G}_{init}^S}$}
        \label{line: call edge grouping}
    \State $\text{Node}_{root} \gets \Call{BuildNode}{\mathcal{G}_{init}^S}$
    \State Push $\text{Node}_{root}$ into OpenList
    \While{OpenList is not empty and still time left}
        \State Pop $\text{Node}=(\mathcal{G}^S,\text{$h$-value},L)$ from OpenList
        \State $e \gets \Call{SelectConflictingEdge}{\mathcal{G}^S,L}$
            \label{line: select conflicting edge}
        \If{$e= \text{NULL}$}
            \label{line: solution found start}
            \Return $\Call{FixAll}{\mathcal{G}^S}$
                \label{line: solution found end}
        \EndIf
            \State $Edges \gets \{e\}$
            \State \textcolor{red}{$Edges \gets \Call{getEdgeGroup}{e,Groups}$}
                \label{line: call get edge group}
            \State $\mathcal{G}_{\text{child-1}}^S \gets \Call{Fix}{\mathcal{G}^S, Edges}$
                \label{line: fix edge}
            \State $\mathcal{G}_{\text{child-2}}^S \gets \Call{Reverse}{\mathcal{G}^S, Edges}$
                \label{line: reverse edge}
            \For {$\mathcal{G}_{\text{child}}^S\in \{\mathcal{G}_{\text{child-1}}^S,\mathcal{G}_{\text{child-2}}^S\}$}
                \If {no cycle in $Redu(\mathcal{G}_{\text{child}}^S)$}
                    \label{line: check cycle}
                    \State $\text{Node}_{\text{child}} \gets \Call{BuildNode}{\mathcal{G}_{\text{child}}^S}$
                    \State Push $\text{Node}_{\text{child}}$ into OpenList
                \EndIf
            \EndFor   
    \EndWhile
    \State \Return NULL // Timeout
\EndFunction
\State
\Function{BuildNode}{$\mathcal{G}^S$} 
    \label{func: build_node}
    \State \textcolor{red}{Incrementally} compute for all vertices the forward \textcolor{red}{and backward} longest path lengths on {$Redu(\mathcal{G}^S)$} as a function $L$ 
    \textcolor{red}{based on its parent node}
    \label{line: longest path lengths}
    \State \textcolor{red}{Estimate the future cost increase $\Delta cost$ based on the backward longest path lengths}
    \State $\text{$h$-value} \gets \sum_i{L(g^i)}\textcolor{red}{+\Delta cost}$
    \label{line: h_value}
    \State \Return $(\mathcal{G}^S,\text{$h$-value},L)$
        \label{line: node}
\EndFunction
\State
\Function{SelectConflictingEdge}{$\mathcal{G}^S,L$}
    \State \textcolor{red}{Order all switchable edges in $\mathcal{S}$ by certain priority}
        \label{line: prioritizing branching}
    \ForAll {switchable edge $e\in \mathcal{S}$}
        \If {$Sl(e)<0$}
            \label{line: conflict edge}\Return e
        \EndIf
    \EndFor
    \State \Return NULL
        \label{line: conflicting edge not found}
\EndFunction
\end{algorithmic}
\end{algorithm}

We now introduce our baseline algorithm, Graph-based Switchable Edge Search (GSES)~\citep{FengICAPS24}. We start with some useful definitions.

\begin{definition} [Reduced TPG]
The reduced TPG of an STPG $\mathcal{G}^S=(\mathcal{V},\mathcal{E}_1,(\mathcal{S},\mathcal{N}))$ is the TPG that omits all switchable edges, denoted as $Redu(\mathcal{G}^S)=(\mathcal{V},\mathcal{E}_1,\mathcal{N})$.
\end{definition}

Then, we define the \textbf{execution cost of an STPG} to be the execution cost of its reduced TPG. Clearly, it provides a lower bound for the execution cost of any acyclic TPG that can be produced from this STPG because the execution cost will only increase with more switchable edges settled.

\begin{definition}[Edge Slack]
    \label{definition: edge slack}
    The \textbf{slack of a switchable edge} $e=(v_{q+1}^j,v_p^{i})$ in an STPG is defined as $Sl(e)=L(v_p^{i})-L(v_{q+1}^j)-1$, where $L(v)$ is the EAT at vertex $v$ in the reduced TPG. 
\end{definition}

We say a switchable edge \textbf{conflicts} with the STPG if $Sl(e)<0$. For example, in the \Cref{fig: switch}a, the switchable edge $(H^1,G^2)$ conflicts with the STPG because its slack $Sl(H^1,G^2)=L(G^2)-L(H^1)-1=2-4-1=-3<0$.
For a switchable edge $e$, $Sl(e)\geq 0$ means that the earliest execution of the reduced TPG already satisfied the constraints imposed by $e$, i.e., we can fix $e$ without introducing any cycle or cost increase to the reduced TPG. Therefore, if all remaining switchable edges do not conflict, we can fix them all and obtain an acyclic TPG with the same execution cost as the current reduced TPG \cite{FengICAPS24}. The proof is given in Appendix A.2. This result can help the GSES algorithm, introduced next, stop the search earlier.

\Cref{algo: GSES} shows the pseudocode of GSES,\footnote{The red text belongs to IGSES and can be ignored for now.} a best-first search algorithm with each search node comprising three parts: an STPG, its execution cost, and the EATs of all vertices in its reduced TPG (\Cref{line: node}). The execution cost is used as the $h$-value in the best-first search (The $g$-value is always $0$), and the EATs are used to find conflicting edges.

GSES starts with a root node containing the initial STPG, which is constructed from the TPG of the initial MAPF plan. Almost all Type-2 edges in the TPG can be turned into switchable edges except those pointing to goal vertices and those whose reversed edges point to start vertices \citep{berndt2023receding}. For example, the STPG built from the TPG in \Cref{fig: delayed_TPG} is illustrated in \Cref{fig: switch}a.

When GSES expands a search node, it tries to find a conflicting edge (\Cref{line: select conflicting edge}). If there is none, GSES terminates the search and returns a solution by fixing all the remaining switchable edges (\Cref{line: solution found end}). 
Otherwise, GSES branches the search tree based on the 
%first-detected 
conflicting edge. Two child STPGs are generated: one fixes the edge (\Cref{line: fix edge}), and another one reverses it (\Cref{line: reverse edge}). Before generating the child nodes and pushing them into the priority queue, GSES detects cycles in their reduced TPGs and discards cyclic ones (\Cref{line: check cycle}). The longest path lengths (\Cref{line: longest path lengths}) and the $h$-value (\Cref{line: h_value}) are computed for acyclic ones to build child nodes.

\section{Speedup Techniques}
\label{section: techniques}
This section describes our methods to speed up the search in \Cref{algo: GSES}, including constructing stronger admissible heuristics by estimating the future cost (\Cref{subsection: heuristics}), finding switchable edges that could be grouped and branched together (\Cref{subsection: grouping}), different ways to prioritize switchable edges for branching (\Cref{subsection: branching}), and incremental implementation of computing longest path lengths (\Cref{subsection: incremental}).

\subsection{Stronger Admissible Heuristics}
\label{subsection: heuristics}
In the \Cref{line: h_value} of \Cref{algo: GSES}, we use the execution cost of the current STPG
%, a lower bound of the execution costs of future (S)TPGs, 
as the admissible $h$-value in the GSES. However, this cost is an estimation based on the reduced TPG with currently settled edges.
%but does not consider the remaining switchable edges that have not been settled.

Intuitively, we can obtain extra information from the unsettled switchable edges. For example, in \Cref{fig: switch}a, the execution cost of the STPG is $8$. The switchable edge $(H^1,G^2)$ is not settled and, thus, ignored in the computation. 
But we know eventually, this edge will be fixed or reversed. If we fix it (\Cref{fig: switch}b), then the EAT of $G^2$ will increase from $2$ to $5$ because now it must be visited after $H^1$, the EAT of which is $4$. In this way, we can infer that the EAT of vertex $D^2$ will be postponed to 7, leading to an increase of 3 in the overall execution cost. If we reverse the edge, we will get a similar estimation of the future cost increase, which is $1$ in \Cref{fig: switch}c. Then we know the overall cost will increase by at least $\min\{3,1\}=1$ in the future for the STPG in \Cref{fig: switch}a. 

Before the formal reasoning, we introduce another useful concept, vertex slack, for easier understanding later.

\begin{definition} [Vertex Slack]
    The \textbf{slack of a vertex} $v_p^i$ to an agent $j$'s goal location $g^j$ in an STPG is defined as $Sl(v_p^i,g^j)=L(g^j)-L(v_p^i)-L(v_p^i,g^j)$, if $v_p^i$ is connected to $g^j$ in the reduced TPG. $L(v)$ is the EAT at vertex $v$, which is also the forward longest path length. $L(v_p^i,g^j)$ means the longest path length from $v_p^i$ to $g^j$ and we call it the \textbf{backward longest path length (BLPL)}.
\end{definition}

This slack term measures the maximum amount that we can increase the EAT at $v_p^i$ (i.e., $L(v_p^i)$) without increasing the agent $j$'s EAT at its goal (i.e., $L(g^j)$). In other words, $L(v_p^i)+Sl(v_p^i,g^j)$ is the latest time that agent $i$ can reach $v_p^i$ without increasing agent $j$'s execution time. In \Cref{fig: switch}a, the EAT of $G^2$ is $L(G^2)=2$, the longest path length from $G^2$ to $D^2$ is $L(G^2,D^2)=2$ and the EAT of $D^2$ is $L(D^2)=4$. Then the slack of $G^2$ is $Sl(G^2,D^2)=4-2-2=0$. It means that there is no slack, and any increase in the EAT of $G^2$ will be reflected in the EAT of $D^2$. Thus, in \Cref{fig: switch}b, where $L(G^2)$ increases by $3$ , $L(D^2)$ also increases by $3$. 

It is worth mentioning that $L(v_p^i,g^j)$ cannot be obtained during the computation of $L(v_p^i)$ and should be computed again by topological sort and dynamic programming in the backward direction (\Cref{line: longest path lengths} of \Cref{algo: GSES}). Unfortunately, in this computation, we need to maintain the backward longest path lengths from $v_p^i$ to each goal location $g^j$ separately because the increase in a vertex $v_p^i$'s EAT may influence other agents' EATs at their goals, regarding the graph structure. Thus, it takes $O(|I|(|\mathcal{V}|+|\mathcal{N}|))$ time for each STPG to compute $L(v_p^i,g^j)$, while $L(v_p^i)$ only takes $O(|\mathcal{V}|+|\mathcal{N}|)$. The pseudocode is provided in Appendix A.1.

We use the subscript $d$ to represent a descendant search tree node. For example, $L_{d}(v_1,v_2)$ is the longest path length from $v_1$ to $v_2$ in a descendant search tree node $d$. A simple fact is that $L_{d}(v_1,v_2)\geq L(v_1,v_2)$, namely, the longest path length is non-decreasing from an ancestor to a descendant.

Now, we start our reasoning by assuming that we fix a switchable edge $e=(v_{q+1}^j,v_p^{i})$ and add it to the reduced TPG of a descendant search tree node, then we can deduce the increase in the EAT of $v_p^i$. Specifically, $L_{d}(v_p^{i})\geq L_{d}(v_{q+1}^j)+1 \geq L(v_{q+1}^j)+1$. The first inequality holds because of the precedence implied by the new edge. The second inequality exploits the monotonicity of the longest path length.
Thus, the increase of $L(v_p^{i})$ is $\Delta L(v_p^{i})\triangleq L_{d}(v_p^{i})-L(v_p^{i}) \geq  L(v_{q+1}^j)+1-L(v_p^{i})=-Sl(e)$. Namely, the EAT at $v_p^{i}$ will be increased by at least the negative edge slack.

Next, we consider the increase in the EAT of agent $m$'s goal $g^m$, if vertex $v_p^{i}$ is connected to $g^m$. $L_{d}(g^m)\geq L_{d}(v_p^{i})+L_{d}(v_p^{i},g^m) \geq  (L(v_p^{i})+\Delta L(v_p^{i}))+L(v_p^{i},g^m) = \Delta L(v_p^{i})+(L(v_p^{i})+L(v_p^{i},g^m)) = \Delta L(v_p^{i})+(L(g^m)-Sl(v_p^{i},g^m))$. The first inequality is based on the fact that the longest path must be no shorter than any path passing a specific vertex. The second inequality is based on the definition of $\Delta L(v_p^{i})$ and the monotonicity of the longest path length. The last equality is obtained by plugging in the definition of vertex slack. So, we get $L_{d}(g^m)\geq \Delta L(v_p^{i})+(L(g^m)-Sl(v_p^{i},g^m))$. Then, we can move $L(g^m)$ to the left side and obtain the increase $\Delta L(g^m)\triangleq L_{d}(g^m)-L(g^m) \geq \Delta L(v_p^{i})-Sl(v_p^{i},g^m)$. 
Namely, if the increase in EAT at $v_p^{i}$ is larger than the vertex slack at $v_p^{i}$, the EAT at $g^m$ will increase by at least their difference.
Since we obtain $\Delta L(v_p^{i})\geq -Sl(e)$ earlier, we have $\Delta L(g^m) \geq -Sl(e)-Sl(v_p^i,g^m)$.\footnote{Due to the monotonicity, we also have $\Delta L(g^m)=L_{d}(g^m)-L(g^m) \geq 0$. But we will omit this trivial condition for simplicity.}

Similarly, if we reverse the edge $e=(v_{q+1}^j,v_p^{i})$ and add $Re(e)=(v_{p+1}^i,v_q^j)$ to the STPG of a descendant search tree node, we can deduce a similar result for the EAT of agent $n$'s goal $g^n$, $\Delta L(g^n) \geq -Sl(Re(e))-Sl(v_q^{j},g^n)$, if $v_q^j$ is connected to agent $n$'s goal vertex $g^n$.

Since we must either fix or reverse to settle a switchable edge $e$ eventually, we can get a conservative estimation of the future joint increase in $L(g^m)+L(g^n)$, $\Delta L(g^m)+\Delta L(g^n) \geq \Delta cost(g^m,g^n,e) \triangleq  \min\{-Sl(e)-Sl(v_p^{i},g^m),-Sl(Re(e))-Sl(v_q^{j},g^n)\}$. Further, if there are multiple such edges, we can take the maximum of all $\Delta cost(g^m,g^n,e)$. Namely, $\Delta L(g^m)+\Delta L(g^n) \geq \Delta cost(g^m,g^n) \triangleq \max_e\{cost(g^m,g^n,e)\}$.

In summary, we can now get a conservative estimation of the future increase in pairwise joint costs. Then, we can apply Weighted Pairwise Dependency Graph~\citep{li2019improved} to obtain a lower-bound estimation of the overall future cost increase for all agents. Specifically, we build a weighted fully-connected undirected graph $G_{D}=(V_{D},E_{D},W_{D})$, where each vertex $u_i\in V_{D}$ represents an agent, $E_D$ are edges and $W_D$ are edges' weights. The weight of an edge $(u_i,u_j)\in E_{D}$ is defined as the pairwise cost increase, $\Delta cost(g^{i},g^{j})$. Our target is to assign a cost increase $x_i$ to each vertex $u_i$ such that $x_i+x_j\geq \Delta cost(g^{i},g^{j})$ and the overall cost increase $\sum_i x_i$ is minimized. This formulation turns the original problem into an Edge-Weighted Minimum Vertex Cover (EWMVC) problem, an extension of the NP-hard Minimum Vertex Cover problem. Therefore, we choose a fast greedy matching algorithm implemented in the paper ~\citep{li2019improved} with worst-case $O(|V_{D}|^3)$ complexity to get an underestimation of the minimum overall cost increase, denoted as $\Delta cost$. The details of the matching algorithm are given in Appendix A.3. Since the future increase in the overall cost must be no less than the non-negative term, $\Delta cost$, combined with the original $h$-value, $\sum_{i}L(g^i)$, we get a stronger admissible heuristic value, $\sum_{i}L(g^i)+\Delta cost$.

We summarize the computation of stronger admissible heuristics in \Cref{algo: CH}. \Cref{line: CH_compute_forward}-\Cref{line: CH_compute_backward} compute the forward and backward longest path lengths. \Cref{line: CH_compute_delta_cost_start}-\Cref{line: CH_compute_delta_cost_end} estimate the cost increase for each pair of agents. \Cref{line: CH_build_pdg}-\Cref{line: CH_return_heuristics} estimate the overall cost increase.

\begin{algorithm}[tb]
\caption{Compute Stronger Admissible Heuristics}
\label{algo: CH}
\begin{algorithmic}[1]
\Function{ComputeHeuristics}{$\mathcal{G}^S$}
    \State Compute the forward longest path lengths $L(v)$ for all vertices 
        \label{line: CH_compute_forward}
    \State Compute the backward longest path lengths $L(v,g)$ between all vertices and goals ($L(v,g)=\infty$ if $v$ and $g$ are not connected)
        \label{line: CH_compute_backward}
    \State $\Delta cost(g^m,g^n) \gets 0, \forall m,n$
        \label{line: CH_compute_delta_cost_start}
    \ForAll{switchable edge $e=(v_{q+1}^j,v_p^{i})$}
        \ForAll{agent pairs $(m,n)$}
            \If{$L(v_p^{i},g^m)\neq \infty \land L(v_{q}^j,g^n)\neq \infty$}
                \State $\Delta c \gets \min\{-Sl(e)-Sl(v_p^{i},g^m),$
                \State \qquad \qquad $-Sl(Re(e))-Sl(v_q^{j},g^n)\}$
                \If{$\Delta c > \Delta cost(g^m,g^n)$}
                    \State$\Delta cost(g^m,g^n) \gets \Delta c$
                    \State$\Delta cost(g^n,g^m) \gets \Delta c$
                \EndIf
            \EndIf
        \EndFor
    \EndFor
        \label{line: CH_compute_delta_cost_end}
    \State Build the weighted pairwise dependency graph $G_D$ with $\Delta cost(g^m,g^n)$ as edge weights 
        \label{line: CH_build_pdg}
    \State Suboptimally solve the edge-weighted minimum vertex cover of $G_D$ by greedy matching to obtain the overall cost increase $\Delta cost$
        \label{line: CH_solve_ewmvc}
    \State \Return $\sum_i L(g^i)+\Delta cost$
        \label{line: CH_return_heuristics}
\EndFunction
\end{algorithmic}
\end{algorithm}

\subsection{Edge Grouping}
\label{subsection: grouping}

\begin{figure}[tb]
    \centering

    \begin{subfigure}[b]{0.15\textwidth}
      \centering
      \includegraphics[width=1\textwidth]{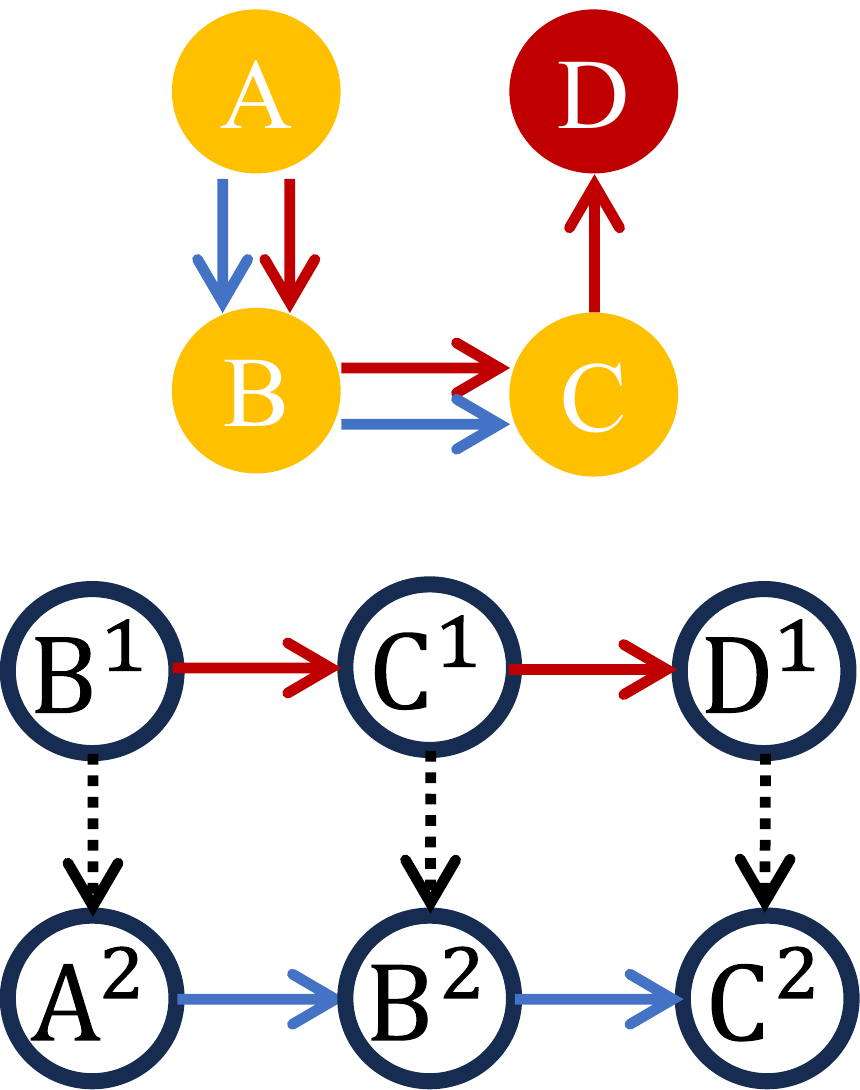}
      \caption{Parallel}
      \label{fig: parallel}
    \end{subfigure}%
    \hfill
    %\par\bigskip 
    \begin{subfigure}[b]{0.15\textwidth}
      \centering
      \includegraphics[width=1\textwidth]{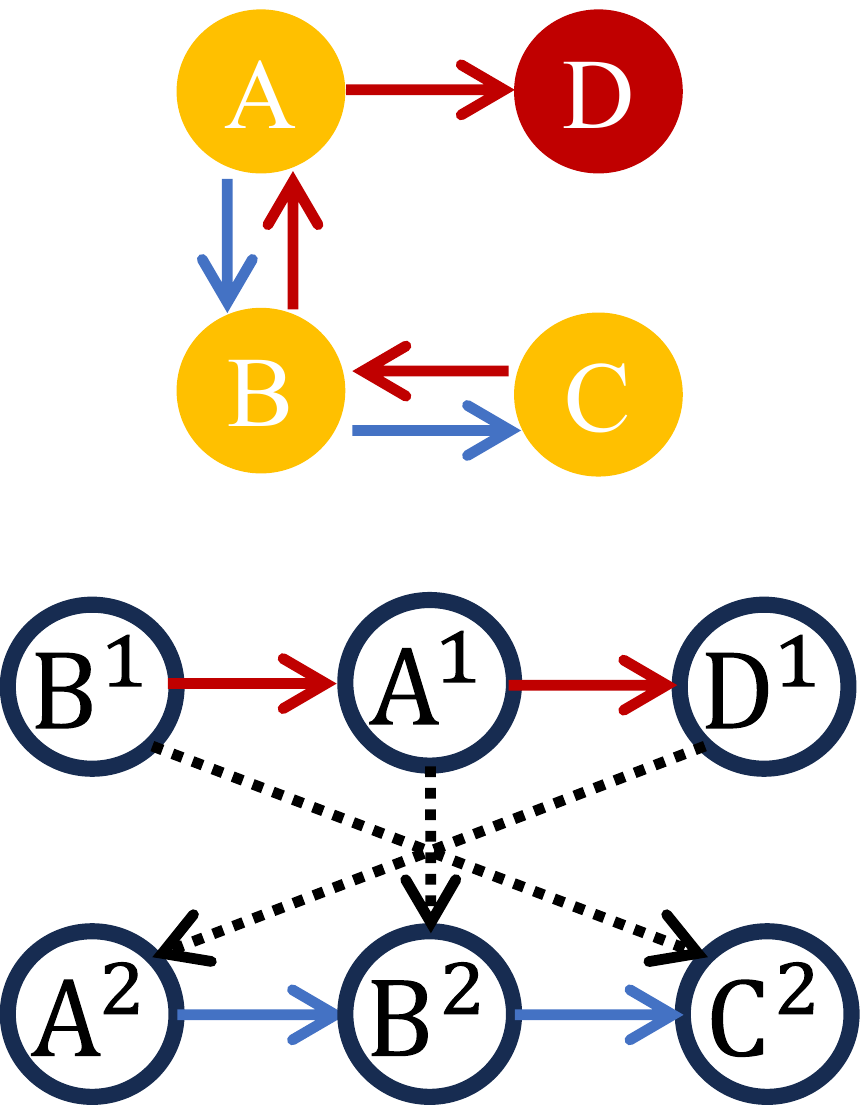}
      \caption{Crossing}
      \label{fig: crossing}
    \end{subfigure}%
    \hfill
    \begin{subfigure}[b]{0.15\textwidth}
      \centering
      \includegraphics[width=1\textwidth]{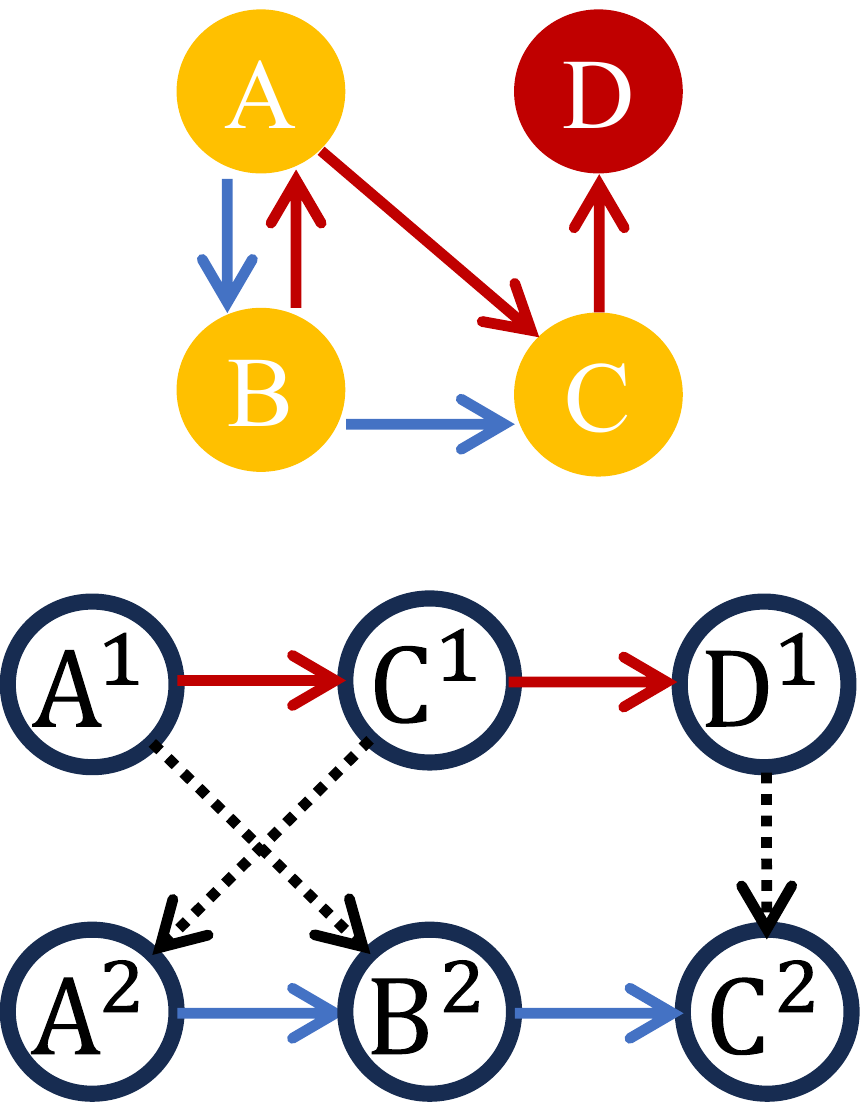}
      \caption{Mixed}
      \label{fig: mixed}
    \end{subfigure}
    
    \caption{Edge grouping examples. The upper graphs show the partial MAPF trajectories. The lower graphs show the partial STPG with switchable edges that can be grouped. In (a), Agent $2$ (blue arrows) visits the same sequence of locations, $A,B,C$, as Agent $1$ (red arrows). In a successful execution, one agent must visit each location before the other agent. Namely, the visiting orders of each location must be the same. Therefore, these edges must always have the same direction and are groupable. In (b), Agent $2$ traverses Agent $1$'s trajectory reversely, but the conclusion remains the same.
    % It is named a parallel pattern because the switchable edges are parallel.
    % In (b), Agent 2 reversely traverses a sequence of locations first visited by Agent 1 (Agent 2 traverses $A\rightarrow B\rightarrow C$ and Agent 1 traverses $C\rightarrow B\rightarrow A$).
    % It is named a crossing pattern because the switchable edges are crossing. 
    (c) can be regarded as a mixture of (a) and (b).}
    \label{fig: grouping}
\end{figure}

In the MILP-based method, \citet{berndt2023receding} proposed a speedup method named dependency grouping, explicitly called edge grouping in this work. Edge grouping tries to find switchable edges whose directions can be decided together. If these edges do not follow the same direction, we can easily find a cycle within this group of edges. \citet{berndt2023receding} describe two obvious grouping patterns, parallel and crossing (\Cref{fig: parallel} and \Cref{fig: crossing}), which can be easily detected by a linear scan over all switchable edges.
However, they did not discuss whether there are other grouping patterns and how to find them all. Indeed, a counterexample is given in \Cref{fig: mixed}, which should be considered a single group but will be detected as two by their simple algorithm. 

In this work, we devise an algorithm for finding all the maximal groups among switchable edges from agent $i$ to agent $j$ to reduce the search tree size. We name our method \textbf{full grouping} and the old one \textbf{simple grouping} to differentiate them. This algorithm will be called before the best-first search as a preprocessing step to the initial STPG (\Cref{line: call edge grouping} of \Cref{algo: GSES}) so that we consider an edge group rather than a single edge at each branch during the search (\Cref{line: call get edge group} of \Cref{algo: GSES}). 
%The grouping result from this preprocessing step can be used repeatedly when delays happen. 
For the simplicity of the discussion, we assume the initial STPG has only switchable Type-2 edges since we can convert the non-switchable edges to switchable ones and re-settle their directions after the grouping. First, we formalize our definitions.

\begin{definition} [Ordered-Pairwise Subgraph $\mathcal{G}_{i,j}^S$]
    The ordered-pairwise subgraph $\mathcal{G}_{i,j}^S=(\mathcal{V}_{i,j},\mathcal{E}_{1\ i,j},(\mathcal{S}_{i,j}, \mathcal{N}_{i,j}))$ for two agents $i,j$ is a subgraph of an STPG $\mathcal{G}^S$ with only vertices of these two agents, their type-1 edges and all the Type-2 edges pointing from agent $i$ to agent $j$.
\end{definition}

\begin{definition} [Groupable]
    For a subgraph $\mathcal{G}_{i,j}^S$, two switchable edges $e_m,e_n\in \mathcal{S}_{i,j}$ are groupable, if, for all the two-agent acyclic TPGs that  $\mathcal{G}_{i,j}^S$ can produce, either both $e_m,e_n$ or both $Re(e_m),Re(e_n)$ are in them.
\end{definition}

Apparently, groupable is an equivalence relation that is reflexive, symmetric and transitive. It divides set $\mathcal{S}_{i,j}$ into a set of disjoint equivalence classes, which are called \textbf{maximal edge groups} in our setting. The following definition of the maximal edge group is a rephrase of the equivalence class defined by the groupable relation.

\begin{definition} [Maximal Edge Group]
    \label{definition: edge group}
    A maximal edge group $\mathcal{EG}$ of an edge $e$ is a subset of $\mathcal{S}_{i,j}$, which contains exactly all the edges groupable with $e$.
\end{definition}

Based on the property of the equivalence relation, the maximal edge group of an edge $e$ must be unique. Thus, we apply a simple framework to find all the maximal edge groups in \Cref{algo: EGF}. We initialize an edge set with all the switchable edges (\Cref{line: init edge set}). We select one edge from the edge set and find all the edges that are groupable with it (\Cref{line: call find group}), then remove them from the edge set (\Cref{line: remove edge group}). We repeat this process for the remaining edges until all the maximal edge groups are found and then removed. %Notably, a single edge could also be a maximal edge group if no other edge is groupable with it.

\begin{figure*}
    \centering
    \includegraphics[width=1\linewidth]{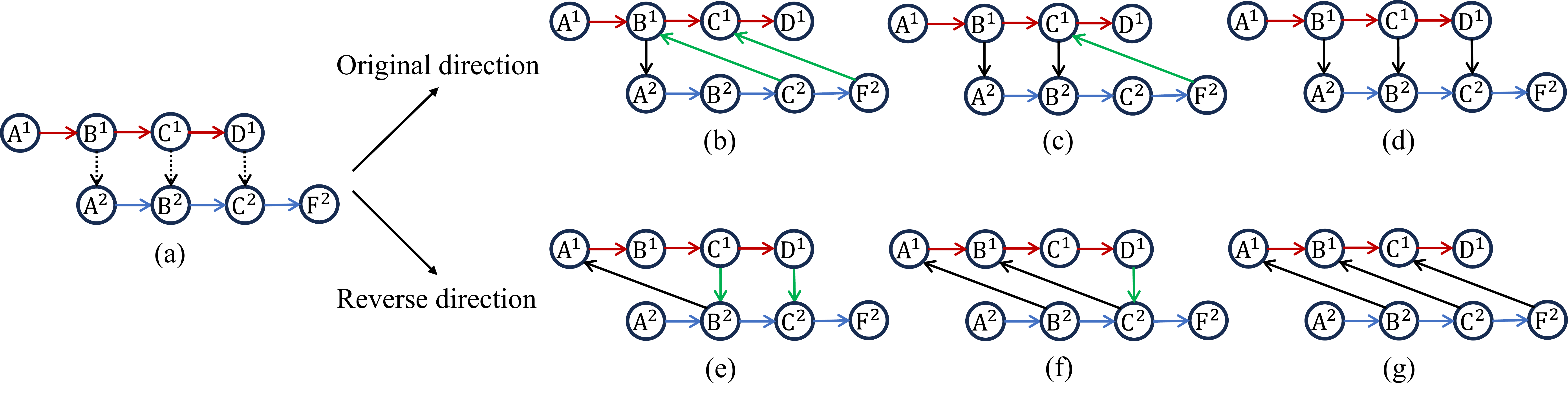}
    \caption{An example of finding all the switchable edges groupable with a certain switchable edge $e$. (a) is a complete STPG for the parallel pattern in \Cref{fig: parallel}. We want to find all the switchable edges groupable with $(B^1,A^2)$ in (a).
    We must consider two possible directions of it. (b), (c), (d) illustrate the reasoning for the case we fix it. Specifically, in (b), we fix $(B^1,A^2)$ (marked black) and temporarily reverse other switchable edges (marked green) to reason whether they must follow the same direction as $(B^1,A^2)$ based on the constraint that there should be no cycle in the graph. Since there is a cycle $C^2\rightarrow B^1\rightarrow A^2$ in (b), we know $(C^1,B^2)$ in (a) must be fixed instead of reversed. Then, we obtain (c). Similarly, there is a cycle $B^2 \rightarrow F^2 \rightarrow C^1$. So, $(D^1,C^2)$ must also be fixed. Then, we obtain (d). As a result, we find both edges $(C^1,B^2)$ and $(D^1,C^2)$ need to follow the same direction as $(B^1,A^2)$ if it is fixed. (e), (f), (g) illustrate the reasoning for the case that we reverse the switchable edge $(B^1,A^2)$. We confirm that both edges $(C^1,B^2)$ and $(D^1,C^2)$ need to follow the same direction as $(B^1,A^2)$ if it is reversed. Therefore, these three switchable edges must always select the same direction. Namely, they are groupable.
    }
    \label{fig: grouping_procedure}
\end{figure*}

\begin{algorithm}[tb]
\caption{Edge Grouping}
\label{algo: EGF}
\begin{algorithmic}[1]
\Function{EdgeGrouping}{$\mathcal{G}_{i,j}^S$}
    \State $Groups\gets \{\}, Edges\gets \mathcal{S}_{i,j}$ \label{line: init edge set}
    \While {$Edges$ is not emtpy}
        \State select any edge $e\in Edges$
        \State $\mathcal{EG}\gets \Call{FindGroupableEdges}{Edges, e}$ \label{line: call find group}
        \State $Groups \gets Groups \cup \{\mathcal{EG}\}$
        \State $Edges = Edges - \mathcal{EG}$ \label{line: remove edge group}
    \EndWhile
    \State \Return $Groups$
\EndFunction
\end{algorithmic}

\end{algorithm}
\begin{algorithm}[tb]
\caption{FindGroupableEdges}
\label{algo: FGE}
\begin{algorithmic}[1]
\Function{FindGroupableEdges}{$Edges, e$}
    \State // Reason with the reverse direction of $e$
    \State $S_1 \gets \Call{Propagate}{Edges, e}$
        \label{line: propagate original direction}
    \State // Reason with the original direction of $e$
    \State $E_r \gets \Call{Reverse}{Edges}$
        \label{line: propagate reverse direction start}
    \State $e_r \gets \Call{Reverse}{e}$
    \State $S_r \gets \Call{Propagate}{E_r, e_r}$
        \label{line: propagate reverse direction}
    \State $S_2 \gets \Call{Reverse}{S_r}$
        \label{line: propagate reverse direction end}
    \State // Obtain the edge group by intersection
    \State $\mathcal{EG} \gets S_1 \cap S_2$
        \label{line: intersection}
    \State \Return $\mathcal{EG}$
\EndFunction

\Function{Propagate}{$Edges,e$}
    \State $S \gets \{e\}, C\gets\{e\}, E\gets Edges-\{e\}$
        \label{line: initialize edge sets}
    \Repeat
        \State $C_r=\Call{Reverse}{C}$ 
            \label{line: reverse newly found edges}
        \State $C\gets \Call{FindCycleEdges}{E,C_r}$
            \label{line: call find cycle edges}
        \State $S\gets S\cup C$
        \State $E\gets E-C$
    \Until{$C$ is empty}
        \label{line: empty cycle edges}
    \State \Return S
\EndFunction

\Function{FindCycleEdges}{$E,C_r$}
    \State $C\gets \{\}$
    \ForAll{$e=(v^i_m,v^j_n) \in E$}
        \ForAll{$e_r=(v^j_q,v^i_p) \in C_r$}
            \If{$p\leq m \land n\leq q$}
                \State $C \gets C\cup \{e\}$
            \EndIf
        \EndFor
    \EndFor
    \State \Return $C$
\EndFunction

\end{algorithmic}
\end{algorithm}

Before discussing how to identify all groupable edges, we introduce a lemma that enables cycle detection in a two-agent TPG by examining only the vertex indexes of each opposite-direction edge pair.

\begin{lemma}
    \label{lemma: two-agent cycle detection}
    If a two-agent TPG is cyclic, we must be able to find a cycle that contains exactly two Type-2 edges, one edge pointing from agent $i$ to $j$ and another edge pointing reversely. For these two edges $e_1=(v_m^i,v_n^j)$ and $e_2=(v_q^j,v_p^i)$, we must have $p\leq m \land n\leq q$.
\end{lemma}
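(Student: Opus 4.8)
The plan is to take an arbitrary directed cycle in the two-agent TPG and ``shortcut'' it down to a cycle that uses exactly one Type-2 edge in each direction, relying only on the vertex-index monotonicity that Type-1 edges enforce. First I would observe that since a Type-1 edge always advances one step along a single agent's path ($v_p^i\to v_{p+1}^i$) and never crosses between agents, no directed cycle can consist of Type-1 edges alone; hence every cycle contains at least one Type-2 edge. Walking around the cycle, between two consecutive Type-2 edges the walk must stay inside one agent (Type-1 edges cannot leave it), so the Type-2 edges, read in cyclic order, strictly alternate between the ``$i\to j$'' direction and the ``$j\to i$'' direction. Thus for some $k\ge 1$ the cycle contains exactly $k$ edges of each direction; I list them cyclically as $\alpha_1,\beta_1,\alpha_2,\beta_2,\ldots,\alpha_k,\beta_k$ with $\alpha_s=(v_{m_s}^i,v_{n_s}^j)$ and $\beta_s=(v_{q_s}^j,v_{p_s}^i)$, indices taken mod $k$.

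Next I would read off the constraints forced by the intervening Type-1 segments: between $\alpha_s$ and $\beta_s$ the walk runs forward inside agent $j$ from $v_{n_s}^j$ to $v_{q_s}^j$, forcing $n_s\le q_s$; between $\beta_s$ and $\alpha_{s+1}$ it runs forward inside agent $i$ from $v_{p_s}^i$ to $v_{m_{s+1}}^i$, forcing $p_s\le m_{s+1}$.

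The key step is to show that there is an index $s$ with $n_s\le q_{s-1}$. I would argue by contradiction: if $q_{s-1}<n_s$ for every $s$, then combining with $n_s\le q_s$ and going once around the cycle yields $n_1\le q_1<n_2\le q_2<\cdots<n_k\le q_k<n_1$, i.e.\ $n_1<n_1$, which is absurd. Fixing such an $s$, the two edges $\alpha_s$ and $\beta_{s-1}$, together with the forward Type-1 path from $v_{p_{s-1}}^i$ to $v_{m_s}^i$ in agent $i$ (legal since $p_{s-1}\le m_s$) and the forward Type-1 path from $v_{n_s}^j$ to $v_{q_{s-1}}^j$ in agent $j$ (legal since $n_s\le q_{s-1}$), close into a directed cycle containing exactly two Type-2 edges. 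Taking $e_1=\alpha_s=(v_{m_s}^i,v_{n_s}^j)$ and $e_2=\beta_{s-1}=(v_{q_{s-1}}^j,v_{p_{s-1}}^i)$, the required inequalities $p\le m$ and $n\le q$ are precisely $p_{s-1}\le m_s$ and $n_s\le q_{s-1}$, so the lemma follows.

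I expect the main obstacle to be exactly this key step --- spotting the telescoping contradiction that pins down which $i\to j$ and $j\to i$ edges can be wired together directly; the alternation argument and the extraction of the index constraints are routine once the cyclic labeling is fixed. Two minor details will each need a line: the case $k=1$, where the cycle already has exactly two Type-2 edges, is subsumed by the same argument since then $q_{s-1}=q_1$ and $n_1\le q_1$ is one of the stated constraints; and either of the two forward Type-1 paths may be empty when its endpoints coincide, which does not affect the conclusion.
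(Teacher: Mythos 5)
Your proof is correct, but it takes a genuinely different route from the paper's. The paper uses a short extremal argument: it picks the earliest vertex of agent $i$ on the given cycle and the earliest vertex of agent $j$ on it, observes that the cycle edge entering each such vertex cannot be a Type-1 edge (its tail would be an even earlier vertex of the same agent on the cycle) and hence is a Type-2 edge from the other agent, and then reads off $p\le m$ and $n\le q$ directly from minimality; the two-Type-2-edge cycle is assembled exactly as in your final step. You instead classify the full alternating structure of the Type-2 edges around the cycle, extract the index constraints from the intervening Type-1 segments, and locate the right pair $(\alpha_s,\beta_{s-1})$ via a telescoping contradiction. Both arguments are sound; all the steps you flag as needing care (strict alternation of edge directions, the $k=1$ case, possibly empty Type-1 segments) do go through. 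The paper's version is shorter and avoids the bookkeeping of the cyclic labeling, while yours makes the combinatorial structure of an arbitrary two-agent cycle explicit, which is arguably more informative (e.g., it shows that \emph{any} consecutive pair $\beta_{s-1},\alpha_s$ violating $q_{s-1}<n_s$ yields a witness, not just the extremal one). The extremal choice in the paper's proof can in fact be seen as a particular instance of your selection.
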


\begin{proof}
Given an arbitrary cycle on the two-agent TPG, let $v^i_p$ be the earliest vertex of agent $i$ in the cycle (i.e., the vertex of agent $i$ with the smallest $p$). The edge in the cycle that points to $v^i_p$ must be from agent $j$. Denote this edge as $(v_q^j,v_p^i)$. Similarly, let $v^j_n$ be the earliest vertex of agent $j$ in the cycle. The edge in the cycle that points to $v^j_n$ must be from agent $i$. Denote this edge as $(v_m^i,v_n^j)$. Since $v^i_p$ and $v^j_n$ are the earliest vertices of their respective agents, we know $p \le m$ and $n \le q$. Therefore, we find a cycle $v_q^j\rightarrow v_p^i \rightarrow \cdots \rightarrow v_m^i \rightarrow v_n^j \rightarrow \cdots \rightarrow v_q^j$, where the first ``$\cdots$'' denotes a sequence of Type-1 edges for agent $i$ and the second ``$\cdots$'' denotes a sequence of Type-1 edges for agent $j$. This cycle meets the properties described by the lemma. 
\end{proof}

Next, we describe the Function \textsc{FindGroupableEdges} to identify all switchable edges that can be grouped with a specific edge $e$ in \Cref{algo: FGE}. We provide an intuitive example in \Cref{fig: grouping_procedure} with detailed reasoning in its caption for the parallel pattern mentioned in the \Cref{fig: grouping}a.

To find all groupable edges, we must reason the two possible directions of $e$ (\Cref{line: propagate original direction} for the reverse direction and \Cref{line: propagate reverse direction start}-\Cref{line: propagate reverse direction end} for the original direction). If $e$ is reversed, we call Function \textsc{Propagate} to find the set of edges that must follow the same direction as $Re(e)$. If $e$ is fixed, because of the symmetry of the reasoning procedure, we need to reverse all the switchable edges first, then call \textsc{Propagate}, and finally reverse the result back (\Cref{line: propagate reverse direction start}-\Cref{line: propagate reverse direction end}). This way, we can find another set of edges that must follow the same direction as $e$ if $e$ is fixed. The intersection of these two edge sets contains all the edges that must always select the same direction as $e$, namely all the groupable edges (\Cref{line: intersection}).

Our core Function \textsc{Propagate} tries to figure out all the edges in $E$ that must also be reversed if we reverse the edge $e$. All the edges that must be reversed are recorded in $S$, the newly found ones are kept in $C$, and the remaining edges are maintained in $E$ (\Cref{line: init edge set}). $S$ and $C$ start with only the edge $e$ while $E$ is initialized to all the edges but $e$. \Cref{line: reverse newly found edges} reverses the direction of all newly found edges in $C$ and \Cref{line: call find cycle edges} calls Function \textsc{FindCycleEdges} to find the edges in $E$ that would form cycles with the reversed edges in $C_r$. Indeed, the returned edges from \textsc{FindCycleEdges} become the newly found edges that must be reversed as $e$. Otherwise, our TPG would be cyclic. The procedure of cycle detection is an efficient implementation based on the \Cref{lemma: two-agent cycle detection}.

If we cannot find any edges leading to cycles (\Cref{line: empty cycle edges}), it implies that we find a settlement of all switchable edges leading to a TPG with no cycle. This TPG is a certificate that all the remaining edges can select the opposite of $e$'s settled direction and, thus, are not groupable with $e$. Then, we can conclude that all the groupable edges must be a subset of $S_1$ in \Cref{line: propagate original direction}, similarly, of $S_2$ in \Cref{line: propagate reverse direction end}, and consequently, of their intersection. On the other hand, according to the result of Function \textsc{Propagate}, the edges in $S_1$ must select the same direction as $e$ if $e$ is reversed, and the edges in $S_2$ must select the same direction as $e$ if $e$ is fixed. Therefore, the edges in the intersection must follow the same direction as $e$. That is, we find the exact maximal edge group containing edge $e$ by \Cref{algo: FGE}, according to \Cref{definition: edge group}.

Regarding the time complexity of the edge grouping, in the worst case, the while loop in \Cref{algo: EGF} needs to iterate over all edges and has a complexity of $O(|\mathcal{S}_{i,j}|)$. For each call to Function \textsc{Propagate}, we at most need to check all pairs of edges, and it has a complexity of $O(|\mathcal{S}_{i,j}|^2)$. Therefore, the overall worst-case complexity is $O(|\mathcal{S}_{i,j}|^3)$. 

\subsection{Prioritized Branching}
\label{subsection: branching}

In \Cref{line: prioritizing branching}-\Cref{line: conflicting edge not found} of \Cref{algo: GSES}, we need to select a conflicting edge to branch. However, different prioritization in conflicting edge selection may influence the search speed. We experiment with the following four strategies:
\begin{enumerate}
    \item \textbf{Random}: randomly select a conflicting edge.
    \item \textbf{Earliest-First}: select the first conflicting edge $e=(v_p,v_q)$ with the smallest ordered-pair $(L(v_q),L(v_p))$.
    \item \textbf{Agent-First}: select the first conflicting edge with the smallest agent index. This is the strategy used by GSES.
    \item \textbf{Smallest-Edge-Slack-First}: select the first conflicting edge with the smallest edge slack $Sl(e)$.
\end{enumerate}

In \Cref{section: experiments}, we find that \textbf{Smallest-Edge-Slack-First} performs the best empirically. The intuition behind this design is that $Sl(e)$ reflects edge $e$'s degree of conflict with the current STPG, and we want to address the most conflicting one first because it may influence the overall cost the most. 

\subsection{Incremental Implementation}
\label{subsection: incremental}

In GSES, the computation of the longest path lengths  (\Cref{line: longest path lengths} of \Cref{algo: GSES}) takes the most time in a search node construction, as illustrated in Figure \ref{fig:time_profiling}. But each time we build a child node, we only add one edge (group) to the reduced TPG of the parent node. Therefore, We directly apply an existing algorithm for computing the longest path lengths incrementally in a directed acyclic graph~\citep{katriel2005maintaining} to speed up the search.

\section{Experiments}
\label{section: experiments}
Following the setting in the baseline GSES paper \citep{FengICAPS24}, we evaluate algorithms on four maps from the MAPF benchmark \citep{SternSoCS19}, with $5$ different numbers of agents per map. For each map and each number of agents, we generate $25$ different instances with start and goal locations evenly distributed. We then run 1-robust PBS on each instance to obtain the initial MAPF plans. We add the 1-robust requirement~\citep{atzmon2018robust} to PBS~\citep{ma2019searching} because the original PBS does not resolve the following conflicts. 
Each solved instance is tested with $6$ different delay scenarios. We obtain a scenario by simulating the initial MAPF plan until some delays happen. An agent will be independently delayed for 10-20 steps with a constant probability $p\in\{0.002,0.01,0.03\}$ at each step. 
We run $8$ times for each scenario and set a time limit of $16$ seconds for each run. Notably, since all the algorithms we compare with are optimal, we only focus on their search time. Due to the space limit, we only report the results of $0.01$ delay probability in the main text with others reported in Appendix A.4. The conclusions are consistent across different probabilities.

All the algorithms in the experiments are implemented in C++.\footnote{MILP also uses a Python wrapper, so we count only its C++ solver's time for a fair comparison.} All the experiments are conducted on a server with an Intel(R) Xeon(R) Platinum 8352V CPU %(2.10GHz)
and 120 GB RAM. Additional experimental details are covered in Appendix A.4. The code and benchmark are publicly available.\footnote{https://github.com/DiligentPanda/STPG.git}

\begin{figure}[tb]
    \setlength{\abovecaptionskip}{4pt}  % Adjust space above caption
    \setlength{\belowcaptionskip}{8pt}  % Adjust space below caption

    \centering
    \begin{subfigure}[b]{0.45\textwidth}
      \centering
      \includegraphics[width=1\textwidth]{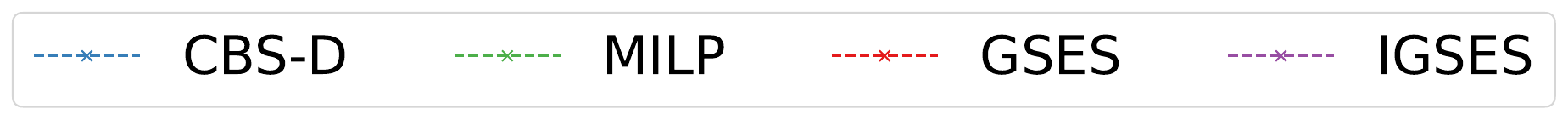}
      % \caption{Warehouse}
      % \label{fig:random_corner}
    \end{subfigure}%
    \hfill

    \begin{subfigure}[b]{0.225\textwidth}
      \centering
      \includegraphics[width=1\textwidth]{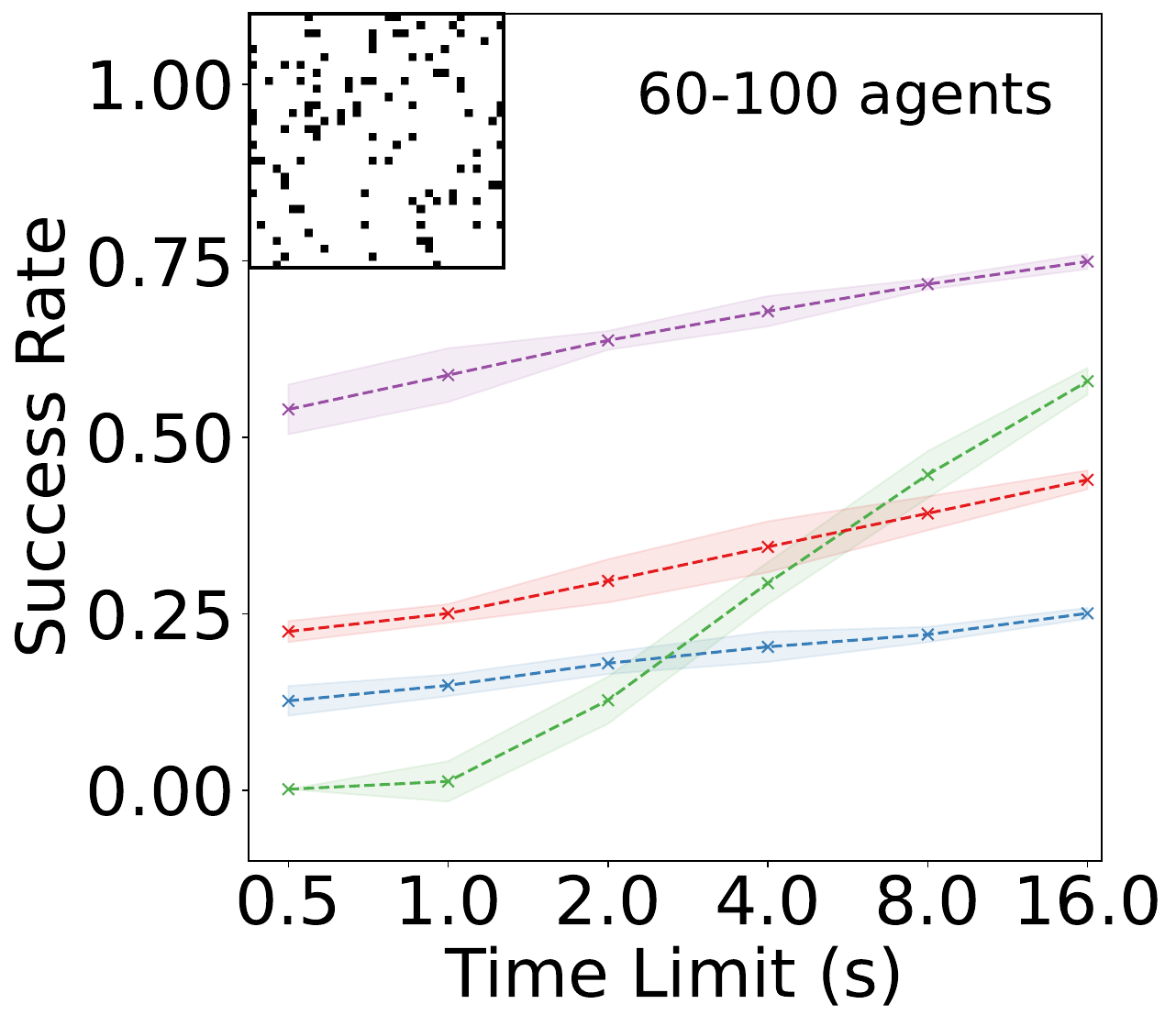}
      \caption{Random-32-32-10}
      % \label{fig:random_corner}
    \end{subfigure}%
    \hfill
    %\par\bigskip 
    \begin{subfigure}[b]{0.225\textwidth}
      \centering
      \includegraphics[width=1\textwidth]{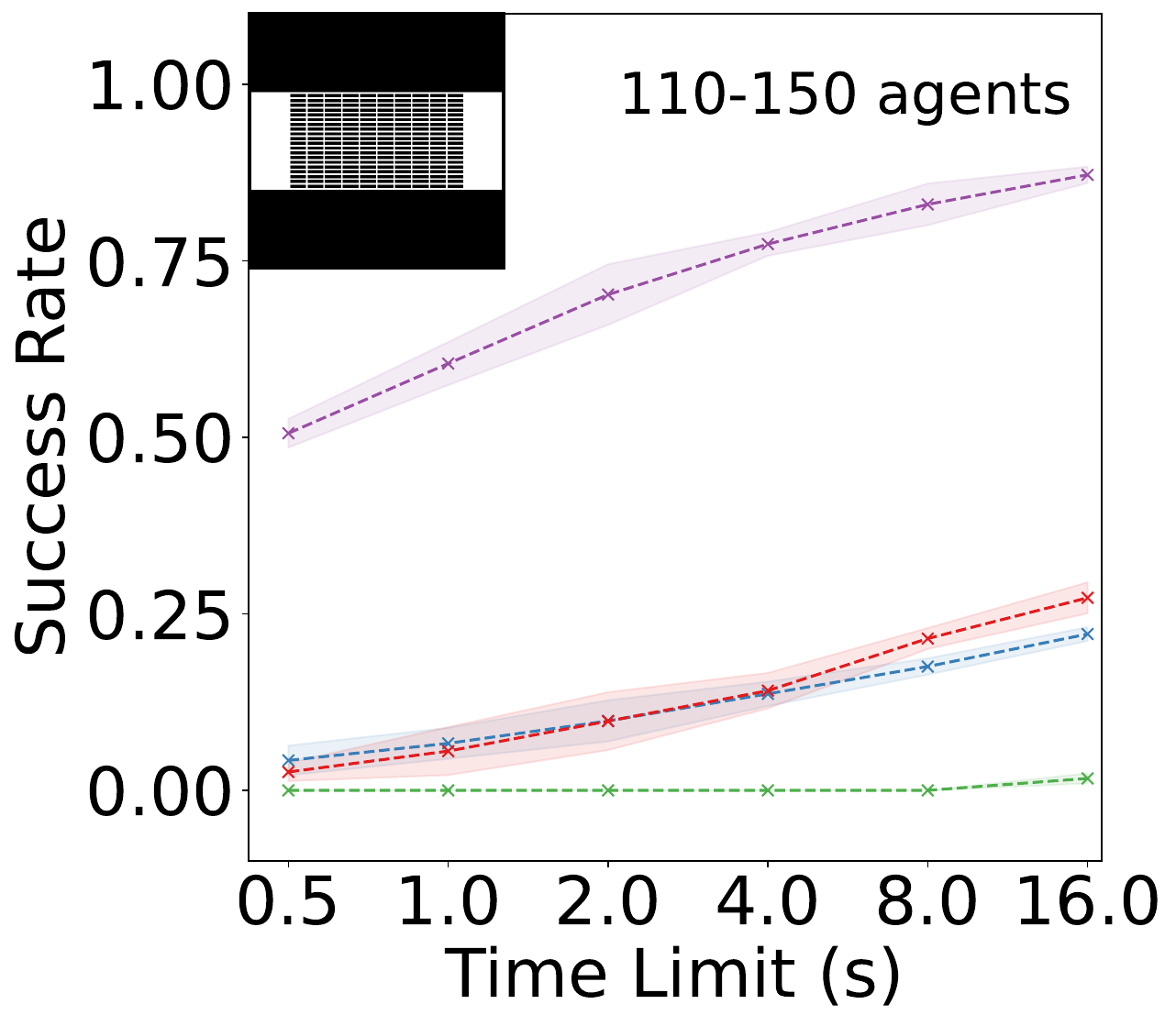}
      \caption{Warehouse-10-20-10-2-1}
      % \label{fig:random_corner}
    \end{subfigure}%
    \hfill

    \setlength{\belowcaptionskip}{4pt}  % Adjust space below caption
    
    %\par\bigskip 
    \begin{subfigure}[b]{0.225\textwidth}
      \centering
      \includegraphics[width=1\textwidth]{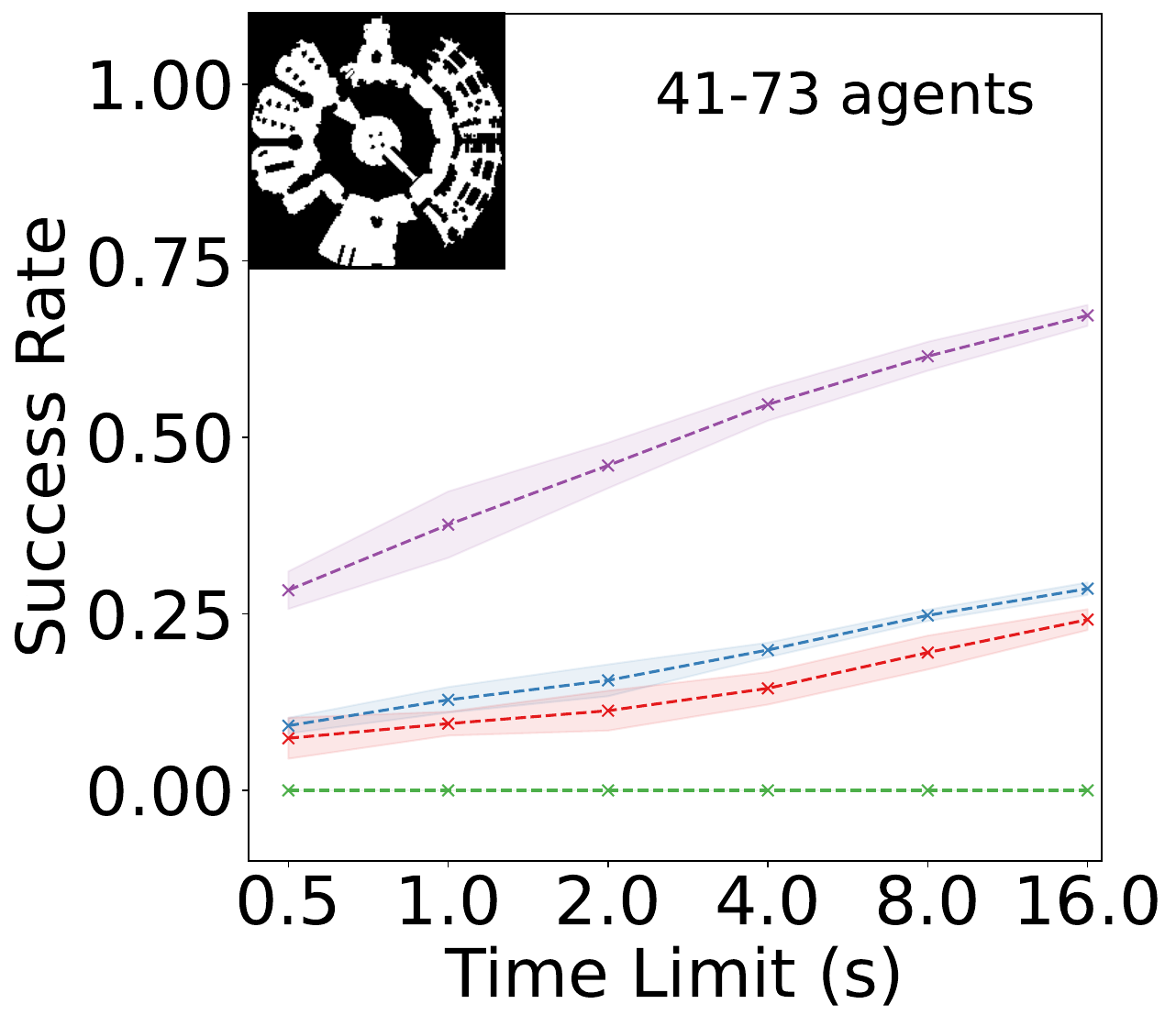}
      \caption{Lak303d}
      % \label{fig:random_corner}
    \end{subfigure}%
    \hfill
    %\par\bigskip 
    \begin{subfigure}[b]{0.225\textwidth}
      \centering
      \includegraphics[width=1\textwidth]{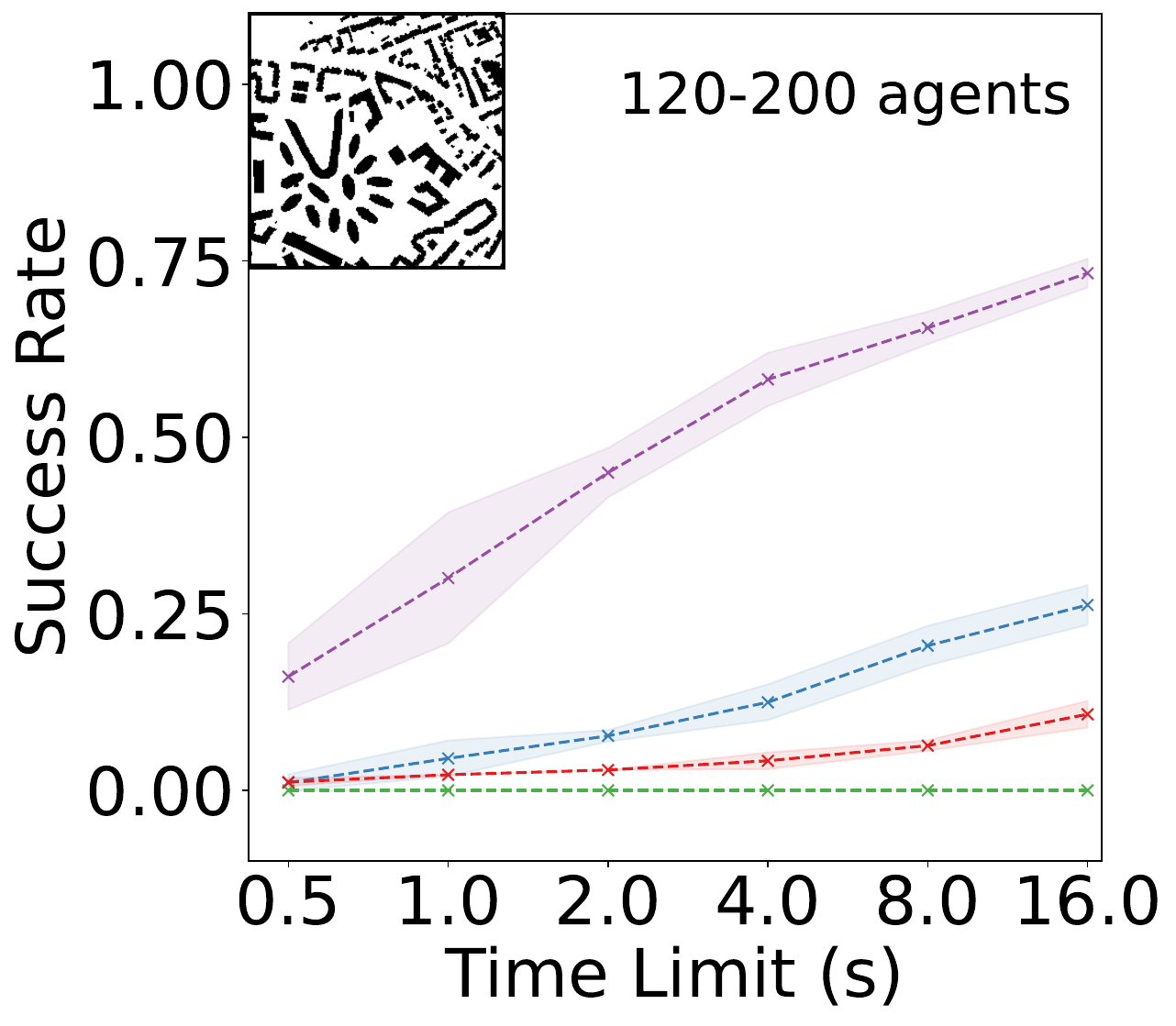}
      \caption{Paris\_1\_256}
      % \label{fig:random_corner}
    \end{subfigure}%
    \hfill
    % \begin{subfigure}[b]{0.225\textwidth}
    %   \centering
    %   \includegraphics[width=1\textwidth]{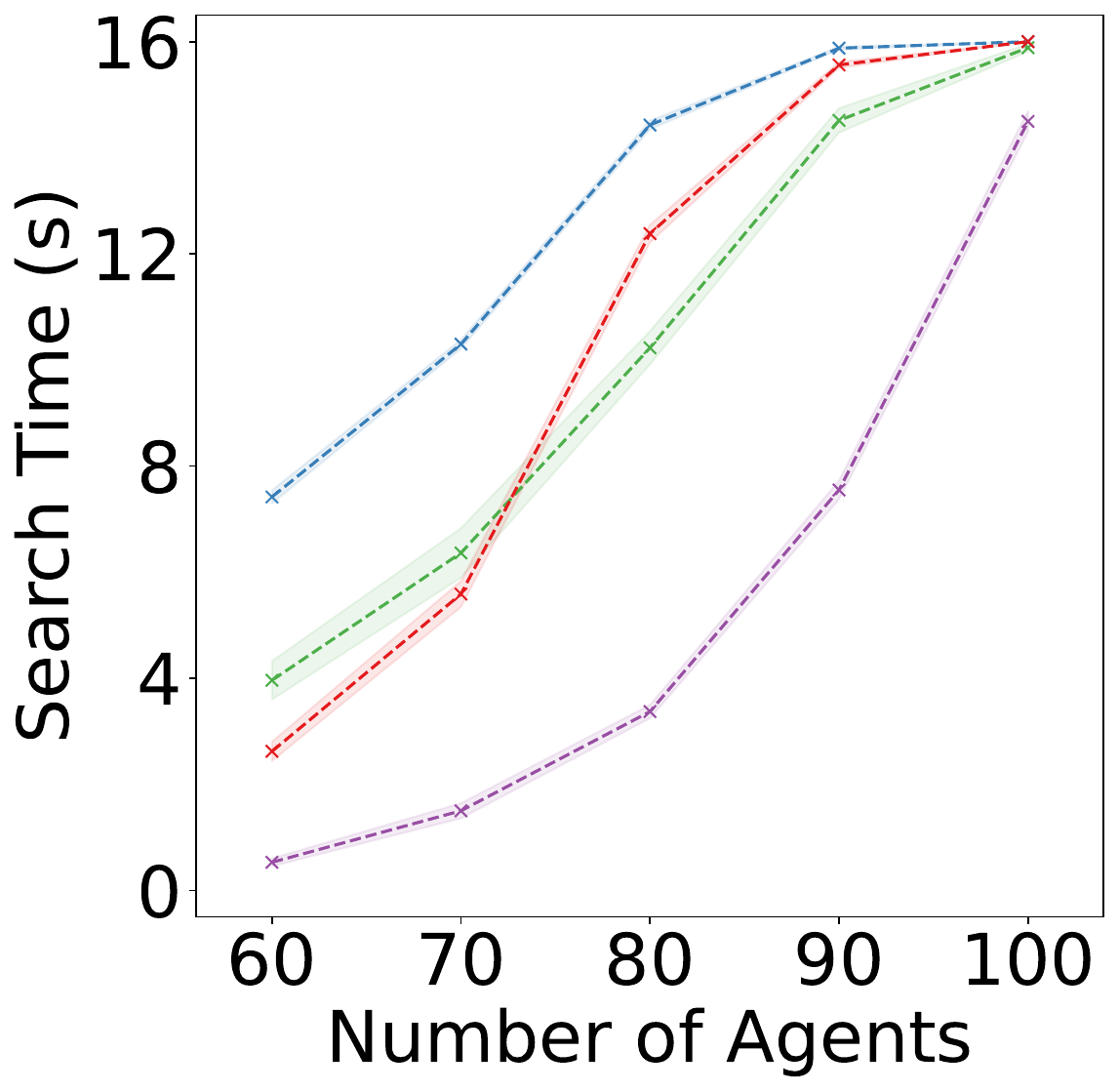}
    %   \caption{Random}
    %   % \label{fig:random_corner}
    % \end{subfigure}%
    % \hfill
    % %\par\bigskip 
    % \begin{subfigure}[b]{0.225\textwidth}
    %   \centering
    %   \includegraphics[width=1\textwidth]{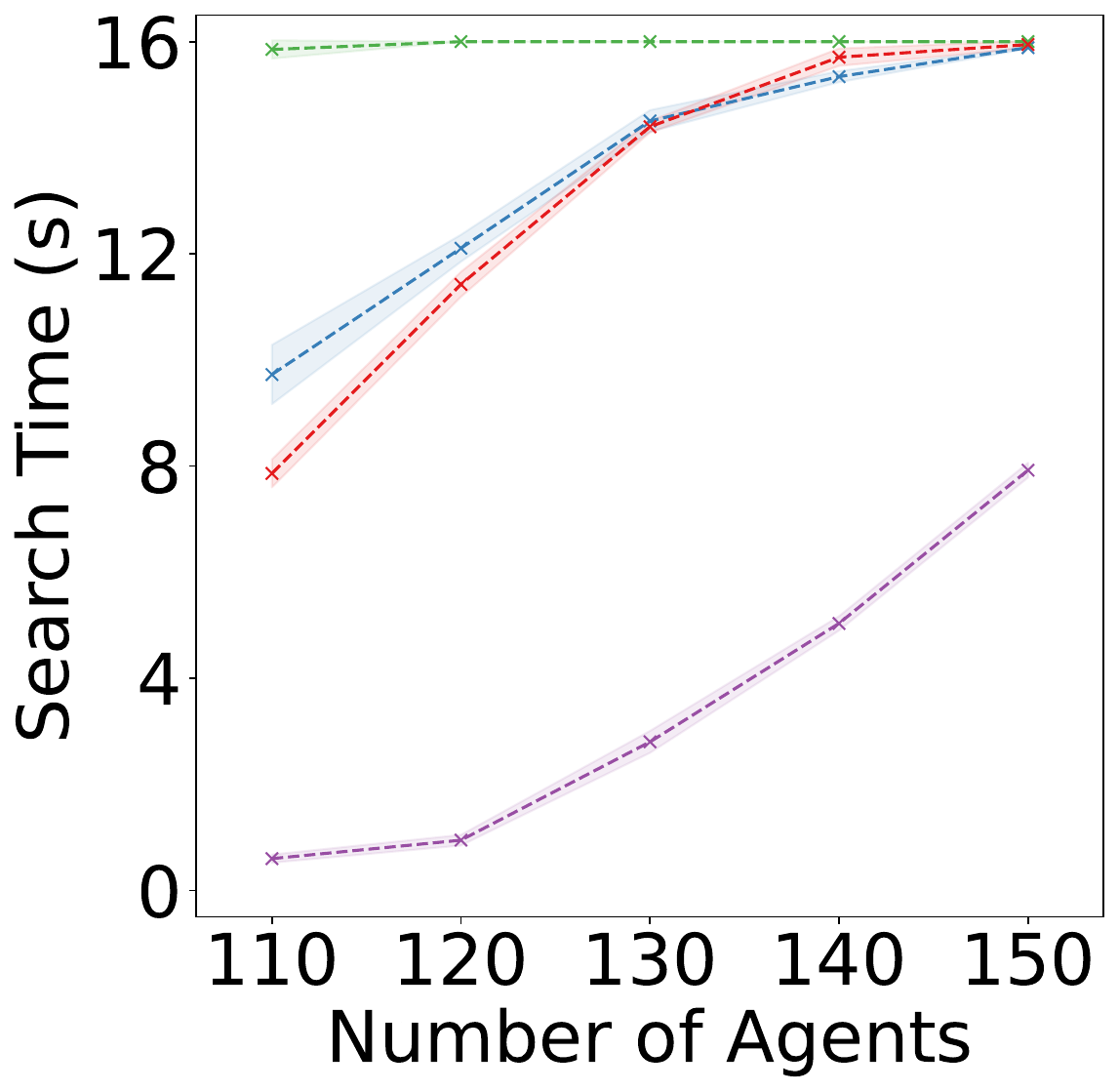}
    %   \caption{Warehouse}
    %   % \label{fig:random_corner}
    % \end{subfigure}%
    % \hfill
    % %\par\bigskip 
    % \begin{subfigure}[b]{0.225\textwidth}
    %   \centering
    %   \includegraphics[width=1\textwidth]{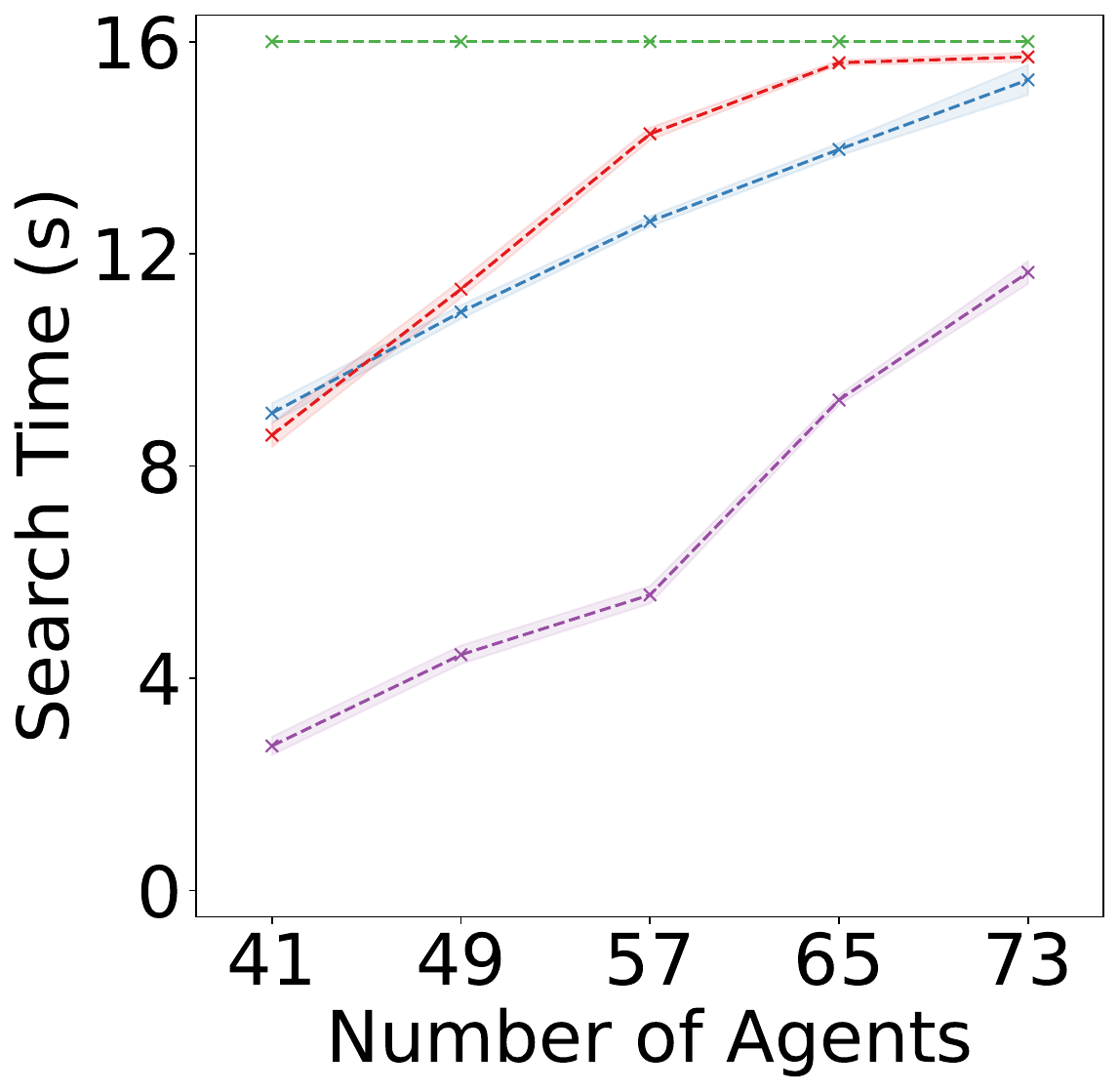}
    %   \caption{Game}
    %   % \label{fig:random_corner}
    % \end{subfigure}%
    % \hfill
    % %\par\bigskip 
    % \begin{subfigure}[b]{0.225\textwidth}
    %   \centering
    %   \includegraphics[width=1\textwidth]{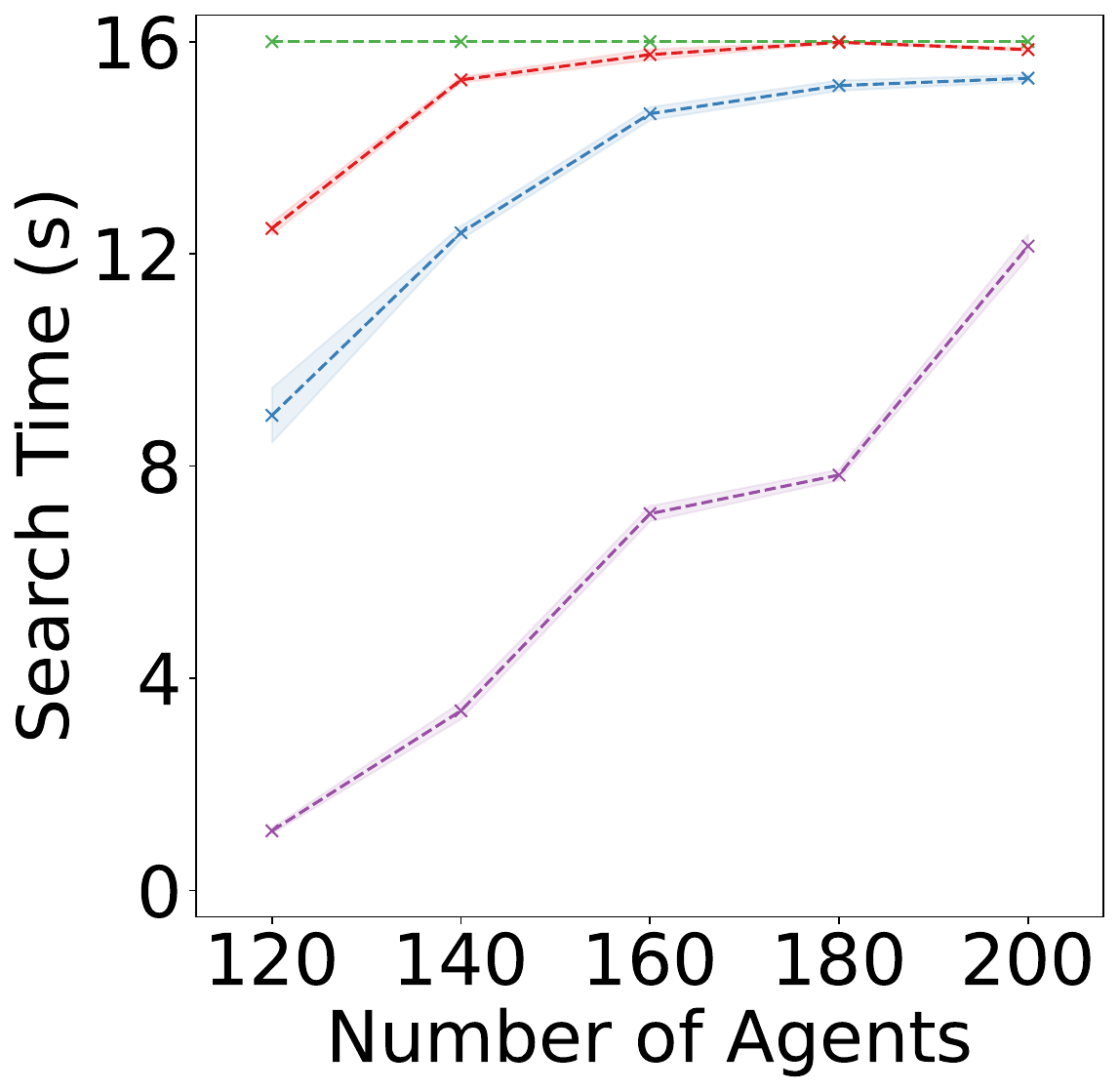}
    %   \caption{City}
    %   % \label{fig:random_corner}
    % \end{subfigure}%
    % \hfill
    % %\par\bigskip 
    \caption{Success rates of IGSES and other baselines on four maps. An instance is considered successfully solved if an optimal solution is returned within the time limit.
    The shading areas indicate the standard deviations of different runs. 
    They are multiplied by 10 for illustration. In each figure, the top-left corner shows the corresponding map, and the top-right corner shows the range of agent numbers.}
    \label{fig:succ_rates}
\end{figure}

\begin{table*}[!h]
    \caption{Incremental analysis on instances solved by all the settings. Rows 1,2,3 compare the effects of different grouping methods. GSES has no grouping. SG: simple grouping. FG: full grouping. Rows 3,4,5,6 compare the effects of different branching orders. Row 3 applies the default Agent-First branching in GSES. RB: Random branching. EB: Earliest-First branching. SB: Smallest-Edge-Slack-First branching. Rows 6,7 compare the effect of the stronger heuristics. SH: stronger heuristics. Rows 7,8 compare the effect of the incremental implementation. INC: incremental implementation. The last row is also our IGSES.}
    \label{tab:incremental}
    \centering
    \resizebox{0.98\textwidth}{!}{
    \begin{tabular}{c|c|rrr|rrr|rrr|rrr}
        \toprule
        \toprule
        \multirow{3}{*}{\makecell{Row}} & \multirow{3}{*}{Setting} & \multicolumn{3}{c|}{Random-32-32-10} &
        \multicolumn{3}{c|}{Warehouse-10-20-10-2-1} & 
        \multicolumn{3}{c|}{Lak303d} & 
        \multicolumn{3}{c}{Paris\_1\_256}\\
        & & \makecell{search\\time (s)} & \makecell{\#expanded\\nodes} & \makecell{\#edge\\ groups} & \makecell{search\\time (s)} & \makecell{\#expanded\\nodes} & \makecell{\#edge\\ groups} & 
        \makecell{search\\time (s)} & \makecell{\#expanded\\nodes} & \makecell{\#edge\\ groups} & 
        \makecell{search\\time (s)} & \makecell{\#expanded\\nodes} & \makecell{\#edge\\ groups}\\
        \midrule
    
1 &GSES & 1.421 & 3051.5 & 1637.9 & 3.805 & 948.3 & 15441.6 & 2.608 & 550.8 & 30320.6 & 5.221 & 253.1 & 40214.5\\
2 &+SG & 1.082 & 2025.7 & 1176.4 & 1.783 & 315.3 & 9134.2 & 1.242 & 185.7 & 19012.7 & 3.131 & 115.2 & 27065.5\\
3 &+FG & 0.945 & 1597.9 & 732.5 & 1.519 & 241.4 & 2367.0 & 1.021 & 130.8 & 5601.8 & 2.770 & 88.1 & 10706.9\\
4 &+FG +RB & 1.555 & 2495.2 & 732.5 & 2.685 & 447.4 & 2367.0 & 2.644 & 324.2 & 5601.8 & 3.553 & 114.5 & 10706.9\\
5 &+FG +EB & 1.185 & 2036.1 & 732.5 & 2.354 & 381.6 & 2367.0 & 1.481 & 172.9 & 5601.8 & 3.118 & 102.6 & 10706.9\\
6 &+FG +SB & 0.915 & 1480.5 & 732.5 & 1.472 & 229.5 & 2367.0 & 1.027 & 129.0 & 5601.8 & 2.472 & 78.7 & 10706.9\\
7 &+FG +SB +SH & 0.169 & 84.6 & 732.5 & 0.832 & 34.6 & 2367.0 & 0.871 & 33.1 & 5601.8 & 1.561 & 15.4 & 10706.9\\
8 &+FG +SB +SH +INC & 0.043 & 84.6 & 732.5 & 0.120 & 34.6 & 2367.0 & 0.221 & 33.1 & 5601.8 & 0.321 & 15.4 & 10706.9\\

        \bottomrule
        \bottomrule
    \end{tabular}
    }
\end{table*}

\subsection{Comparison with Other Algorithms}
We compare our method IGSES with the baseline GSES algorithm and other two optimal algorithms, MILP \citep{berndt2023receding} and CBS-D\footnote{We adapt the original CBS-D for our MAPF definition, as we forbid not only edge conflicts but all following conflicts.} \citep{kottinger2024introducing}. The success rates on four maps with increasing sizes are shown in \Cref{fig:succ_rates}. 
Compared to GSES, IGSES at least doubles the success rates in most cases. Due to the inefficiency of a general branch-and-bound solver, MILP can only solve instances on the small Random map. An interesting observation is that CBS-D is worse than GSES on small maps but generally becomes better than GSES as the map size increases.
However, directly applying existing MAPF algorithms without tailored adaptations like CBS-D shows much worse performance than our IGSES, which better exploits the problem's structure. In Appendix A.4, we also plot the mean search time for different maps and agent numbers to support our conclusions.

\begin{figure}[!tb]
    \begin{subfigure}[b]{0.22\textwidth}
      \centering
      \includegraphics[width=1\textwidth]{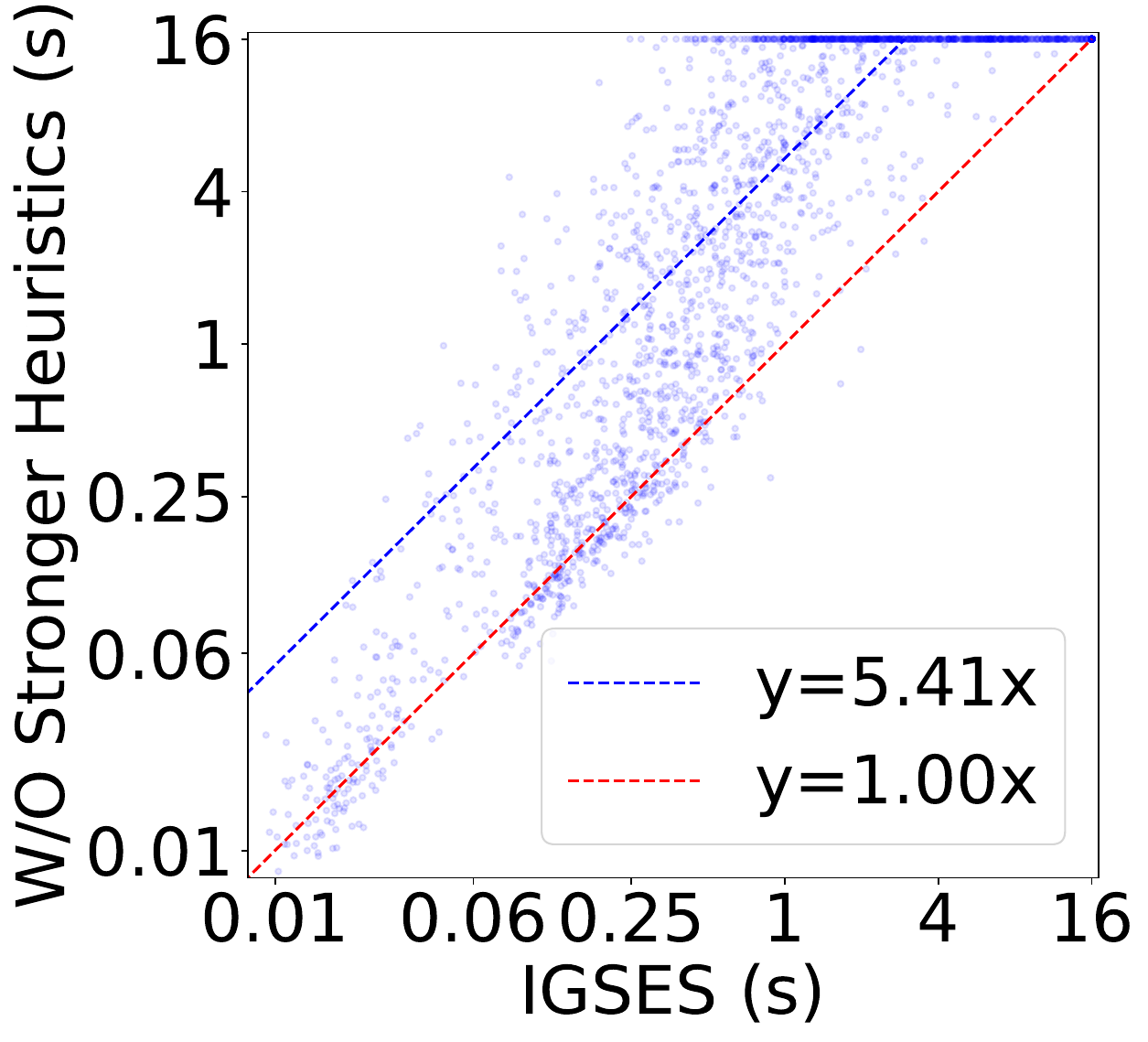}
      \caption{Heuristics}
      % \label{fig:random_corner}
    \end{subfigure}%
    \hfill
    %\par\bigskip 
    \begin{subfigure}[b]{0.22\textwidth}
      \centering
      \includegraphics[width=1\textwidth]{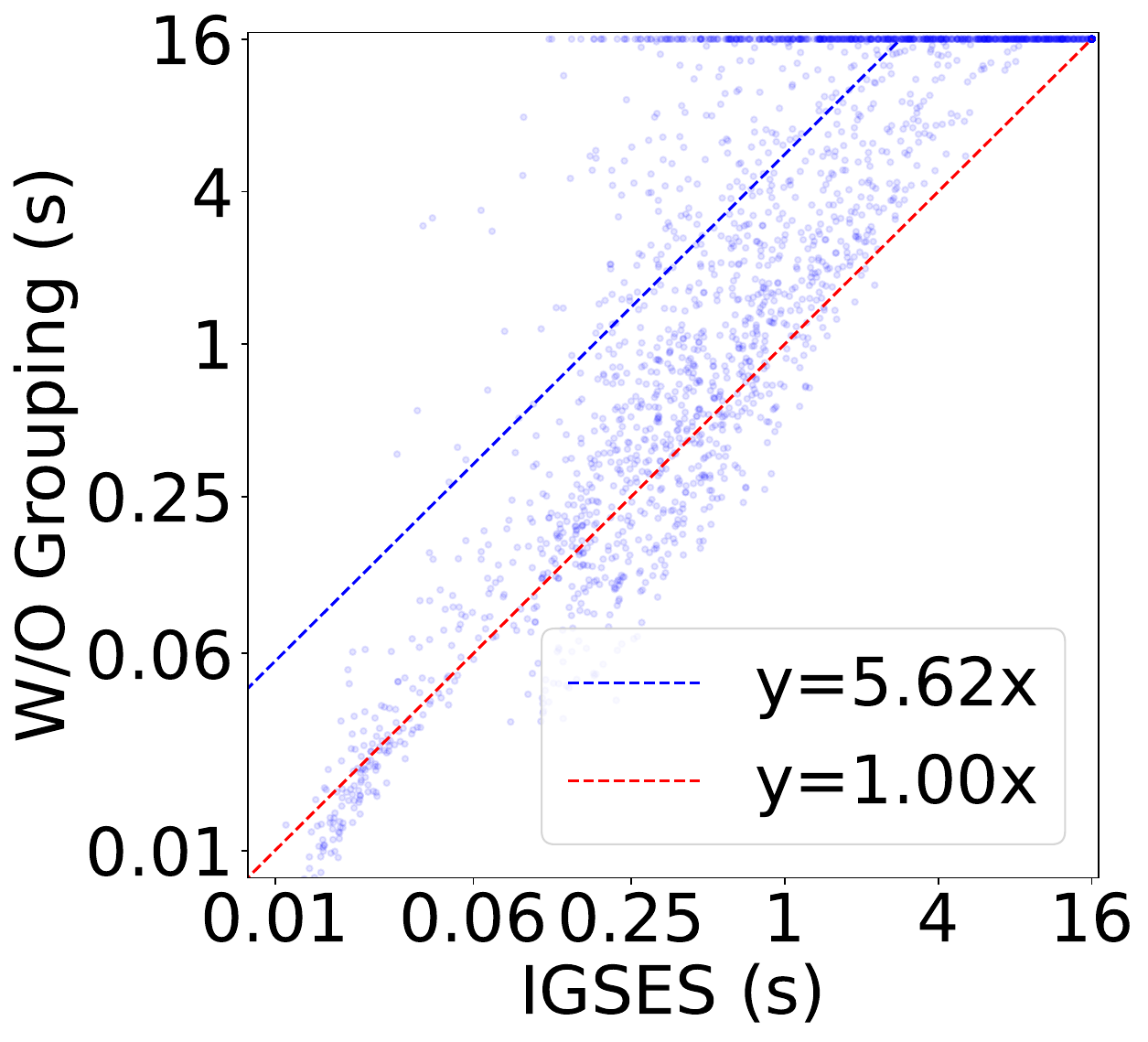}
      \caption{Grouping}
      % \label{fig:random_corner}
    \end{subfigure}%
    \hfill    %\par\bigskip 
    
    %\par\bigskip 
    \begin{subfigure}[b]{0.22\textwidth}
      \centering
      \includegraphics[width=1\textwidth]{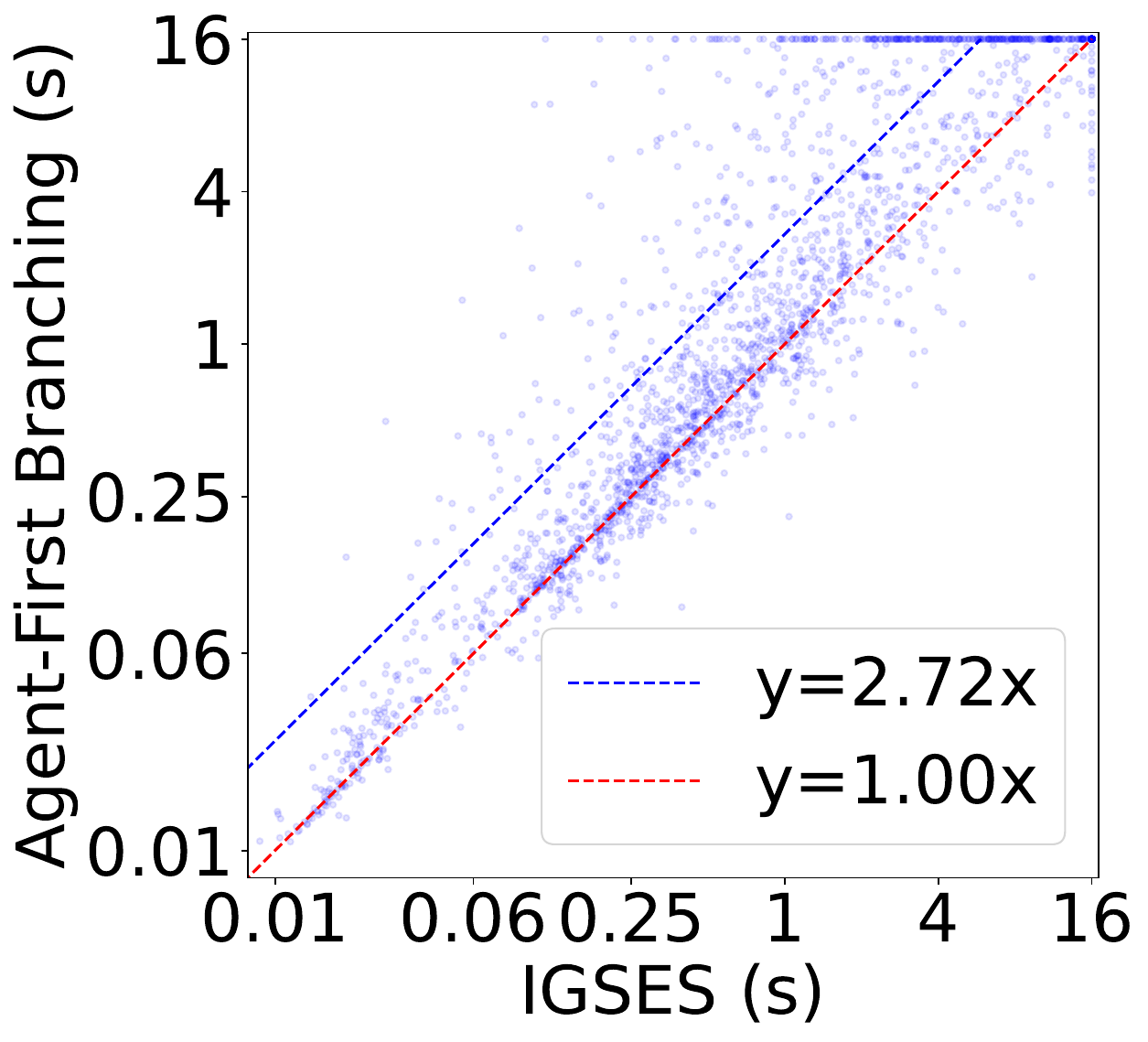}
      \caption{Branching}
      % \label{fig:random_corner}
    \end{subfigure}%
    \hfill
    \begin{subfigure}[b]{0.22\textwidth}
      \centering
      \includegraphics[width=1\textwidth]{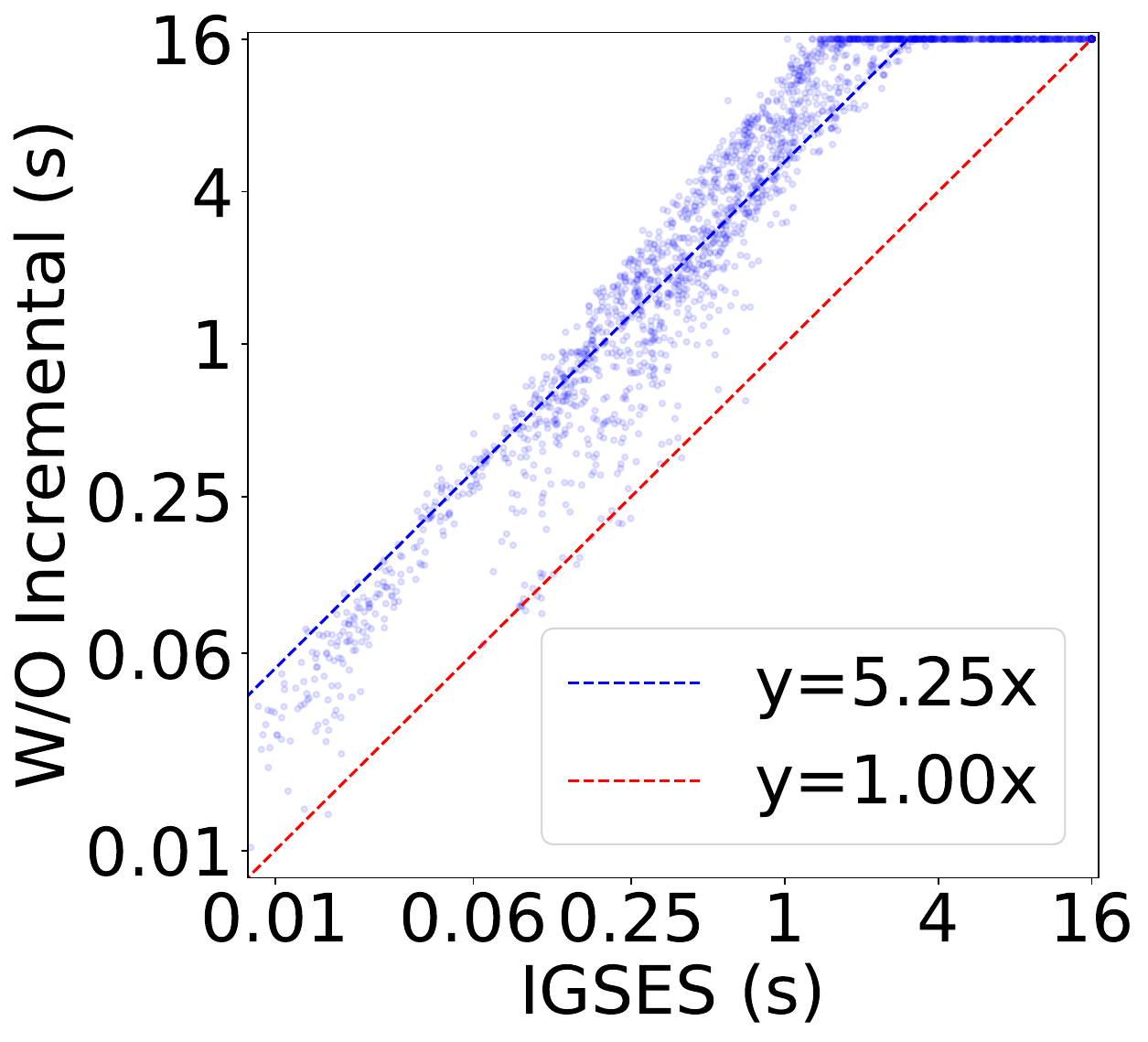}
      \caption{Incremental}
      % \label{fig:random_corner}
    \end{subfigure}%
    \hfill
    \caption{Ablation on four speedup techniques on all instances. In each figure, we compare IGSES to the ablated setting that replaces one of its techniques by the GSES's choice. Each point in the figure represents an instance with its $x,y$ coordinates being the search time of the two settings. The search time is set to 16 seconds if an instance is not solved. The blue line is fitted on the instances solved by at least one setting. Its slope indicates the average speedup.}
    \label{fig: ablation}
\end{figure}

\subsection{Ablation Studies}
\label{subsection: ablation studies}

First, we provide an incremental analysis in \Cref{tab:incremental}. We begin with GSES in the first row, gradually compare different choices in each technique, and add the most effective one. Finally, we obtain IGSES in the last row, which achieves a 10- to 30-fold speedup and over a 90\% reduction in node expansion on instances solved by all the settings, compared to GSES. Notably, comparing rows 1,2,3, we can find that our full grouping has the smallest number of edge groups, thus empirically less branching and node expansion. Comparing rows 7,8, we can find that the incremental implementation does not affect the node expansion but significantly reduces the computation time as expected. 

Further, we do an ablation study on the four techniques with all the instances, including unsolved ones, through pairwise comparison in \Cref{fig: ablation}. All the techniques are critical to IGSES, as they all illustrate an obvious speedup.

\begin{figure}[tb]
\centering
\includegraphics[width=1.0\columnwidth]{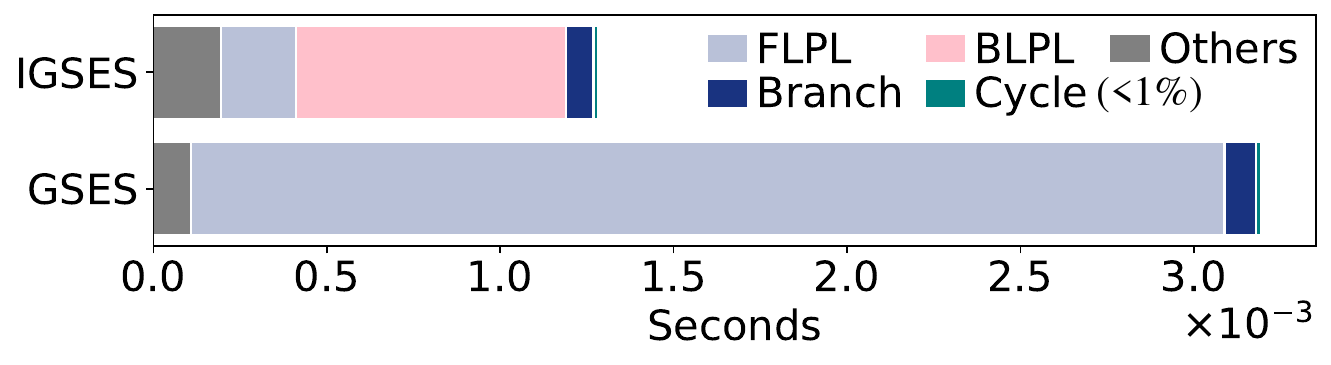} 
\caption{Runtime profile for each search node in the warehouse experiments. FLPL: forward longest path lengths. BLPL: backward longest path lengths. Branch: branch prioritization. Cycle: cycle detection. Others: copy and free data structures, termination check, and others.}
\label{fig:time_profiling}
\end{figure}

Finally, we profile the average runtime on each node for GSES and IGSES with the warehouse experiment data in \Cref{fig:time_profiling}. The computation of the longest path lengths takes the most runtime in both cases. IGSES's runtime on the forward longest path lengths is largely reduced mainly because of the incremental implementation. The backward longest path lengths take more time to compute than the forward, but IGSES's overall runtime is much smaller. The edge grouping time is not included here. Edge grouping can be computed only once before executing the MAPF plan and used whenever delays occur. It takes $0.037$ seconds on average, which is negligible compared to the MAPF's planning time.

\section{Conclusion and Future Work}
In this paper, we study the STPG optimization problem. We analyze the weakness of the optimal GSES algorithm and propose four speedup techniques. Their effectiveness is validated by both theoretical proof and experimental data. To scale to larger instances, a potential future direction is to devise sub-optimal algorithms for the STPG optimization problem to trade-off between solution quality and speed. 

\section*{Acknowledgments}
The research was supported by the National Science Foundation (NSF) under grant number \#2328671 and a gift from Amazon. The views and conclusions contained in this document are those of the authors and should not be interpreted as representing the official policies, either expressed or implied, of the sponsoring organizations, agencies, or the U.S. government.

\bibliographystyle{aaai25}
\bibliography{aaai25}

\clearpage

\appendix
\section{Appendix}
\label{section: appendix}

\subsection{Computing the Longest Path Lengths}
\label{subsection: appendix_longest_path_lengths}
This section describes the algorithms for computing the forward longest path lengths (FLPL) and backward longest path lengths (BLPL) on the reduced TPG. They are simple, so we mainly present the pseudocode.

We first describe the non-incremental version of these two algorithms in \Cref{algo: FLPL} and \Cref{algo: BLPL}. The input graph $\mathcal{G}=(\mathcal{V},\mathcal{E})$ is the reduced TPG of the current STPG, where $\mathcal{E}$ contains both Type-1 edges and non-switchable Type-2 edges. The topological ordering, $Ordered(\mathcal{V})$, obtained in the computation of FLPL will also be input to the calculation of BLPL. The returned $L(v)$ and $L(v,g)$ represent the functions of FLPL and BLPL, respectively.

When computing the FLPL in \Cref{algo: FLPL}, we update $L(v)$ by the predecessors of $v$ (\Cref{line: update FL}) in the topological ordering. Similarly, when computing the BLPL in \Cref{algo: BLPL}, we update $L(v,g^i)$ by the successors of $v$ that are connected to $g^i$ (\Cref{line: update BL}) in the reverse topological ordering. 

\begin{algorithm}[htb]
\caption{Forward Longest Path Lengths}
\label{algo: FLPL}
\begin{algorithmic}[1]
\Function{FLPL}{$\mathcal{G}=(\mathcal{V},\mathcal{E})$}
    \State Obtain a topological ordering of $G$, denoted as $Ordered(\mathcal{V})$
    \State $L(v)\leftarrow 0$ for $\forall v\in \mathcal{V}$.
    \For{$v\in Ordered(\mathcal{V})$}
        \State Get the predecessors of $v$, denoted as $Pred(v)$
        \State $L(v)\leftarrow \max{\{L(u)+1 \mid u\in Pred(v)\}}$
            \label{line: update FL}
    \EndFor
    \State \Return $L(v)$,$Ordered(\mathcal{V})$
\EndFunction
\end{algorithmic}
\end{algorithm}

\begin{algorithm}[htb]
\caption{Backward Longest Path Lengths}
\label{algo: BLPL}
\begin{algorithmic}[1]
\Function{BLPL}{$\mathcal{G}=(\mathcal{V},\mathcal{E})$,$Ordered(\mathcal{V})$}
    \State Obtain the reverse ordering of $Ordered(\mathcal{V})$, denoted as $ROrdered(\mathcal{V})$
    \State $L(v,g^i)\leftarrow \infty$ for $\forall v\in \mathcal{V}, i\in I$
    \State $L(g^i,g^i)\leftarrow 0$ for $\forall i\in I$
    \For{$v\in ROrdred(\mathcal{V})$}
        \State Get the successors of $v$, denoted as $Succ(v)$
        \For{$i\in I$}
            \State $L(v,g^i)\leftarrow \max{}\{L(u,g^i)+1 \mid L(u,g^i)\neq \infty,u\in S(v) \}$
                \label{line: update BL}
        \EndFor
    \EndFor
    \State \Return $L(v,g)$
\EndFunction
\end{algorithmic}
\end{algorithm}

Here, we present the incremental versions of \Cref{algo: FLPL} and \Cref{algo: BLPL}, shown in \Cref{algo: FLPL_inc} and \Cref{algo: BLPL_inc} respectively, adapted from the paper \citep{katriel2005maintaining}.

For both algorithms, in addition to the graph $G$, we also input $NewEdges$, the set of new edges added to the graph, and $L(v)$, the old FLPL function, which specifies an update order. Notably, $L$ actually specifies a topological order on the original graph $G$ if we sort all the vertices by $L(v)$ in the ascending order. This order or its reverse is maintained by a heap in both algorithms. The returned $L'(v)$ and $L'(v,g)$ represent the functions of new FLPL and BLPL, respectively.  

When computing the FLPL in \Cref{algo: FLPL_inc}, we update the $L'(v)$ by the predecessors of $v$ (\Cref{line: update FL'}). If the length changes, we push into the heap all the successors of $v$ that have not been visited for future updating (\Cref{line: push into FL' heap}). Similarly, when computing the BLPL in \Cref{algo: BLPL_inc}, we update the $L'(v,g_i)$ by the successors of $v$ that are connected to $g^i$ (\Cref{line: update BL'}). If any length changes, we push into the heap all the predecessors of $v$ that have not been visited for future updating (\Cref{line: push into BL' heap}). 

Since the incremental implementation only updates the longest path lengths for a small portion of vertices affected by the newly added edge (group) in the reduced TPG, the computational complexity is significantly reduced.

\begin{algorithm}[htb]
\caption{Forward Longest Path Lengths (Incremental)}
\label{algo: FLPL_inc}
\begin{algorithmic}[1]
\Function{FLPL\_inc}{$\mathcal{G}=(\mathcal{V},\mathcal{E})$, $L(v)$, $NewEdges$}
    \State Initialize a min-heap $H$ 
    \State $L'(v)\leftarrow L(v),\forall v \in \mathcal{V}$
    \State $visited(v)\leftarrow False,\forall v\in \mathcal{V}$
    \For {$e=(v_1,v_2)\in NewEdges$}
        \State Push $(L(v_2),v_2)$ into $H$
        \State $visited(v_2)\leftarrow True$
    \EndFor
    \While{$H$ not empty}
        \State Pop $(L(v),v)$ from $H$
        \State Get the predecessors of $v$, denoted as $Pred(v)$
        \State $L'(v)\leftarrow \max{\{L'(u)+1 \mid u\in Pred(v)\}}$
            \label{line: update FL'}
        \State $visited(v)\leftarrow True$
        \If {$L'(v)\neq L(v)$}
            \State Get the successors of $v$, denoted as $Succ(v)$
            \For{$u\in Succ(v)$}
                \If {not $visited(u)$}
                    \State Push $(L(u),u)$ into $H$
                        \label{line: push into FL' heap}
                \EndIf
            \EndFor
        \EndIf
    \EndWhile
    \State \Return $L'(v)$
\EndFunction
\end{algorithmic}
\end{algorithm}

\begin{algorithm}[htb]
\caption{Backward Longest Path Lengths (Incremental)}
\label{algo: BLPL_inc}
\begin{algorithmic}[1]
\Function{BLPL\_inc}{$\mathcal{G}=(\mathcal{V},\mathcal{E})$, $L(v)$, $NewEdges$}
    \State Initialize a max-heap $H$ 
    \State $L'(v,g_i)\leftarrow L(v,g_i),\forall v \in \mathcal{V},i\in I$
    \State $visited(v)\leftarrow False,\forall v\in \mathcal{V}$
    \For {$e=(v_1,v_2)\in NewEdges$}
        \State Push $(L(v_1),v_1)$ into $H$
        \State $visited(v_1)\leftarrow True$
    \EndFor
    \While{$H$ not empty}
        \State Pop $(L(v),v)$ from $H$
        \State Get the successors of $v$, denoted as $Succ(v)$
        \For{$i\in I$}
            \State $L'(v,g^i)\leftarrow \max{}\{L'(u,g^i)+1 \mid L'(u,g^i)\neq \infty,u\in S(v) \}$
                \label{line: update BL'}
        \EndFor
        \State $visited(v)\leftarrow True$
        \If {$\exists i\in I, L'(v,g_i)\neq L(v,g_i)$}
            \State Get the predecessors of $v$, denoted as $Pred(v)$
            \For{$u\in Pred(v)$}
                \If {not $visited(u)$}
                    \State Push $(L(u),u)$ into $H$
                        \label{line: push into BL' heap}
                \EndIf
            \EndFor
        \EndIf
    \EndWhile
    \State \Return $L'(v,g)$
\EndFunction
\end{algorithmic}
\end{algorithm}

\subsection{The Termination Condition of GSES}

This section proves the termination condition of GSES.

\begin{proposition}
\label{prop: termination_of_GSES}
If there is no conflicting switchable edge found in \Cref{line: solution found end} of \Cref{algo: GSES}, GSES can be terminated by fixing all the switchable edges. It returns an acyclic TPG with the same execution cost as the current reduced TPG.
\end{proposition}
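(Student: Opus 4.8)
The plan is to prove two things: (1) fixing all remaining switchable edges introduces no cycle, so the resulting graph is a valid (acyclic) TPG; and (2) the execution cost is unchanged, i.e., the forward longest path lengths at all goal vertices stay the same. Both will follow from the hypothesis that every remaining switchable edge $e=(v_{q+1}^j,v_p^i)$ has non-negative slack, $Sl(e)=L(v_p^i)-L(v_{q+1}^j)-1\ge 0$, where $L$ denotes the FLPL (= EAT) on the current reduced TPG, together with \Cref{theorem: TPG cost} and \Cref{theorem: acyclic_TPG}.

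First I would show the cost is preserved. Let $\mathcal{G}^S$ be the current STPG, $\mathcal{G}_0=Redu(\mathcal{G}^S)$ its reduced TPG, and $\mathcal{G}_1$ the TPG obtained by fixing all switchable edges (so $\mathcal{G}_1$ has vertex set $\mathcal{V}$, edges $\mathcal{E}_1\cup\mathcal{N}\cup\mathcal{S}$). The plan is to verify that the function $L$ (the FLPL on $\mathcal{G}_0$) already satisfies, on $\mathcal{G}_1$, the dynamic-programming fixed-point characterization of FLPL: for every vertex $v$, $L(v)=\max\{L(u)+1 : (u,v)\in \mathcal{E}_1\cup\mathcal{N}\cup\mathcal{S}\}$ (with the convention that source vertices get $L(v)=0$). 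For edges in $\mathcal{E}_1\cup\mathcal{N}$ this holds because $L$ is the FLPL on $\mathcal{G}_0$. For a newly added edge $e=(v_{q+1}^j,v_p^i)\in\mathcal{S}$, the slack condition $Sl(e)\ge 0$ says exactly $L(v_{q+1}^j)+1\le L(v_p^i)$, so this new incoming edge does not raise the max at $v_p^i$. Hence $L$ is still a fixed point on $\mathcal{G}_1$; since the FLPL is the unique such fixed point on a DAG (once we establish $\mathcal{G}_1$ is acyclic), the FLPL on $\mathcal{G}_1$ equals $L$, and in particular $\sum_i L(g^i)$ is unchanged, giving equal execution cost by \Cref{theorem: TPG cost}. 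One subtlety: this argument presupposes acyclicity of $\mathcal{G}_1$, so I would actually run the acyclicity argument first (or interleave it), using the same $L$.

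For acyclicity: suppose for contradiction $\mathcal{G}_1$ has a cycle. Every cycle must use at least one edge not in $\mathcal{G}_0$ (since $\mathcal{G}_0$ is acyclic — it is the reduced TPG of a node that survived the cycle check in \Cref{line: check cycle}), i.e., at least one edge of $\mathcal{S}$. Now trace the cycle and examine the $L$-values: for every edge $(u,v)$ of $\mathcal{G}_1$ that lies in $\mathcal{E}_1\cup\mathcal{N}$ we have $L(v)\ge L(u)+1>L(u)$, and for every edge $(u,v)\in\mathcal{S}$ we have just shown $L(v)\ge L(u)+1>L(u)$ as well. So $L$ is strictly increasing along every edge of $\mathcal{G}_1$, which is impossible around a cycle. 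This contradiction establishes that $\mathcal{G}_1$ is acyclic; combined with \Cref{theorem: acyclic_TPG}, its execution can be completed collision-free in finite time, and combined with the cost argument above its execution cost equals that of $\mathcal{G}_0$.

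The only real obstacle is being careful about the logical order — one cannot quote uniqueness of the FLPL fixed point before knowing $\mathcal{G}_1$ is a DAG — which is why the clean route is to first observe that $Sl(e)\ge 0$ forces $L$ to be strictly increasing along every edge of $\mathcal{G}_1$, derive acyclicity from that, and only then invoke \Cref{theorem: TPG cost} and the fixed-point uniqueness to conclude the costs coincide. Everything else is a short unwinding of the definitions of edge slack, FLPL, and execution cost. Finally I would note that $\mathcal{G}_1=\textsc{FixAll}(\mathcal{G}^S)$ is indeed produced by the STPG (all switchable edges settled), so it is a legitimate output of STPG Optimization with the claimed cost.
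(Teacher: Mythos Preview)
Your proof is correct and rests on the same key observation as the paper's: the non-negative slack condition makes the FLPL function $L$ strictly increasing along every edge of the augmented graph, which immediately rules out cycles and shows that the new incoming edges never raise the maximum in the FLPL recursion. The organizational difference is that the paper factors the argument through an intermediate lemma (\Cref{lemma: adding_edge}) that treats the addition of a \emph{single} non-negative-slack edge and then applies it inductively, one switchable edge at a time, whereas you add all of $\mathcal{S}$ at once and appeal to the uniqueness of the FLPL fixed point on a DAG. Your all-at-once route is slightly more direct and avoids the bookkeeping of the induction; the paper's edge-by-edge lemma has the minor advantage of being reusable as a standalone statement and of making the preservation of $L$ completely explicit via the topological-order computation rather than via a fixed-point uniqueness claim. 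Either way the substance is identical.
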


If all the switchable edges are not conflicting, then their edge slacks $Sl(e)\geq 0, \forall e\in \mathcal{S}$, where $\mathcal{S}$ is the set of switchable edges in the current STPG $\mathcal{G}^S$. To prove that fixing all the switchable edges will not introduce any cycle and extra cost, we prove \Cref{lemma: adding_edge}, which considers the case of adding only one edge to an acyclic TPG.

\begin{lemma}
\label{lemma: adding_edge}
    Given an acyclic TPG $\mathcal{G}=(\mathcal{V},\mathcal{E}_1,\mathcal{E}_2)$ with the forward longest path lengths as $L(v),v\in V$. If we add to $\mathcal{G}$ a new Type-2 edge $e'=(v_1,v_2)$ with its slack $Sl(e')=L(v_2)-L(v_1)-1\geq 0$, then the new graph $\mathcal{G}'=(\mathcal{V},\mathcal{E}_1,\mathcal{E}_2')$ remains acyclic, where $\mathcal{E}_2'=\mathcal{E}_2\bigcup\{e'\}$. Further, it has the same forward longest path lengths as $G$ at each vertex $v$. Namely, $L'(v)=L(v),v\in V$. 
\end{lemma}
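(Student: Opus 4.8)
The plan is to prove the two assertions in turn: first that $\mathcal{G}'$ stays acyclic, then that the forward longest path lengths are unchanged.

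For acyclicity I would argue by contradiction. If adding $e'=(v_1,v_2)$ created a directed cycle in $\mathcal{G}'$, that cycle would have to traverse $e'$, since $\mathcal{G}$ itself is acyclic; hence there is a directed path $P$ from $v_2$ back to $v_1$ using only edges of $\mathcal{G}$. Along every edge $(u,w)$ of a DAG one has $L(w)\ge L(u)+1$, so walking along $P$ yields $L(v_1)\ge L(v_2)+|P|\ge L(v_2)+1$ (here $|P|\ge 1$ because a Type-2 edge joins vertices of two different agents, so $v_1\neq v_2$). This contradicts the hypothesis $Sl(e')=L(v_2)-L(v_1)-1\ge 0$, i.e.\ $L(v_2)\ge L(v_1)+1$. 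Hence no cycle is created.

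For the longest path lengths, the inequality $L'(v)\ge L(v)$ for all $v$ is immediate since $\mathcal{E}_2'\supseteq\mathcal{E}_2$. For the reverse inequality I would process the vertices in a topological order of $\mathcal{G}'$ (which exists by the first part) and induct, using the recurrence $L'(v)=\max\{L'(u)+1 : u\in Pred_{\mathcal{G}'}(v)\}$ (with value $0$ when the set is empty). The key observation is that $v_2$ is the \emph{only} vertex whose in-neighborhood changes, gaining precisely the new predecessor $v_1$. So for every $v\neq v_2$ the predecessor set is the same as in $\mathcal{G}$, and the induction hypothesis gives $L'(v)=L(v)$. For $v=v_2$ we get $L'(v_2)=\max\{L(v_2),\,L'(v_1)+1\}=\max\{L(v_2),\,L(v_1)+1\}=L(v_2)$, where the middle step uses $L'(v_1)=L(v_1)$ (valid because $v_1$ precedes $v_2$ in the topological order, as $v_1\to v_2$ in $\mathcal{G}'$) and the last step uses $Sl(e')\ge 0$. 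Equivalently, one can avoid induction by decomposing any longest path of $\mathcal{G}'$ ending at $v$ that uses $e'$ as $(\text{a }\mathcal{G}\text{-path to }v_1)\cdot e'\cdot(\text{a }\mathcal{G}\text{-path }v_2\to v)$, bounding its length by $L(v_1)+1+L(v_2,v)\le L(v_2)+L(v_2,v)\le L(v)$; I would mention this as an equivalent route but present the inductive version since it is shorter.

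There is no deep obstacle here; the proof is essentially bookkeeping. The only points that need care are (i) noting in the acyclicity argument that the returning path $P$ has length at least one and that $L$ increases strictly along edges, and (ii) in the second part, correctly isolating $v_2$ as the unique vertex whose predecessor set is modified, so that the induction reduces to a single special case that is closed out exactly by the slack inequality $L(v_1)+1\le L(v_2)$.
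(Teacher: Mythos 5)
Your proof is correct and follows essentially the same route as the paper's: acyclicity via the strict increase of $L$ along edges (so a cycle through $e'$ would force $L(v_1)\ge L(v_2)+1$, contradicting $Sl(e')\ge 0$), and preservation of the longest path lengths by processing a topological order of $\mathcal{G}'$ and observing that only $v_2$ gains a predecessor, with the slack inequality closing that single case. Your phrasing of the $v_2$ step as $L'(v_2)=\max\{L(v_2),L(v_1)+1\}=L(v_2)$ is in fact a slightly cleaner rendering of the paper's argument.
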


\begin{proof}
By definition, $L(e)\geq 0$ for all $e\in \mathcal{E}_2$. Since $L(e')\geq 0$ as well, so we can state that $L(e)\geq 0$ for all $e\in \mathcal{E}_2'$. 

First, we prove that $\mathcal{G}'$ is acyclic by contradiction. We assume that there is a cycle $v_1 \to v_2 \to \cdots \to v_n \to v_1$. Since for $e_i=(v_{i},v_{i+1}), Sl(e)=L(v_{i+1})-L(v_{i})-1\geq 0$, we have $L(v_{i+1})>L(v_i)$. So, $L(v_n)>L(v_1)$. But there is also an edge from $v_n$ to $v_1$, so we can also get $L(v_1)>L(v_n)$, leading to the contradiction. Therefore, $\mathcal{G}'$ must be acyclic.

Since $\mathcal{G}'$ is also a direct acyclic graph, we assume we obtain a topological ordering of it in the \Cref{algo: FLPL}, as $Ordered(\mathcal{G}’)$. Since $\mathcal{G}'$ has one more edge than $\mathcal{G}$, $Ordered(\mathcal{G}')$ is also a topological ordering of $\mathcal{G}$ when this edge is removed. 

Now, if we run \Cref{algo: FLPL} for both $\mathcal{G}'$ and $\mathcal{G}$ with this ordering, every vertex $v$ ordered before $v_2$ in $Ordered(\mathcal{G}’)$ must have $L(v)=L'(v)$ because its predecessors and its predecessors' forward longest path lengths are not changed. 

Then, we validate $L'(v_2)=L(v_2)$. First, there exists some predecessor $u$ of $v_2$ such that $L(v_2)=L(u)+1$. Notably, $u$ and $v_1$ must be vertices ordered before $v_2$ in $Ordered(\mathcal{G}’)$ because they both have an edge pointing to $v_2$. Thus, $L'(u)=L(u)$ and $L'(v_1)=L(v_1)$. Since $Sl(e')=L(v_2)-L(v_1)-1\geq 0$, $L'(u)+1=L(u)+1=L(v_2)\geq L(v_1)+1=L'(v_1)+1$. Therefore, $L'(u)+1$ is still the maximum in \Cref{line: update FL} of \Cref{algo: FLPL}, even though a new edge $e'$ is added. So, $L'(v_2)=L'(u)+1=L(u)+1=L(v_2)$.

Finally, we can deduce every vertex $v$ ordered after $v_2$ in $Ordered(\mathcal{G}’)$ must also have $L(v)=L'(v)$ because its predecessors and its predecessors' forward longest path lengths remain the same. Therefore, $L'(v)=L(v),\forall v\in V$.
\end{proof}

\begin{corollary}
The execution cost of $\mathcal{G}'$ is the same as $\mathcal{G}$.
\end{corollary}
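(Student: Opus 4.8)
The plan is to read this off directly from \Cref{theorem: TPG cost} together with the forward‑longest‑path identity already supplied by \Cref{lemma: adding_edge}, so no new combinatorial work is needed. Concretely, I would chain two applications of the cost formula through the invariance of the FLPL values.

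First I would observe that \Cref{lemma: adding_edge} does double duty here: besides the FLPL identity, it certifies that $\mathcal{G}'$ is acyclic, which is exactly the hypothesis under which the execution cost of a TPG is defined and under which \Cref{theorem: TPG cost} is applicable. Having secured that, I would apply \Cref{theorem: TPG cost} to $\mathcal{G}'$: its execution cost equals $\sum_{i\in\mathcal{I}} L'(g^i)$, where $L'$ is the forward longest path length function on $\mathcal{G}'$.

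Next I would invoke the second conclusion of \Cref{lemma: adding_edge}, namely $L'(v)=L(v)$ for every $v\in\mathcal{V}$, and specialize it to the goal vertices $g^i$, so that $\sum_{i\in\mathcal{I}} L'(g^i)=\sum_{i\in\mathcal{I}} L(g^i)$. Finally, applying \Cref{theorem: TPG cost} once more, this time to the original acyclic TPG $\mathcal{G}$, identifies $\sum_{i\in\mathcal{I}} L(g^i)$ with the execution cost of $\mathcal{G}$. Chaining the three equalities gives that the execution cost of $\mathcal{G}'$ equals that of $\mathcal{G}$.

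I do not expect a genuine obstacle: the statement is an immediate corollary, and the only point that requires a moment's care is purely definitional—the execution cost appearing in \Cref{theorem: TPG cost} is only meaningful for acyclic TPGs, so the acyclicity half of \Cref{lemma: adding_edge} must be cited before the cost formula is used for $\mathcal{G}'$. Beyond that, the argument is a two‑line substitution.
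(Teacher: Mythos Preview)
Your proposal is correct and follows essentially the same approach as the paper's own proof: both invoke \Cref{theorem: TPG cost} to express the execution cost as $\sum_{i\in\mathcal{I}} L(g^i)$ and then use the FLPL invariance from \Cref{lemma: adding_edge} to conclude. Your version is slightly more careful in explicitly noting that the acyclicity conclusion of \Cref{lemma: adding_edge} is needed before \Cref{theorem: TPG cost} can be applied to $\mathcal{G}'$, but this is a matter of exposition rather than a different argument.
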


\begin{proof}
Based on \Cref{theorem: TPG cost}, the execution cost of a TPG is the sum of the longest path lengths at all agents' goals. Since the longest path lengths remain the same, the execution costs are also the same.
\end{proof}

Then, we can prove \Cref{prop: termination_of_GSES} easily by fixing switchable edges one by one. Each fixing adds a new edge to the acyclic TPG but does not introduce any cycle or extra cost.

\label{subsection: appendix_adding_edge}
\subsection{Greedy Matching Algorithm For EWMVC}
\label{subsection: appendix_ewmvc}

This section describes the greedy matching algorithm for the Edge-Weighted Minimum Vertex Cover (EWMVC) problem in \Cref{subsection: heuristics}. We have a weighted fully-connected undirected graph $G_{D}=(V_{D},E_{D}, W_{D})$, where each vertex $u_i\in V_{D}$ represents an agent, $E_D$ are edges and $W_D$ are edges' weights. Specifically, we set the weight of each edge $(u_i,u_j)\in E_{D}$ to be the pairwise cost increase, $\Delta cost(g^{i},g^{j})$ obtained from our stronger heuristic reasoning. Our target is to assign a cost increase $x_i$ to each vertex $u_i$ such that $x_i+x_j\geq \Delta cost(g^{i},g^{j})$ so that the overall cost increase $\sum_i x_i$ is minimized.

The pseudocode is illustrated in \Cref{algo: GM}. Briefly speaking, the algorithm always selects the max-weighted edge at each iteration, whose two endpoints have never been matched. Then, the weight is added to the overall weight, and the two endpoints are marked as matched. Since each iteration checks all the edge weights and we assume it is a fully connected graph, each iteration takes $O(|V_D|^2)$ time. There could be, at most, $|V_D|$ iterations. The worst-case complexity of this algorithm is $O(|V_D|^3)$.

\begin{algorithm}[htb]
\caption{Greedy Matching for EWMVC}
\label{algo: GM}
\begin{algorithmic}[1]
\Function{GreedyMatching}{$G_D=(V_D,E_D,W_D)$}
    \State $matched[i]\leftarrow False$ for $\forall i\in I$
    \State $sumW \leftarrow 0$
    \While{True}
        \State $maxW\leftarrow 0,p\leftarrow-1,q\leftarrow-1$
        \For {$(u_i,u_j)\in E_D$}
            \If{$matched[i] \lor matched[j]$}
                continue
            \EndIf
            \If{$W_D[i,j]>maxW$}
                \State $maxW\leftarrow W[i,j],p\leftarrow i,q\leftarrow j$
            \EndIf
        \EndFor
        \If{$maxW=0$}
            break
        \EndIf
        \State $sumW\leftarrow sumW+maxW$
        \State $matched[p]\leftarrow True,matched[q]\leftarrow True$
    \EndWhile
    \State \Return $sumW$
\EndFunction
\end{algorithmic}
\end{algorithm}

\subsection{Experiments}
\label{subsection: appendix_experiments}

This section explains more details about our benchmark, code implementation, and how to reproduce the experiments. We also illustrate experiment results with different delay probabilities. To be more self-contained, this section may overlap with the main text.

\subsubsection{Benchmark}
The benchmark generation process is described in the main text. Notably, our benchmark evaluates with as twice many agents as the benchmark used in the GSES paper \citep{FengICAPS24}. Therefore, We use $k$-robust PBS rather than $k$-robust CBS to obtain the initial MAPF plan since the former is much faster than the latter when solving instances with more agents. We run $k$-robust PBS on 25 instances with evenly distributed starts and goals from the MovingAI benchmark \footnote{https://www.movingai.com/benchmarks/mapf/index.html} and set the time limit of $k$-robust PBS to be $6$ minutes for each MAPF instance. Each solved instance will generate $6$ delay scenarios by simulation. We report the number of solved instances for each map and each agent number in \Cref{tab: solved_instances}.

% TODO add statistics table

\begin{table}[tb]
    \caption{The number of solved Instances for each map and each agent number.}
    \centering
    \resizebox{0.41\textwidth}{!}{
    \begin{tabular}{cccccc}
        \toprule
        \toprule
        \multicolumn{6}{c}{Random-32-32-10}\\
        \midrule
        \#agents  & 60 & 70 & 80 & 90  & 100 \\
        \#solved instances & 24 & 21 & 25 & 22 & 16 \\
        \midrule
        \midrule
        \multicolumn{6}{c}{Warehouse-10-20-10-2-1}\\
        \midrule
        \#agents & 110 & 120 & 130 & 140 & 150 \\
        \#solved instances & 24 & 23 & 22 & 23 & 20 \\
        \midrule
        \midrule
        \multicolumn{6}{c}{Lack303d}\\
        \midrule
        \#agents & 41 & 49 & 57 & 65 & 73 \\
        \#solved instances & 25 & 25 & 25 & 25 & 25 \\
        \midrule
        \midrule
        \multicolumn{6}{c}{Paris\_1\_256}\\
        \midrule
        \#agents & 120 & 140 & 160 & 180 & 200 \\
        \#solved instances & 22 & 20 & 20 & 18 & 18 \\
        \bottomrule
        \bottomrule
    \end{tabular}
    }
    \label{tab: solved_instances}
\end{table}

\subsubsection{Code Implementation}
In this paper, we compare with other three optimal STPG optimization algorithms. We incorporate their open-source implementation into our code.

\begin{enumerate}
    \item GSES: \url{https://github.com/YinggggFeng/Multi-Agent-via-Switchable-ADG}
    \item MILP: \url{https://github.com/alexberndt/sadg-controller}
    \item CBS-D: \url{https:/github.com/aria-systems-group/Delay-Robust-MAPF}
\end{enumerate}

All codes are implemented in C++ except MILP, which uses a Python interface to an open-sourced brand-and-bound C++ solver, the COIN-OR Branch-and-Cut solver, at \url{https://github.com/coin-or/Cbc}.

Notably, the code can handle non-uniform edge costs, even though we assume all the costs are $1$ in the main text.

\subsubsection{Reproducibility}
Our code and data are publicly available at https://github.com/DiligentPanda/STPG.git. Results can be reproduced by running the experiment scripts in the Linux system (e.g., Ubuntu). The readme file in the code repository explains the details.

\subsubsection{Experiment Results}
We organize all the experiment data in \Cref{tab:all_experiments}. Success Rate, Incremental Analysis, and Ablation Study with delay probability $p=0.01$ have been discussed in the main text. The conclusions for other delay probabilities are consistent with the ones in the main text. 

We further include plots of the mean search time for different maps and agent numbers, which reflects the superiority of IGSES over other optimal algorithms from another view. Of course, with more agents and larger delay probabilities, the search clearly takes more time.

\begin{table}[tb]
    \caption{Reference to all experiment data.}
    \label{tab:all_experiments}
    \centering
    \resizebox{0.48\textwidth}{!}{
        \begin{tabular}{cccc}
            \toprule
            \toprule
           Delay Probability  & $p=0.002$ & $p=0.01$  & $p=0.03$ \\
           \midrule
            Success Rate  & \Cref{fig:succ_rates_p002} &  \Cref{fig:succ_rates_p01} & \Cref{fig:succ_rates_p03} \\
            Search Time & \Cref{fig:search_time_p002} & \Cref{fig:search_time_p01} & \Cref{fig:search_time_p03} \\
            Incremental Analysis & \Cref{tab:incremental_p002} & \Cref{tab:incremental_p01}  &  \Cref{tab:incremental_p03} \\
             Ablation Study &
             \Cref{fig: ablation_p002} & 
             \Cref{fig: ablation_p01} &
             \Cref{fig: ablation_p03} \\
            \bottomrule
            \bottomrule
        \end{tabular}
    }
\end{table}

\begin{figure}[tb]
    \setlength{\abovecaptionskip}{4pt}  % Adjust space above caption
    \setlength{\belowcaptionskip}{8pt}  % Adjust space below caption

    \centering
    \begin{subfigure}[b]{0.45\textwidth}
      \centering
      \includegraphics[width=1\textwidth]{figures/comparison/success_rates_legend.pdf}
      % \caption{Warehouse}
      % \label{fig:random_corner}
    \end{subfigure}%
    \hfill

    \begin{subfigure}[b]{0.225\textwidth}
      \centering
      \includegraphics[width=1\textwidth]{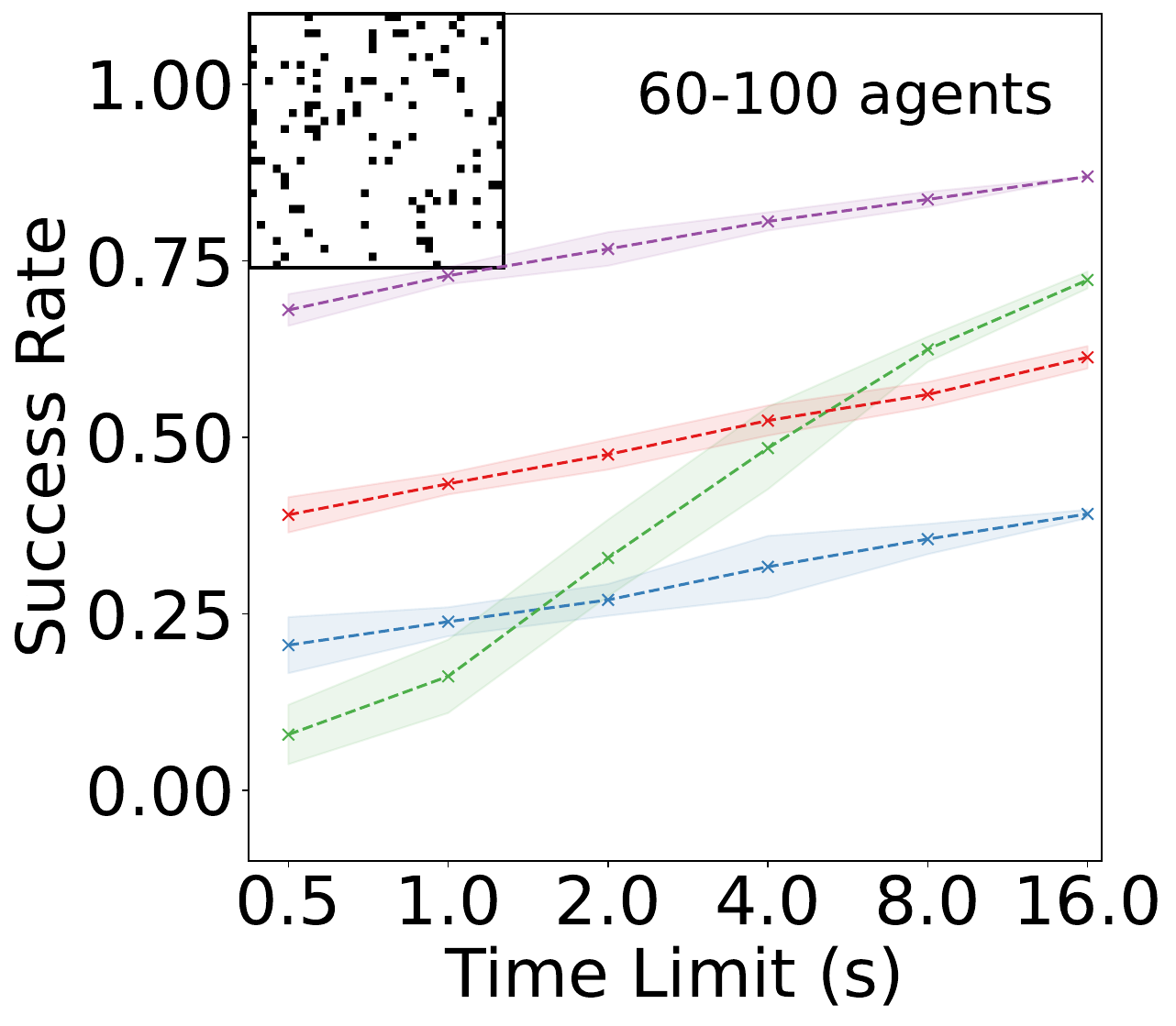}
      \caption{Random-32-32-10}
      % \label{fig:random_corner}
    \end{subfigure}%
    \hfill
    %\par\bigskip 
    \begin{subfigure}[b]{0.225\textwidth}
      \centering
      \includegraphics[width=1\textwidth]{figures/comparison/success_rates_Warehouse_p01.pdf}
      \caption{Warehouse-10-20-10-2-1}
      % \label{fig:random_corner}
    \end{subfigure}%
    \hfill

    \setlength{\belowcaptionskip}{4pt}  % Adjust space below caption
    
    %\par\bigskip 
    \begin{subfigure}[b]{0.225\textwidth}
      \centering
      \includegraphics[width=1\textwidth]{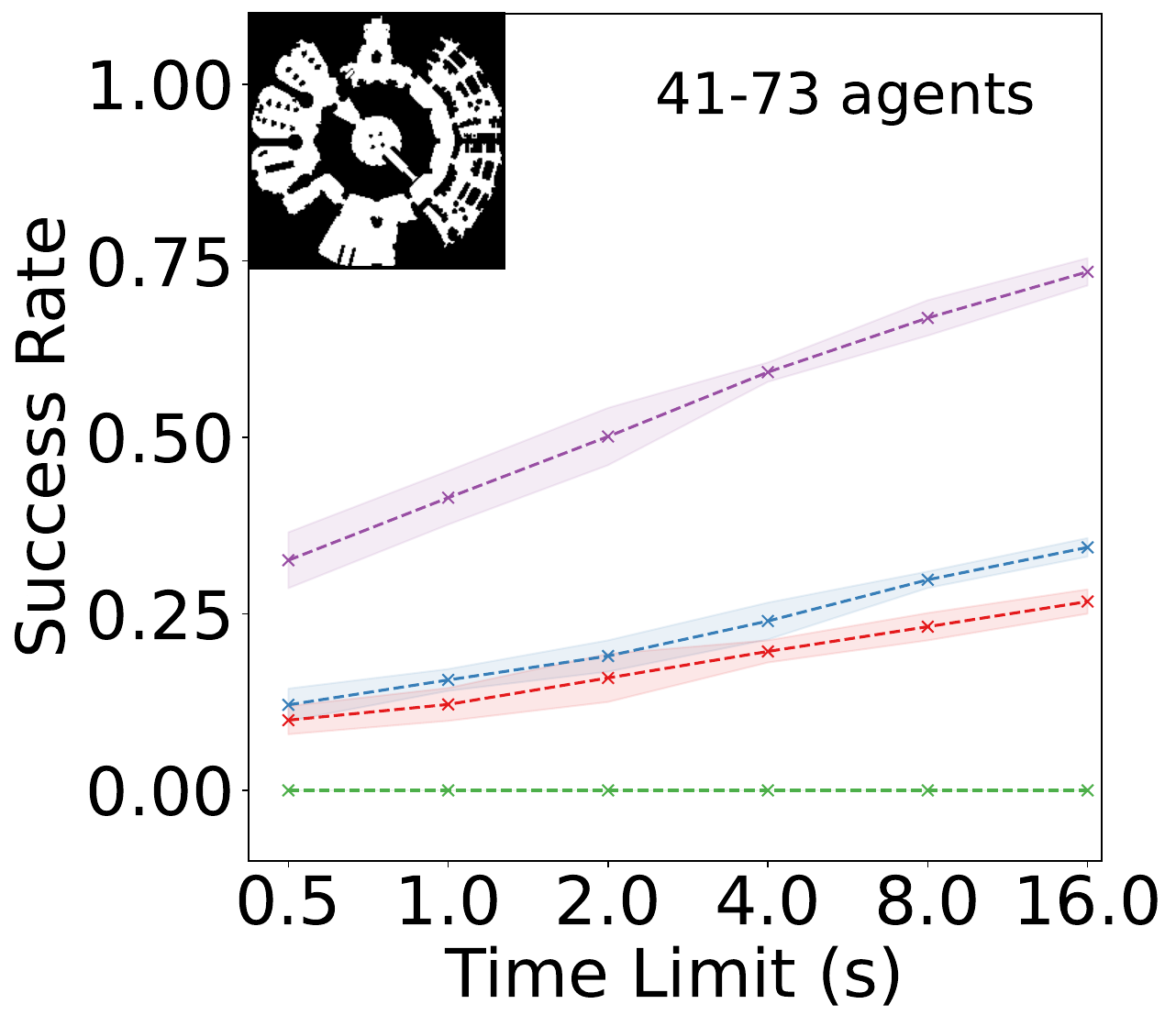}
      \caption{Lak303d}
      % \label{fig:random_corner}
    \end{subfigure}%
    \hfill
    %\par\bigskip 
    \begin{subfigure}[b]{0.225\textwidth}
      \centering
      \includegraphics[width=1\textwidth]{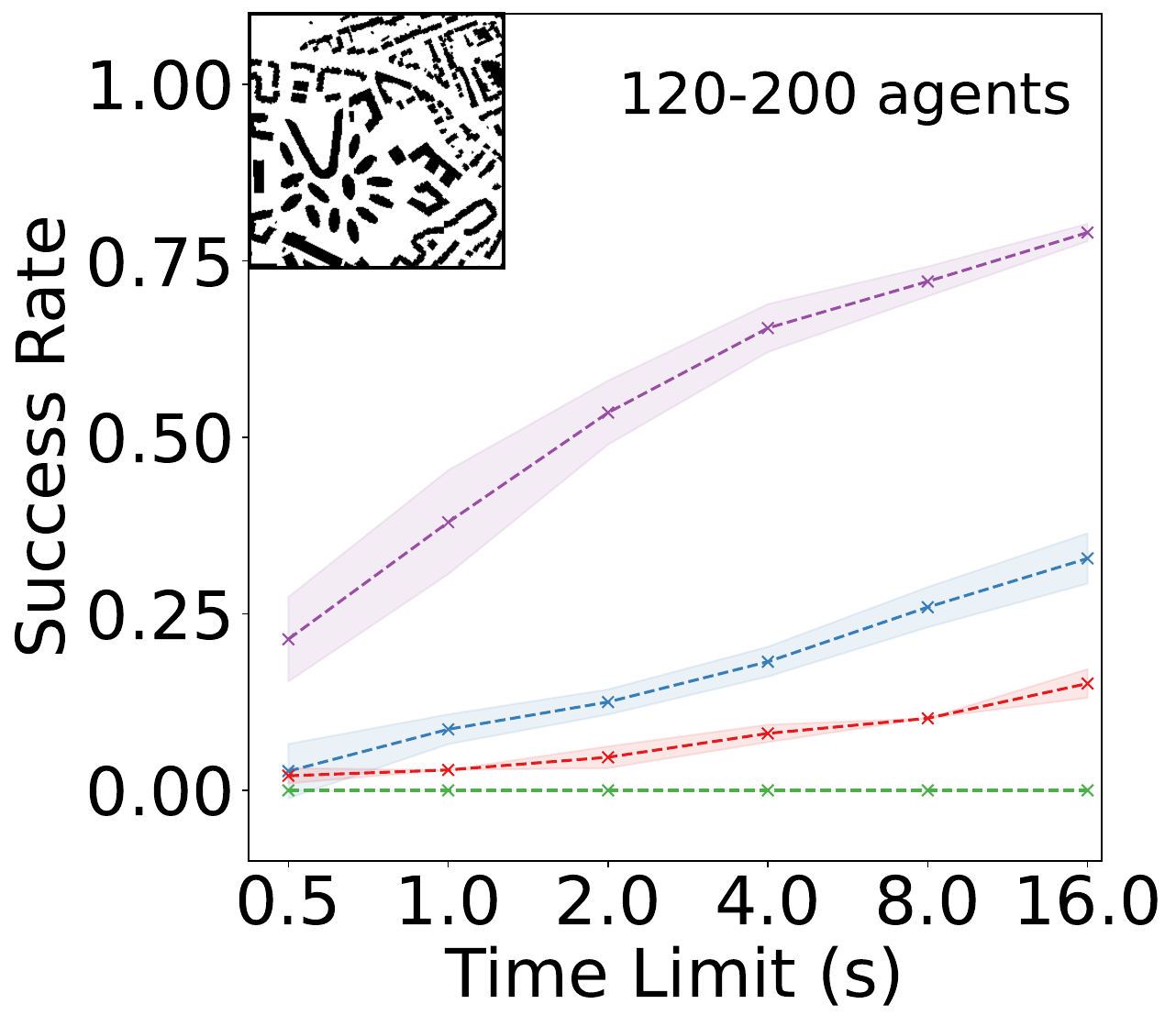}
      \caption{Paris\_1\_256}
      % \label{fig:random_corner}
    \end{subfigure}%
    \hfill

    % \begin{subfigure}[b]{0.225\textwidth}
    %   \centering
    %   \includegraphics[width=1\textwidth]{figures/comparison/execution_times_Random_p01.pdf}
    %   \caption{Random}
    %   % \label{fig:random_corner}
    % \end{subfigure}%
    % \hfill
    % %\par\bigskip 
    % \begin{subfigure}[b]{0.225\textwidth}
    %   \centering
    %   \includegraphics[width=1\textwidth]{figures/comparison/execution_times_Warehouse_p01.pdf}
    %   \caption{Warehouse}
    %   % \label{fig:random_corner}
    % \end{subfigure}%
    % \hfill
    % %\par\bigskip 
    % \begin{subfigure}[b]{0.225\textwidth}
    %   \centering
    %   \includegraphics[width=1\textwidth]{figures/comparison/execution_times_Game_p01.pdf}
    %   \caption{Game}
    %   % \label{fig:random_corner}
    % \end{subfigure}%
    % \hfill
    % %\par\bigskip 
    % \begin{subfigure}[b]{0.225\textwidth}
    %   \centering
    %   \includegraphics[width=1\textwidth]{figures/comparison/execution_times_City_p01.pdf}
    %   \caption{City}
    %   % \label{fig:random_corner}
    % \end{subfigure}%
    % \hfill
    % %\par\bigskip 
    \caption{Success rates of IGSES and other baselines on four maps with delay probability $p=0.002$. The shading areas indicate the standard deviations of different runs. They are multiplied by 10 for illustration. In each figure, the top-left shows the corresponding map, and the top-right corner shows the range of agent numbers.}
    \label{fig:succ_rates_p002}

    \centering
    \begin{subfigure}[b]{0.45\textwidth}
      \centering
      \includegraphics[width=1\textwidth]{figures/comparison/success_rates_legend.pdf}
      % \caption{Warehouse}
      % \label{fig:random_corner}
    \end{subfigure}%
    \hfill

    % \begin{subfigure}[b]{0.225\textwidth}
    %   \centering
    %   \includegraphics[width=1\textwidth]{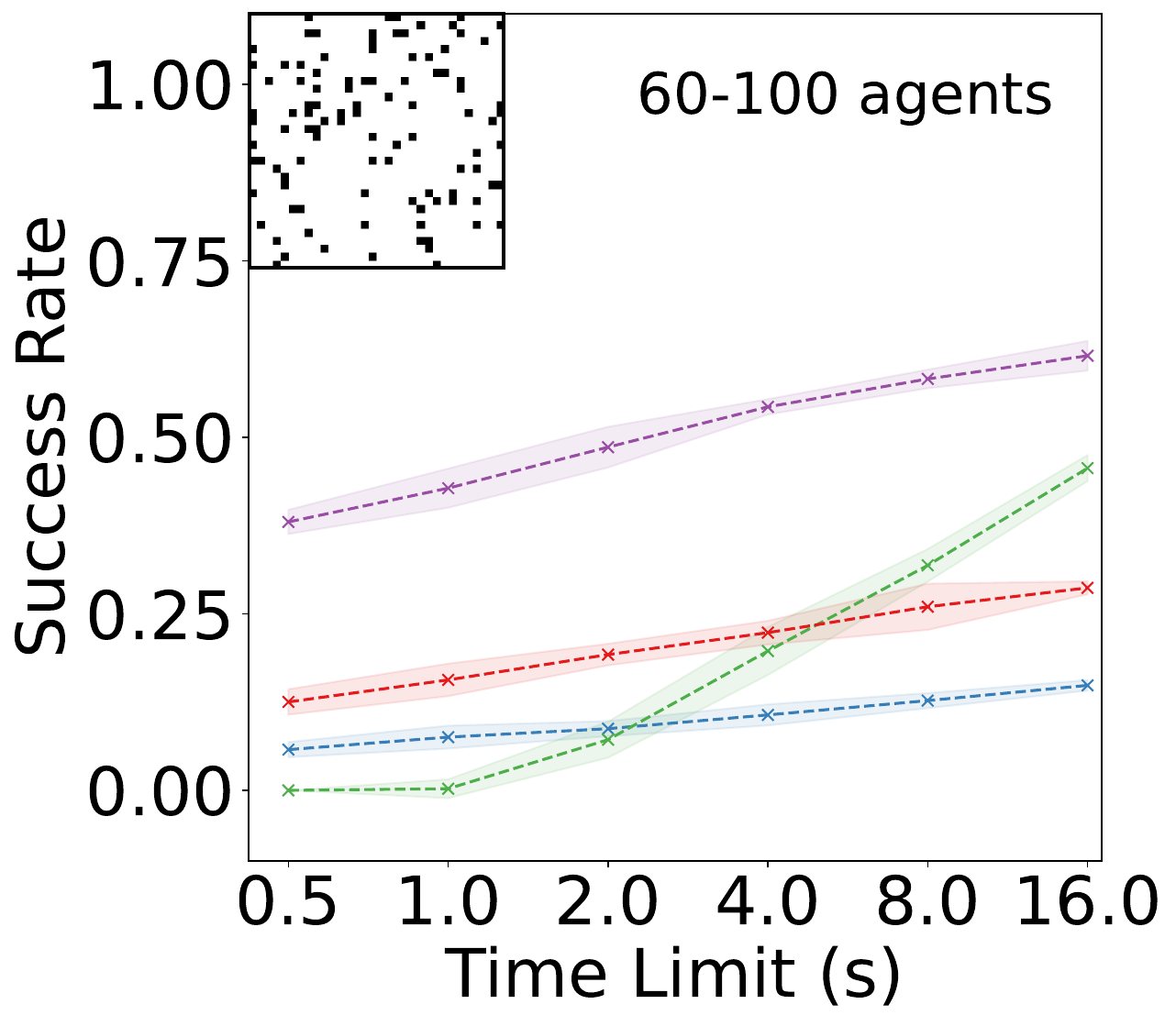}
    %   \caption{Random-32-32-10}
    %   % \label{fig:random_corner}
    % \end{subfigure}%
    % \hfill
    % %\par\bigskip 
    % \begin{subfigure}[b]{0.225\textwidth}
    %   \centering
    %   \includegraphics[width=1\textwidth]{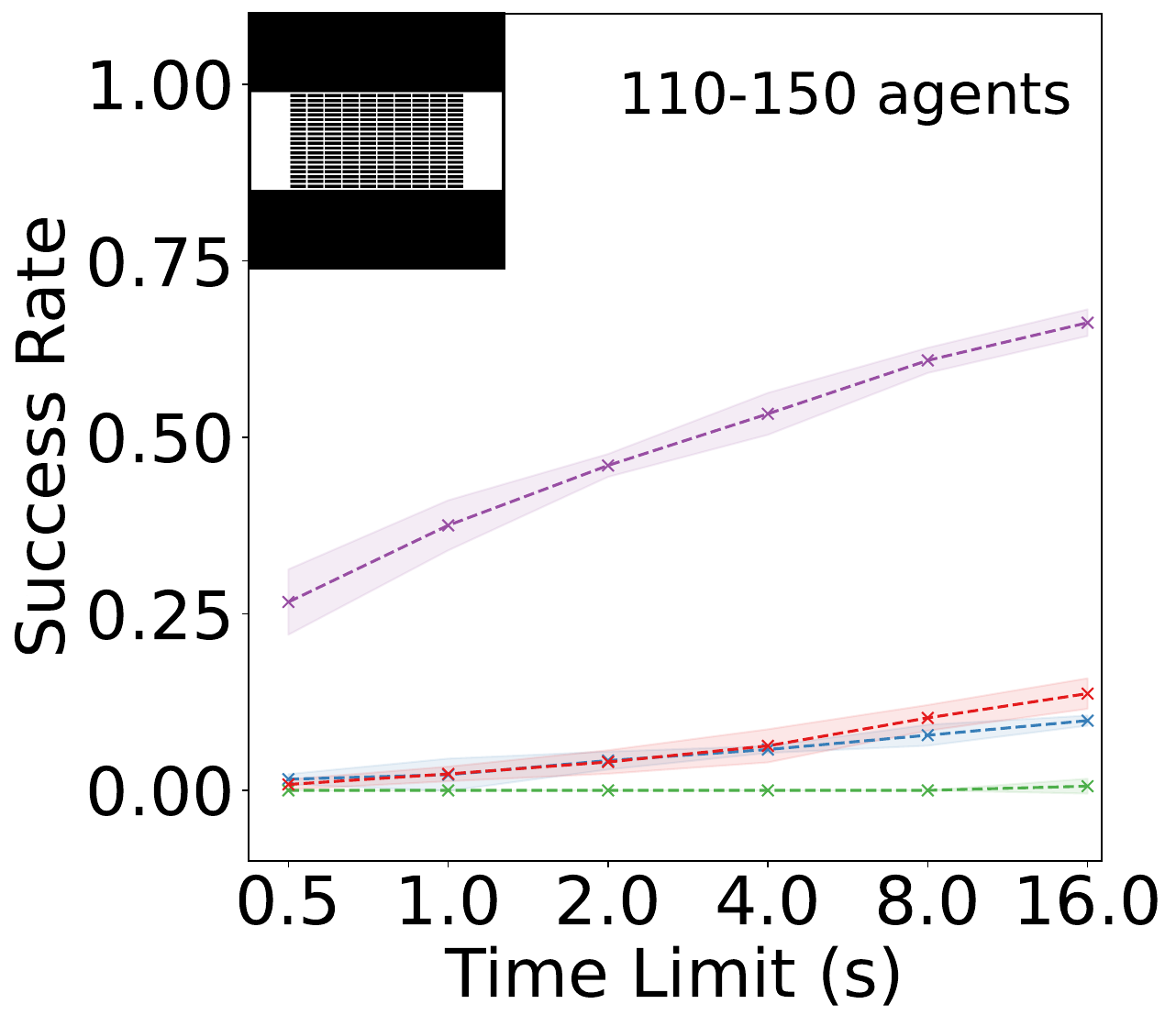}
    %   \caption{Warehouse-10-20-10-2-1}
    %   % \label{fig:random_corner}
    % \end{subfigure}%
    % \hfill

    % \setlength{\belowcaptionskip}{4pt}  % Adjust space below caption
    
    % %\par\bigskip 
    % \begin{subfigure}[b]{0.225\textwidth}
    %   \centering
    %   \includegraphics[width=1\textwidth]{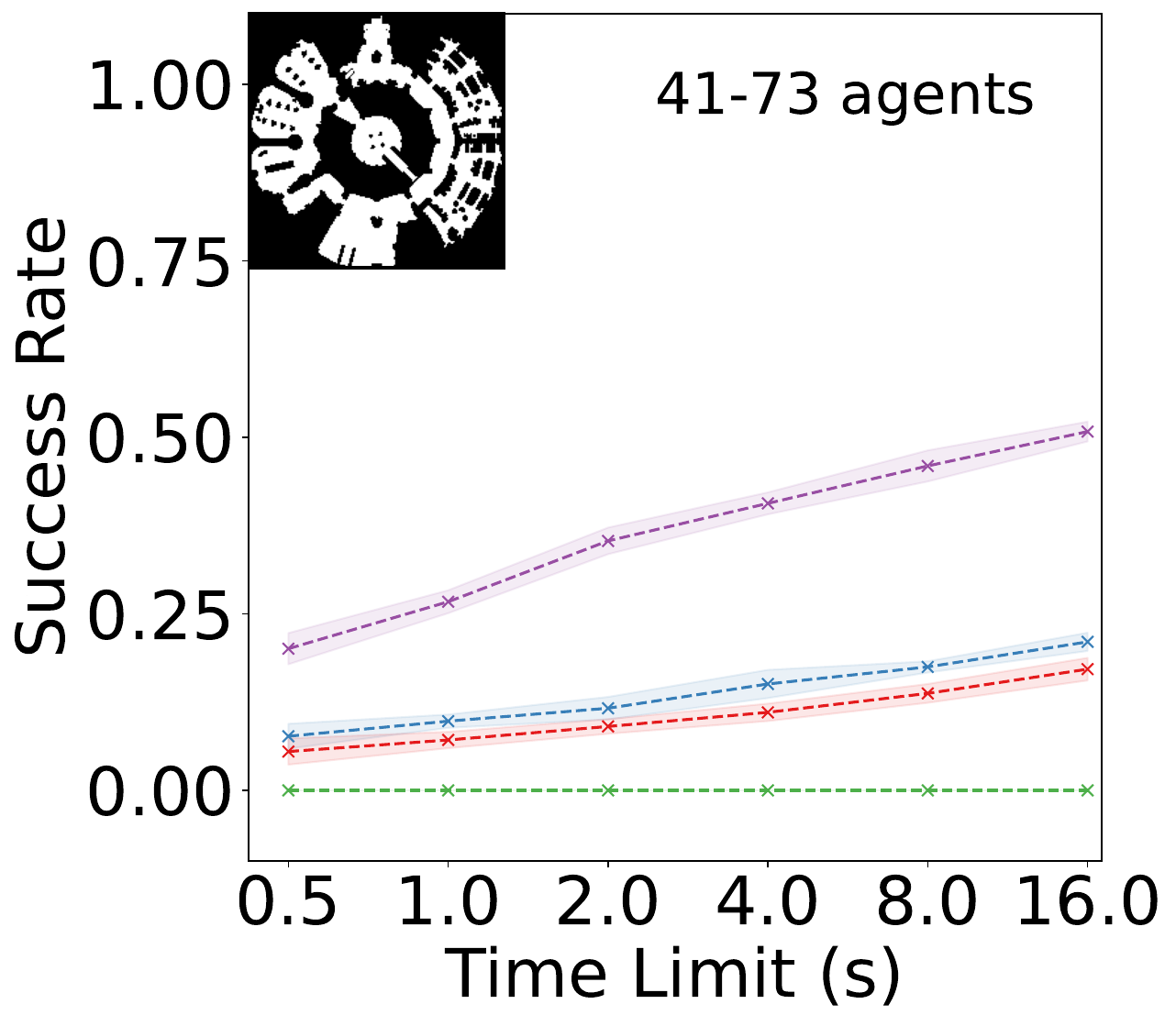}
    %   \caption{Lak303d}
    %   % \label{fig:random_corner}
    % \end{subfigure}%
    % \hfill
    % %\par\bigskip 
    % \begin{subfigure}[b]{0.225\textwidth}
    %   \centering
    %   \includegraphics[width=1\textwidth]{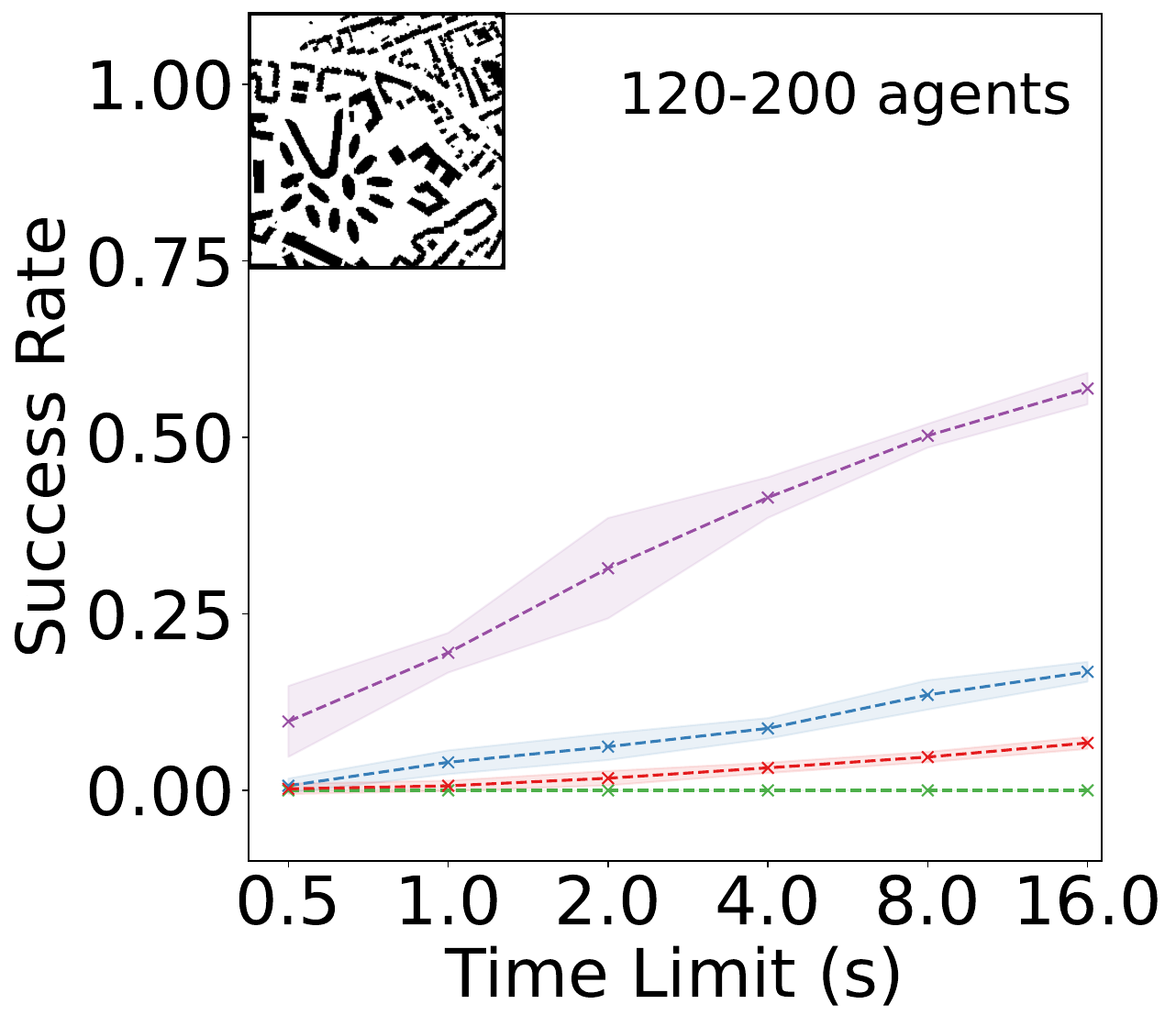}
    %   \caption{Paris\_1\_256}
    %   % \label{fig:random_corner}
    % \end{subfigure}%
    % \hfill

    \begin{subfigure}[b]{0.205\textwidth}
      \centering
      \includegraphics[width=1\textwidth]{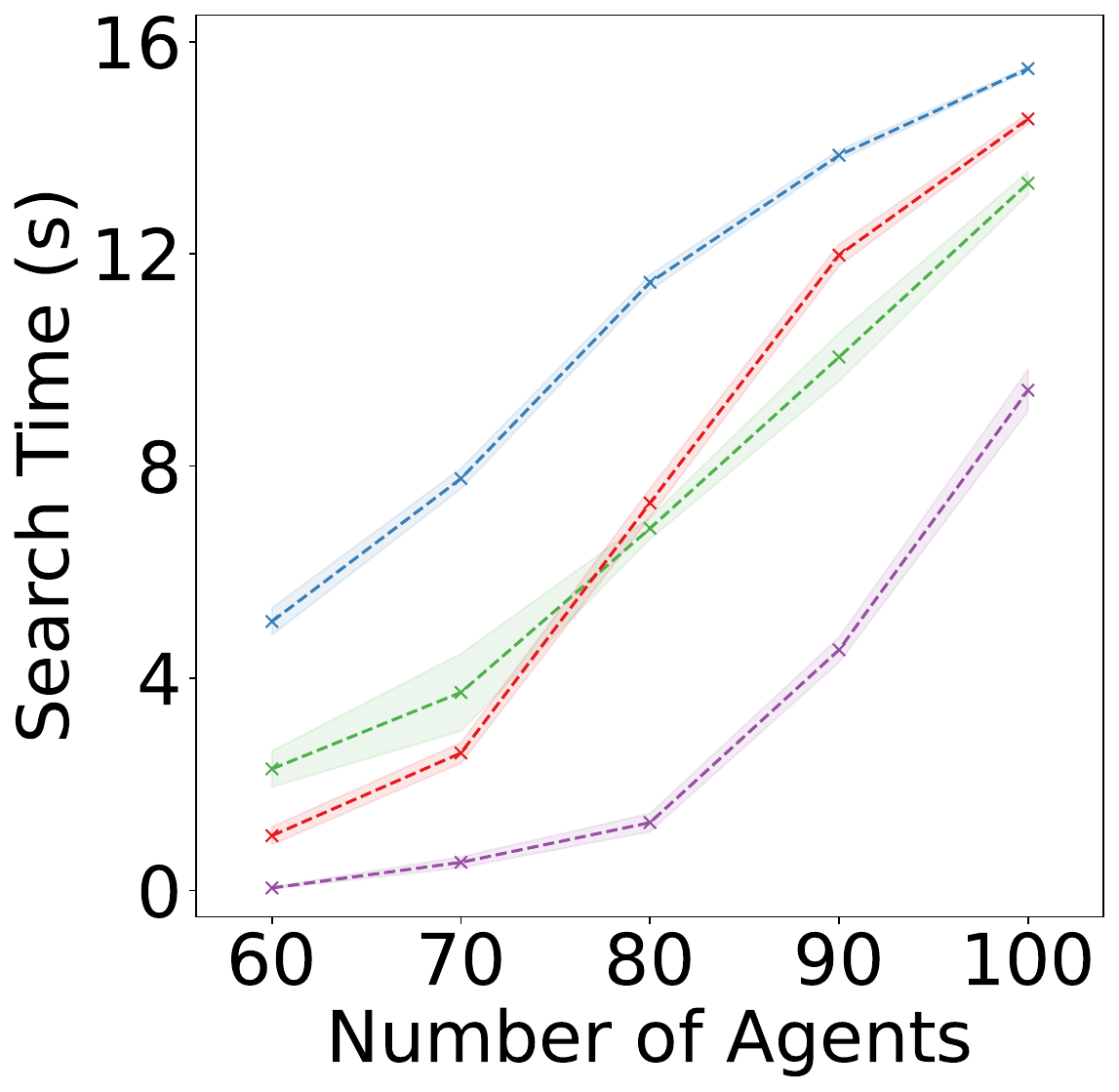}
      \caption{Random-32-32-10}
      % \label{fig:random_corner}
    \end{subfigure}%
    \hfill
    %\par\bigskip 
    \begin{subfigure}[b]{0.205\textwidth}
      \centering
      \includegraphics[width=1\textwidth]{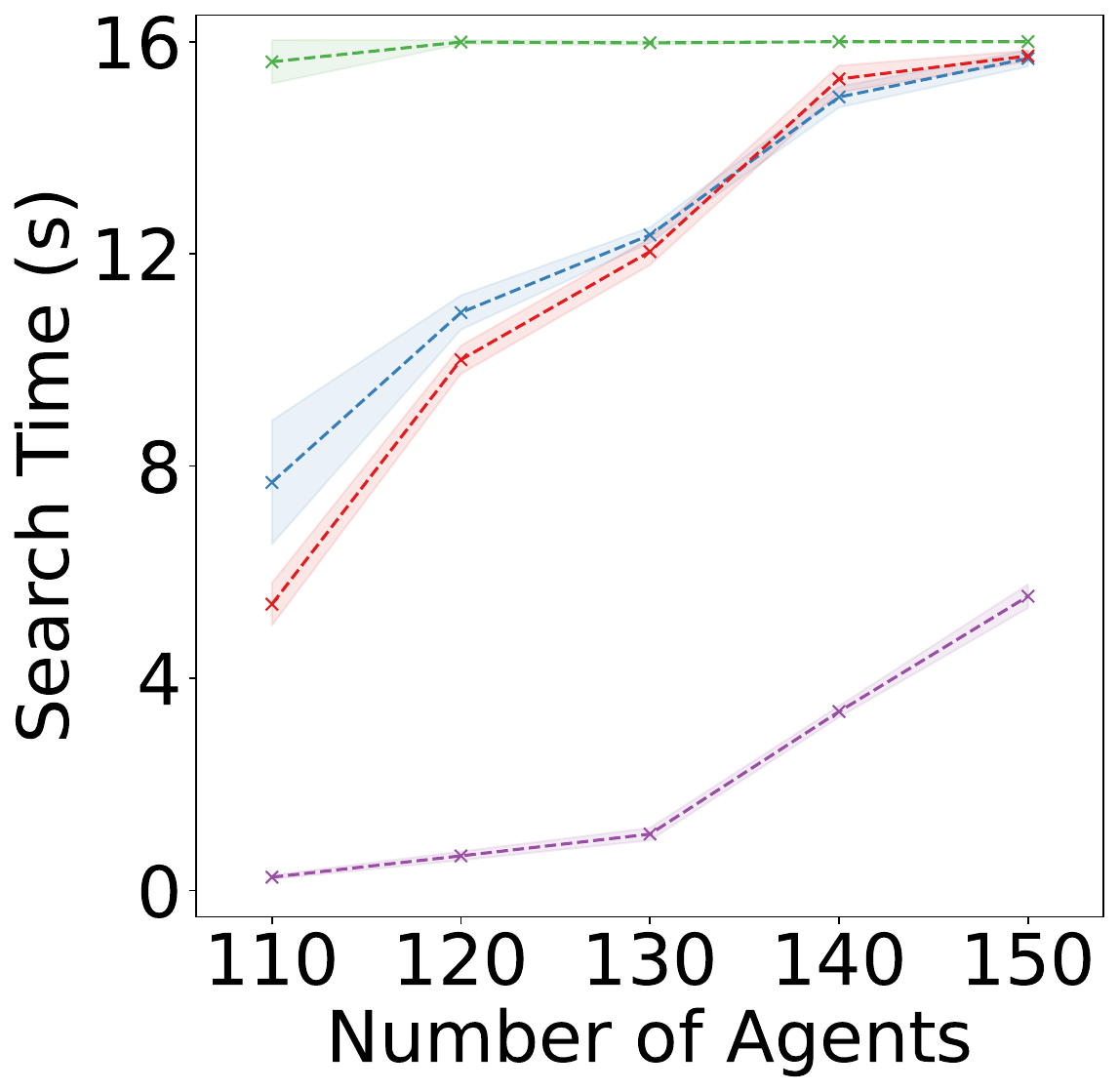}
      \caption{Warehouse-10-20-10-2-1}
      % \label{fig:random_corner}
    \end{subfigure}%
    \hfill
    %\par\bigskip 
    \begin{subfigure}[b]{0.205\textwidth}
      \centering
      \includegraphics[width=1\textwidth]{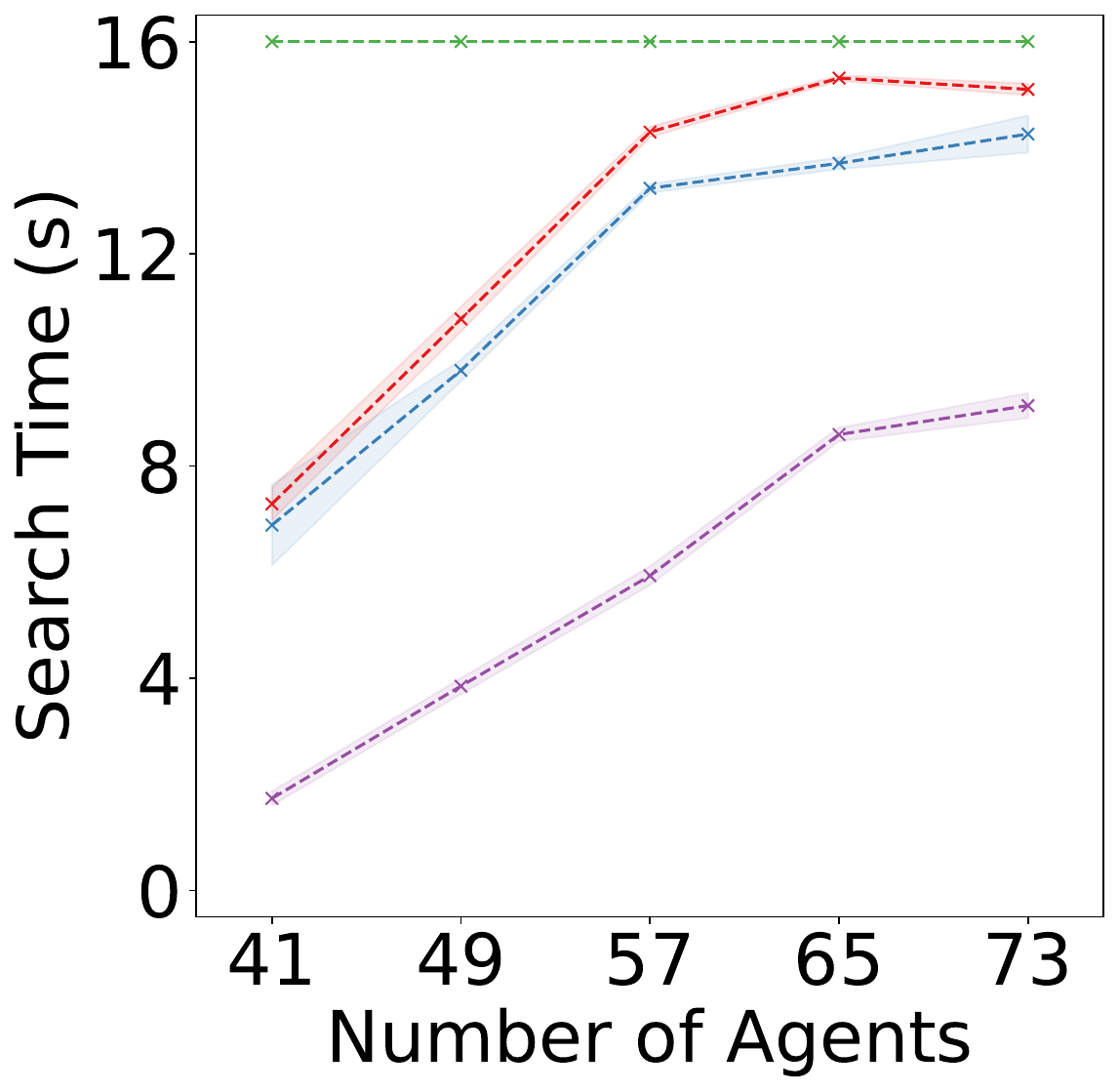}
      \caption{Lak303d}
      % \label{fig:random_corner}
    \end{subfigure}%
    \hfill
    %\par\bigskip 
    \begin{subfigure}[b]{0.205\textwidth}
      \centering
      \includegraphics[width=1\textwidth]{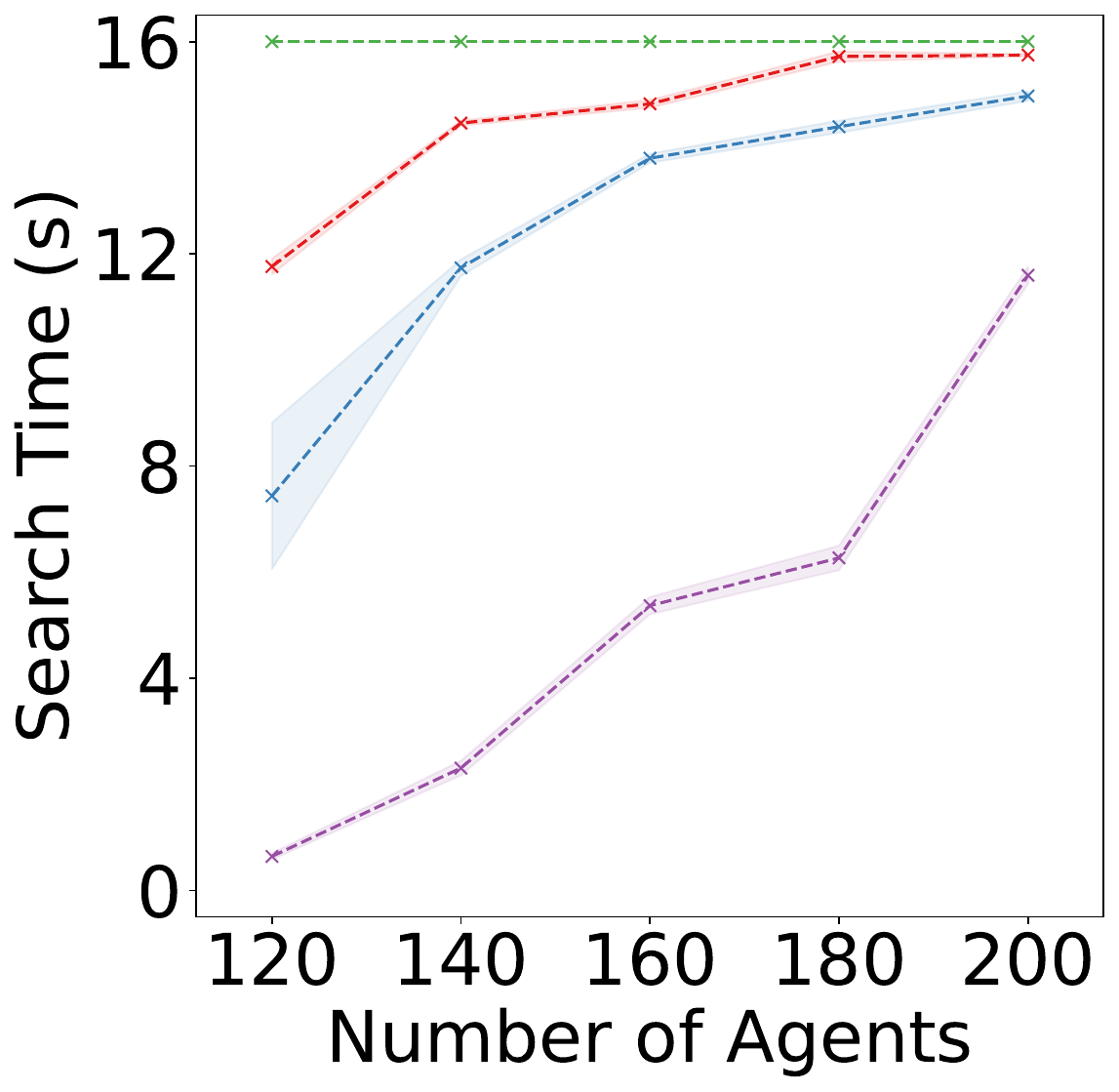}
      \caption{Paris\_1\_256}
      % \label{fig:random_corner}
    \end{subfigure}%
    \hfill
    %\par\bigskip 
    \caption{Search Time of IGSES and other baselines on four maps with delay probability $p=0.002$. The search time of unsolved instances is set to the time limit of $16$ seconds. The shading areas indicate the standard deviations of different runs. They are multiplied by 10 for illustration.}
    \label{fig:search_time_p002}
\end{figure}

\begin{figure}[tbh]
    \setlength{\abovecaptionskip}{4pt}  % Adjust space above caption
    \setlength{\belowcaptionskip}{8pt}  % Adjust space below caption

    \centering
    \begin{subfigure}[b]{0.45\textwidth}
      \centering
      \includegraphics[width=1\textwidth]{figures/comparison/success_rates_legend.pdf}
      % \caption{Warehouse}
      % \label{fig:random_corner}
    \end{subfigure}%
    \hfill

    \begin{subfigure}[b]{0.225\textwidth}
      \centering
      \includegraphics[width=1\textwidth]{figures/comparison/success_rates_Random_p01.pdf}
      \caption{Random-32-32-10}
      % \label{fig:random_corner}
    \end{subfigure}%
    \hfill
    %\par\bigskip 
    \begin{subfigure}[b]{0.225\textwidth}
      \centering
      \includegraphics[width=1\textwidth]{figures/comparison/success_rates_Warehouse_p01.pdf}
      \caption{Warehouse-10-20-10-2-1}
      % \label{fig:random_corner}
    \end{subfigure}%
    \hfill

    \setlength{\belowcaptionskip}{4pt}  % Adjust space below caption
    
    %\par\bigskip 
    \begin{subfigure}[b]{0.225\textwidth}
      \centering
      \includegraphics[width=1\textwidth]{figures/comparison/success_rates_Game_p01.pdf}
      \caption{Lak303d}
      % \label{fig:random_corner}
    \end{subfigure}%
    \hfill
    %\par\bigskip 
    \begin{subfigure}[b]{0.225\textwidth}
      \centering
      \includegraphics[width=1\textwidth]{figures/comparison/success_rates_City_p01.pdf}
      \caption{Paris\_1\_256}
      % \label{fig:random_corner}
    \end{subfigure}%
    \hfill

    % \begin{subfigure}[b]{0.225\textwidth}
    %   \centering
    %   \includegraphics[width=1\textwidth]{figures/comparison/execution_times_Random_p01.pdf}
    %   \caption{Random}
    %   % \label{fig:random_corner}
    % \end{subfigure}%
    % \hfill
    % %\par\bigskip 
    % \begin{subfigure}[b]{0.225\textwidth}
    %   \centering
    %   \includegraphics[width=1\textwidth]{figures/comparison/execution_times_Warehouse_p01.pdf}
    %   \caption{Warehouse}
    %   % \label{fig:random_corner}
    % \end{subfigure}%
    % \hfill
    % %\par\bigskip 
    % \begin{subfigure}[b]{0.225\textwidth}
    %   \centering
    %   \includegraphics[width=1\textwidth]{figures/comparison/execution_times_Game_p01.pdf}
    %   \caption{Game}
    %   % \label{fig:random_corner}
    % \end{subfigure}%
    % \hfill
    % %\par\bigskip 
    % \begin{subfigure}[b]{0.225\textwidth}
    %   \centering
    %   \includegraphics[width=1\textwidth]{figures/comparison/execution_times_City_p01.pdf}
    %   \caption{City}
    %   % \label{fig:random_corner}
    % \end{subfigure}%
    % \hfill
    % %\par\bigskip 
    \caption{Success rates of IGSES and other baselines on four maps with delay probability $p=0.01$. The shading areas indicate the standard deviations of different runs. They are multiplied by 10 for illustration. In each figure, the top-left shows the corresponding map, and the top-right corner shows the range of agent numbers.}
    \label{fig:succ_rates_p01}

    \centering
    \begin{subfigure}[b]{0.45\textwidth}
      \centering
      \includegraphics[width=1\textwidth]{figures/comparison/success_rates_legend.pdf}
      % \caption{Warehouse}
      % \label{fig:random_corner}
    \end{subfigure}%
    \hfill

    % \begin{subfigure}[b]{0.225\textwidth}
    %   \centering
    %   \includegraphics[width=1\textwidth]{figures/comparison/success_rates_Random_p03.pdf}
    %   \caption{Random-32-32-10}
    %   % \label{fig:random_corner}
    % \end{subfigure}%
    % \hfill
    % %\par\bigskip 
    % \begin{subfigure}[b]{0.225\textwidth}
    %   \centering
    %   \includegraphics[width=1\textwidth]{figures/comparison/success_rates_Warehouse_p03.pdf}
    %   \caption{Warehouse-10-20-10-2-1}
    %   % \label{fig:random_corner}
    % \end{subfigure}%
    % \hfill

    % \setlength{\belowcaptionskip}{4pt}  % Adjust space below caption
    
    % %\par\bigskip 
    % \begin{subfigure}[b]{0.225\textwidth}
    %   \centering
    %   \includegraphics[width=1\textwidth]{figures/comparison/success_rates_Game_p03.pdf}
    %   \caption{Lak303d}
    %   % \label{fig:random_corner}
    % \end{subfigure}%
    % \hfill
    % %\par\bigskip 
    % \begin{subfigure}[b]{0.225\textwidth}
    %   \centering
    %   \includegraphics[width=1\textwidth]{figures/comparison/success_rates_City_p03.pdf}
    %   \caption{Paris\_1\_256}
    %   % \label{fig:random_corner}
    % \end{subfigure}%
    % \hfill

    \begin{subfigure}[b]{0.205\textwidth}
      \centering
      \includegraphics[width=1\textwidth]{figures/comparison/execution_times_Random_p01.pdf}
      \caption{Random-32-32-10}
      % \label{fig:random_corner}
    \end{subfigure}%
    \hfill
    %\par\bigskip 
    \begin{subfigure}[b]{0.205\textwidth}
      \centering
      \includegraphics[width=1\textwidth]{figures/comparison/execution_times_Warehouse_p01.pdf}
      \caption{Warehouse-10-20-10-2-1}
      % \label{fig:random_corner}
    \end{subfigure}%
    \hfill
    %\par\bigskip 
    \begin{subfigure}[b]{0.205\textwidth}
      \centering
      \includegraphics[width=1\textwidth]{figures/comparison/execution_times_Game_p01.pdf}
      \caption{Lak303d}
      % \label{fig:random_corner}
    \end{subfigure}%
    \hfill
    %\par\bigskip 
    \begin{subfigure}[b]{0.205\textwidth}
      \centering
      \includegraphics[width=1\textwidth]{figures/comparison/execution_times_City_p01.pdf}
      \caption{Paris\_1\_256}
      % \label{fig:random_corner}
    \end{subfigure}%
    \hfill
    %\par\bigskip 
    \caption{Search Time of IGSES and other baselines on four maps with delay probability $p=0.01$. The search time of unsolved instances is set to the time limit of $16$ seconds. The shading areas indicate the standard deviations of different runs. They are multiplied by 10 for illustration.}
    \label{fig:search_time_p01}
\end{figure}

\begin{figure}[tb]
    \setlength{\abovecaptionskip}{4pt}  % Adjust space above caption
    \setlength{\belowcaptionskip}{8pt}  % Adjust space below caption

    \centering
    \begin{subfigure}[b]{0.45\textwidth}
      \centering
      \includegraphics[width=1\textwidth]{figures/comparison/success_rates_legend.pdf}
      % \caption{Warehouse}
      % \label{fig:random_corner}
    \end{subfigure}%
    \hfill

    \begin{subfigure}[b]{0.225\textwidth}
      \centering
      \includegraphics[width=1\textwidth]{figures/comparison/success_rates_Random_p03.pdf}
      \caption{Random-32-32-10}
      % \label{fig:random_corner}
    \end{subfigure}%
    \hfill
    %\par\bigskip 
    \begin{subfigure}[b]{0.225\textwidth}
      \centering
      \includegraphics[width=1\textwidth]{figures/comparison/success_rates_Warehouse_p03.pdf}
      \caption{Warehouse-10-20-10-2-1}
      % \label{fig:random_corner}
    \end{subfigure}%
    \hfill

    \setlength{\belowcaptionskip}{4pt}  % Adjust space below caption
    
    %\par\bigskip 
    \begin{subfigure}[b]{0.225\textwidth}
      \centering
      \includegraphics[width=1\textwidth]{figures/comparison/success_rates_Game_p03.pdf}
      \caption{Lak303d}
      % \label{fig:random_corner}
    \end{subfigure}%
    \hfill
    %\par\bigskip 
    \begin{subfigure}[b]{0.225\textwidth}
      \centering
      \includegraphics[width=1\textwidth]{figures/comparison/success_rates_City_p03.pdf}
      \caption{Paris\_1\_256}
      % \label{fig:random_corner}
    \end{subfigure}%
    \hfill

    % \begin{subfigure}[b]{0.225\textwidth}
    %   \centering
    %   \includegraphics[width=1\textwidth]{figures/comparison/execution_times_Random_p01.pdf}
    %   \caption{Random}
    %   % \label{fig:random_corner}
    % \end{subfigure}%
    % \hfill
    % %\par\bigskip 
    % \begin{subfigure}[b]{0.225\textwidth}
    %   \centering
    %   \includegraphics[width=1\textwidth]{figures/comparison/execution_times_Warehouse_p01.pdf}
    %   \caption{Warehouse}
    %   % \label{fig:random_corner}
    % \end{subfigure}%
    % \hfill
    % %\par\bigskip 
    % \begin{subfigure}[b]{0.225\textwidth}
    %   \centering
    %   \includegraphics[width=1\textwidth]{figures/comparison/execution_times_Game_p01.pdf}
    %   \caption{Game}
    %   % \label{fig:random_corner}
    % \end{subfigure}%
    % \hfill
    % %\par\bigskip 
    % \begin{subfigure}[b]{0.225\textwidth}
    %   \centering
    %   \includegraphics[width=1\textwidth]{figures/comparison/execution_times_City_p01.pdf}
    %   \caption{City}
    %   % \label{fig:random_corner}
    % \end{subfigure}%
    % \hfill
    % %\par\bigskip 
    \caption{Success rates of IGSES and other baselines on four maps with delay probability $p=0.03$. The shading areas indicate the standard deviations of different runs. They are multiplied by 10 for illustration. In each figure, the top-left shows the corresponding map, and the top-right corner shows the range of agent numbers.}
    \label{fig:succ_rates_p03}

    \centering
    \begin{subfigure}[b]{0.45\textwidth}
      \centering
      \includegraphics[width=1\textwidth]{figures/comparison/success_rates_legend.pdf}
      % \caption{Warehouse}
      % \label{fig:random_corner}
    \end{subfigure}%
    \hfill

    % \begin{subfigure}[b]{0.225\textwidth}
    %   \centering
    %   \includegraphics[width=1\textwidth]{figures/comparison/success_rates_Random_p03.pdf}
    %   \caption{Random-32-32-10}
    %   % \label{fig:random_corner}
    % \end{subfigure}%
    % \hfill
    % %\par\bigskip 
    % \begin{subfigure}[b]{0.225\textwidth}
    %   \centering
    %   \includegraphics[width=1\textwidth]{figures/comparison/success_rates_Warehouse_p03.pdf}
    %   \caption{Warehouse-10-20-10-2-1}
    %   % \label{fig:random_corner}
    % \end{subfigure}%
    % \hfill

    % \setlength{\belowcaptionskip}{4pt}  % Adjust space below caption
    
    % %\par\bigskip 
    % \begin{subfigure}[b]{0.225\textwidth}
    %   \centering
    %   \includegraphics[width=1\textwidth]{figures/comparison/success_rates_Game_p03.pdf}
    %   \caption{Lak303d}
    %   % \label{fig:random_corner}
    % \end{subfigure}%
    % \hfill
    % %\par\bigskip 
    % \begin{subfigure}[b]{0.225\textwidth}
    %   \centering
    %   \includegraphics[width=1\textwidth]{figures/comparison/success_rates_City_p03.pdf}
    %   \caption{Paris\_1\_256}
    %   % \label{fig:random_corner}
    % \end{subfigure}%
    % \hfill

    \begin{subfigure}[b]{0.205\textwidth}
      \centering
      \includegraphics[width=1\textwidth]{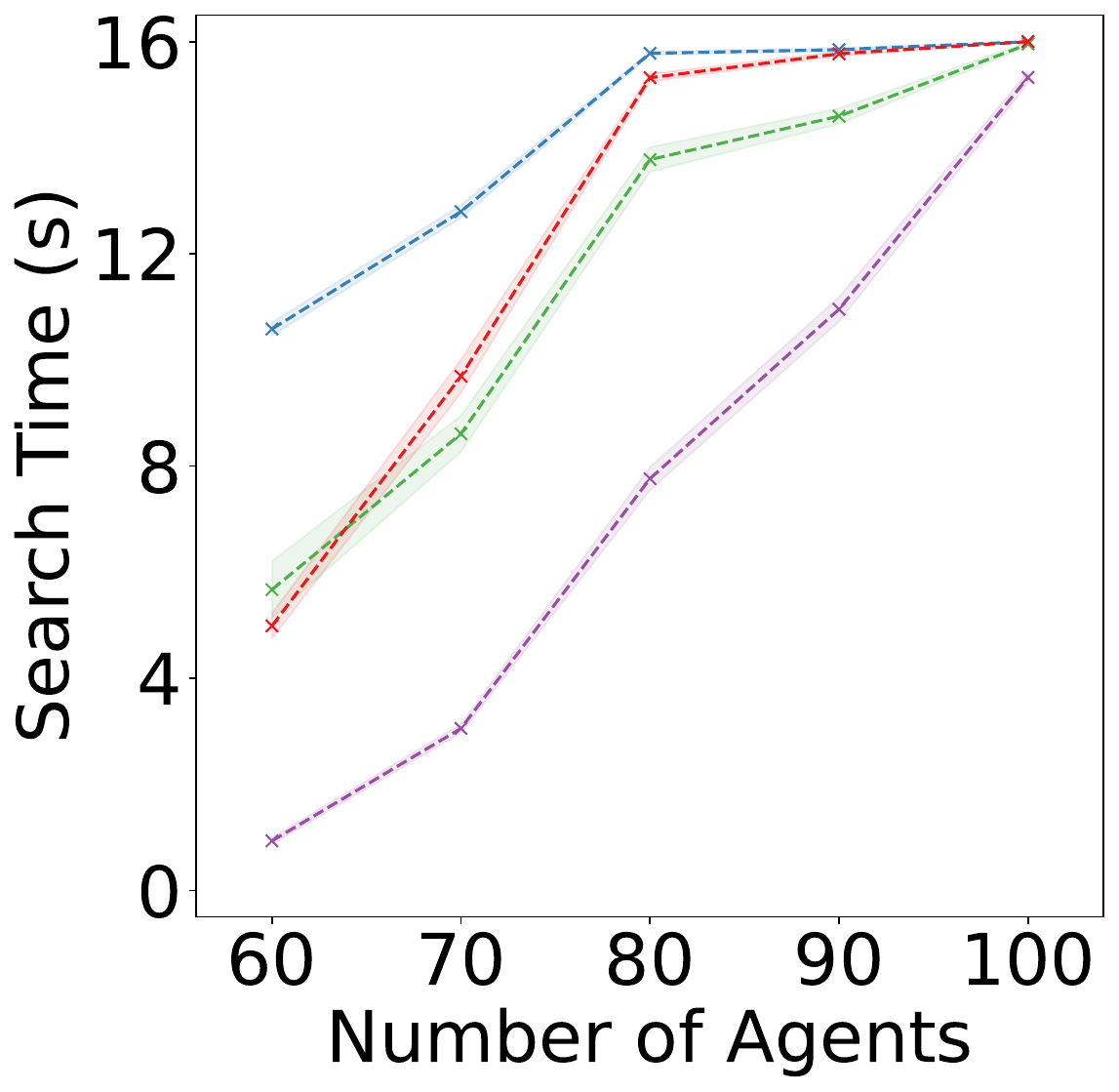}
      \caption{Random-32-32-10}
      % \label{fig:random_corner}
    \end{subfigure}%
    \hfill
    %\par\bigskip 
    \begin{subfigure}[b]{0.205\textwidth}
      \centering
      \includegraphics[width=1\textwidth]{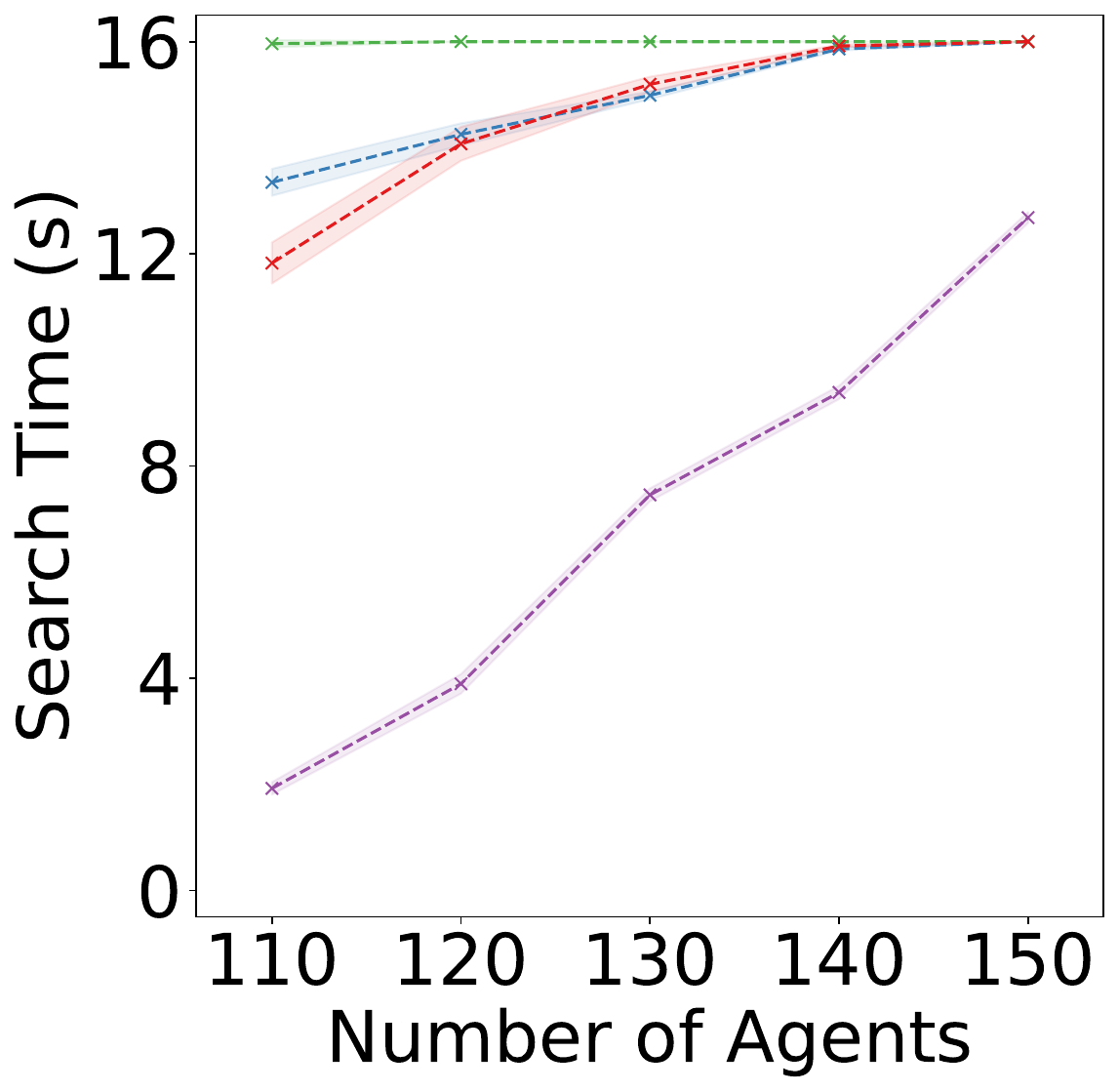}
      \caption{Warehouse-10-20-10-2-1}
      % \label{fig:random_corner}
    \end{subfigure}%
    \hfill
    %\par\bigskip 
    \begin{subfigure}[b]{0.205\textwidth}
      \centering
      \includegraphics[width=1\textwidth]{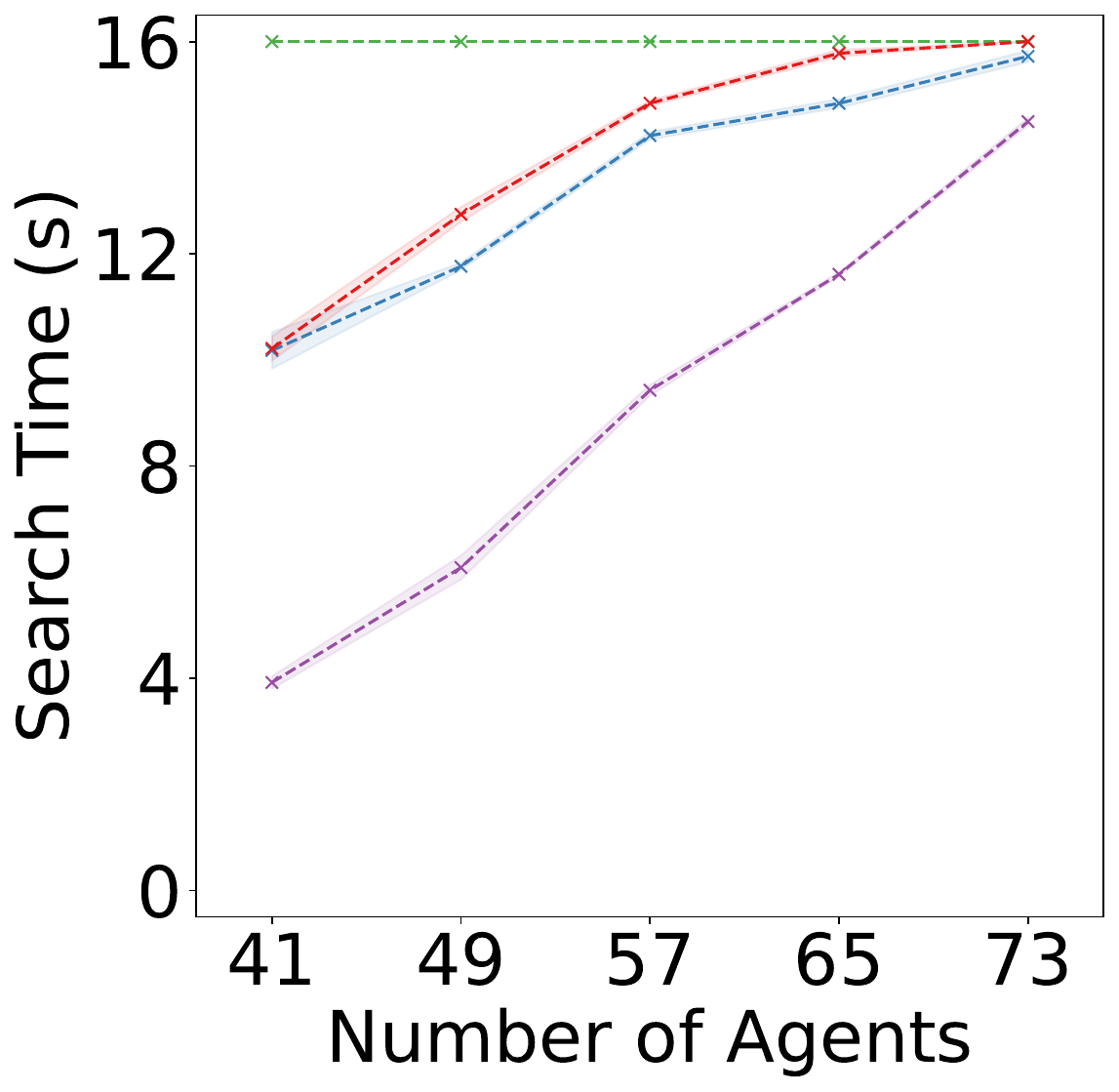}
      \caption{Lak303d}
      % \label{fig:random_corner}
    \end{subfigure}%
    \hfill
    %\par\bigskip 
    \begin{subfigure}[b]{0.205\textwidth}
      \centering
      \includegraphics[width=1\textwidth]{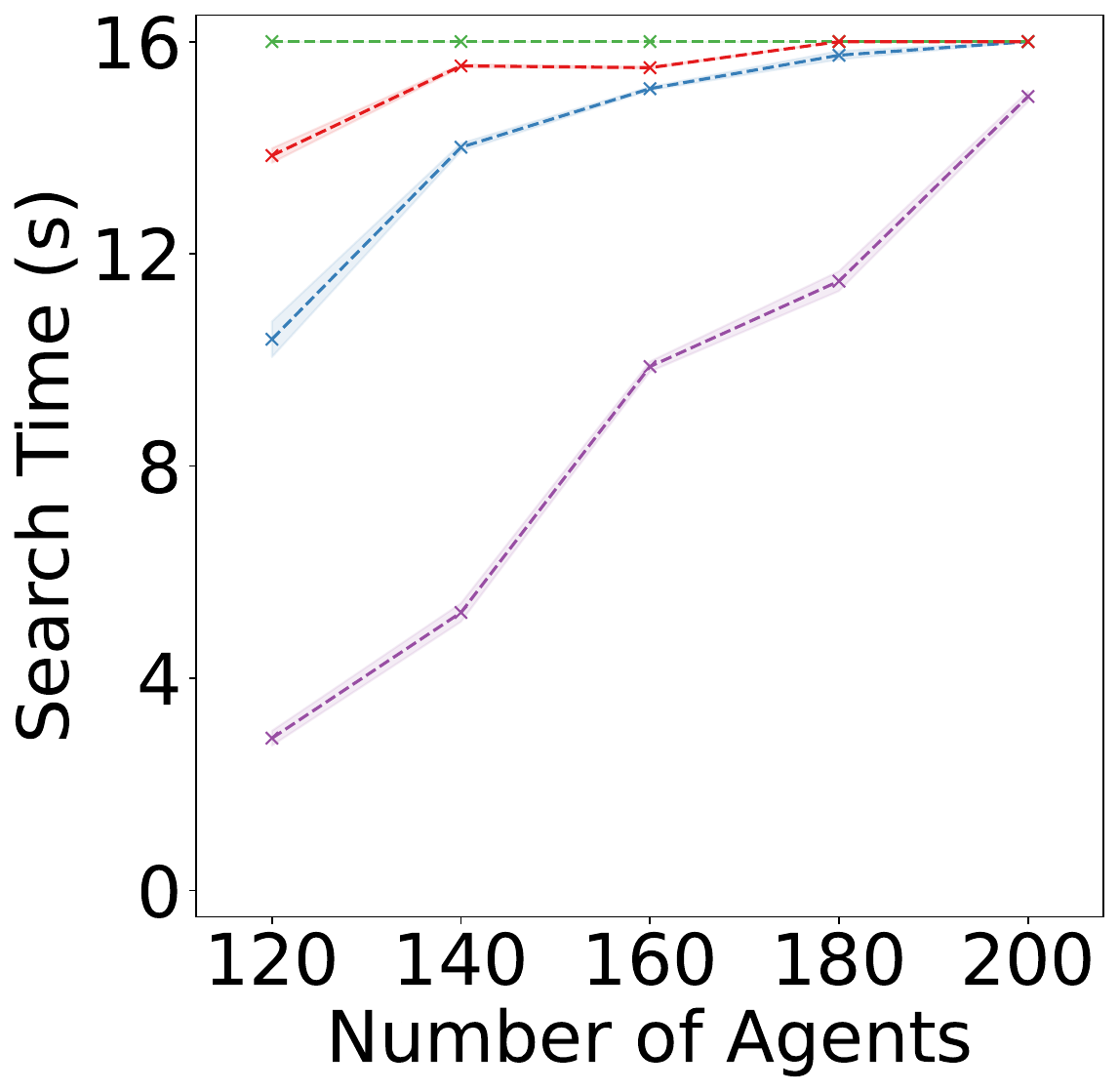}
      \caption{Paris\_1\_256}
      % \label{fig:random_corner}
    \end{subfigure}%
    \hfill
    %\par\bigskip 
    %\par\bigskip 
    \caption{Search Time of IGSES and other baselines on four maps with delay probability $p=0.03$. The search time of unsolved instances is set to the time limit of $16$ seconds. The shading areas indicate the standard deviations of different runs. They are multiplied by 10 for illustration.}
    \label{fig:search_time_p03}
\end{figure}

\begin{table*}[tbh]
    \caption{Incremental analysis on instances solved by all the settings with delay probability $p=0.002$. Rows 1,2,3 compare the effects of different grouping methods. GSES has no grouping. SG: simple grouping. FG: full grouping. Rows 3,4,5,6 compare the effects of different branching orders. Row 3 applies the default Agent-First branching in GSES. RB: Random branching. EB: Earliest-First branching. SB: Smallest-Edge-Slack-First branching. Rows 6,7 compare the effect of the stronger heuristics. SH: stronger heuristics. Rows 7,8 compare the effect of the incremental implementation. INC: incremental implementation. The last row is also our IGSES.}
    \label{tab:incremental_p002}
    \centering
    \resizebox{0.98\textwidth}{!}{
    \begin{tabular}{c|c|rrr|rrr|rrr|rrr}
        \toprule
        \toprule
        \multirow{3}{*}{\makecell{Row}} & \multirow{3}{*}{Setting} & \multicolumn{3}{c|}{Random-32-32-10} &
        \multicolumn{3}{c|}{Warehouse-10-20-10-2-1} & 
        \multicolumn{3}{c|}{Lak303d} & 
        \multicolumn{3}{c}{Paris\_1\_256}\\
        & & \makecell{search\\time (s)} & \makecell{\#expanded\\nodes} & \makecell{\#edge\\ groups} & \makecell{search\\time (s)} & \makecell{\#expanded\\nodes} & \makecell{\#edge\\ groups} & 
        \makecell{search\\time (s)} & \makecell{\#expanded\\nodes} & \makecell{\#edge\\ groups} & 
        \makecell{search\\time (s)} & \makecell{\#expanded\\nodes} & \makecell{\#edge\\ groups}\\
        \midrule
    
1 &GSES & 1.093 & 2249.8 & 1164.0 & 3.013 & 757.5 & 14686.1 & 1.648 & 333.0 & 30230.5 & 4.516 & 190.5 & 44884.0\\
2 &+SG & 0.840 & 1475.6 & 826.7 & 1.580 & 285.7 & 8668.4 & 0.848 & 112.2 & 18938.0 & 3.292 & 104.2 & 30191.6\\
3 &+FG & 0.747 & 1213.1 & 518.8 & 1.355 & 218.4 & 2200.0 & 0.704 & 80.5 & 5581.3 & 2.949 & 83.5 & 11873.3\\
4 &+FG +RB & 1.259 & 1996.9 & 518.8 & 2.226 & 358.1 & 2200.0 & 2.168 & 246.3 & 5581.3 & 4.393 & 122.1 & 11873.3\\
5 &+FG +EB & 1.054 & 1755.0 & 518.8 & 1.791 & 291.9 & 2200.0 & 1.120 & 115.2 & 5581.3 & 2.896 & 79.5 & 11873.3\\
6 &+FG +SB & 0.713 & 1111.7 & 518.8 & 1.312 & 205.0 & 2200.0 & 0.726 & 83.3 & 5581.3 & 3.119 & 88.4 & 11873.3\\
7 &+FG +SB +SH & 0.127 & 59.9 & 518.8 & 0.743 & 31.6 & 2200.0 & 0.701 & 25.1 & 5581.3 & 1.537 & 13.9 & 11873.3\\
8 &+FG +SB +SH +INC & 0.035 & 59.9 & 518.8 & 0.116 & 31.6 & 2200.0 & 0.207 & 25.1 & 5581.3 & 0.349 & 13.9 & 11873.3\\

        \bottomrule
        \bottomrule
    \end{tabular}
    }
\end{table*}

\begin{table*}[tbh]
    \caption{Incremental analysis on instances solved by all the settings with delay probability $p=0.01$. Rows 1,2,3 compare the effects of different grouping methods. GSES has no grouping. SG: simple grouping. FG: full grouping. Rows 3,4,5,6 compare the effects of different branching orders. Row 3 applies the default Agent-First branching in GSES. RB: Random branching. EB: Earliest-First branching. SB: Smallest-Edge-Slack-First branching. Rows 6,7 compare the effect of the stronger heuristics. SH: stronger heuristics. Rows 7,8 compare the effect of the incremental implementation. INC: incremental implementation. The last row is also our IGSES.}
    \label{tab:incremental_p01}
    \centering
    \resizebox{0.98\textwidth}{!}{
    \begin{tabular}{c|c|rrr|rrr|rrr|rrr}
        \toprule
        \toprule
        \multirow{3}{*}{\makecell{Row}} & \multirow{3}{*}{Setting} & \multicolumn{3}{c|}{Random-32-32-10} &
        \multicolumn{3}{c|}{Warehouse-10-20-10-2-1} & 
        \multicolumn{3}{c|}{Lak303d} & 
        \multicolumn{3}{c}{Paris\_1\_256}\\
        & & \makecell{search\\time (s)} & \makecell{\#expanded\\nodes} & \makecell{\#edge\\ groups} & \makecell{search\\time (s)} & \makecell{\#expanded\\nodes} & \makecell{\#edge\\ groups} & 
        \makecell{search\\time (s)} & \makecell{\#expanded\\nodes} & \makecell{\#edge\\ groups} & 
        \makecell{search\\time (s)} & \makecell{\#expanded\\nodes} & \makecell{\#edge\\ groups}\\
        \midrule
    
1 &GSES & 1.421 & 3051.5 & 1637.9 & 3.805 & 948.3 & 15441.6 & 2.608 & 550.8 & 30320.6 & 5.221 & 253.1 & 40214.5\\
2 &+SG & 1.082 & 2025.7 & 1176.4 & 1.783 & 315.3 & 9134.2 & 1.242 & 185.7 & 19012.7 & 3.131 & 115.2 & 27065.5\\
3 &+FG & 0.945 & 1597.9 & 732.5 & 1.519 & 241.4 & 2367.0 & 1.021 & 130.8 & 5601.8 & 2.770 & 88.1 & 10706.9\\
4 &+FG +RB & 1.555 & 2495.2 & 732.5 & 2.685 & 447.4 & 2367.0 & 2.644 & 324.2 & 5601.8 & 3.553 & 114.5 & 10706.9\\
5 &+FG +EB & 1.185 & 2036.1 & 732.5 & 2.354 & 381.6 & 2367.0 & 1.481 & 172.9 & 5601.8 & 3.118 & 102.6 & 10706.9\\
6 &+FG +SB & 0.915 & 1480.5 & 732.5 & 1.472 & 229.5 & 2367.0 & 1.027 & 129.0 & 5601.8 & 2.472 & 78.7 & 10706.9\\
7 &+FG +SB +SH & 0.169 & 84.6 & 732.5 & 0.832 & 34.6 & 2367.0 & 0.871 & 33.1 & 5601.8 & 1.561 & 15.4 & 10706.9\\
8 &+FG +SB +SH +INC & 0.043 & 84.6 & 732.5 & 0.120 & 34.6 & 2367.0 & 0.221 & 33.1 & 5601.8 & 0.321 & 15.4 & 10706.9\\

        \bottomrule
        \bottomrule
    \end{tabular}
    }
\end{table*}

\begin{table*}[tbh]
    \caption{Incremental analysis on instances solved by all the settings with delay probability $p=0.03$. Rows 1,2,3 compare the effects of different grouping methods. GSES has no grouping. SG: simple grouping. FG: full grouping. Rows 3,4,5,6 compare the effects of different branching orders. Row 3 applies the default Agent-First branching in GSES. RB: Random branching. EB: Earliest-First branching. SB: Smallest-Edge-Slack-First branching. Rows 6,7 compare the effect of the stronger heuristics. SH: stronger heuristics. Rows 7,8 compare the effect of the incremental implementation. INC: incremental implementation. The last row is also our IGSES.}
    \label{tab:incremental_p03}
    \centering
    \resizebox{0.98\textwidth}{!}{
    \begin{tabular}{c|c|rrr|rrr|rrr|rrr}
        \toprule
        \toprule
        \multirow{3}{*}{\makecell{Row}} & \multirow{3}{*}{Setting} & \multicolumn{3}{c|}{Random-32-32-10} &
        \multicolumn{3}{c|}{Warehouse-10-20-10-2-1} & 
        \multicolumn{3}{c|}{Lak303d} & 
        \multicolumn{3}{c}{Paris\_1\_256}\\
        & & \makecell{search\\time (s)} & \makecell{\#expanded\\nodes} & \makecell{\#edge\\ groups} & \makecell{search\\time (s)} & \makecell{\#expanded\\nodes} & \makecell{\#edge\\ groups} & 
        \makecell{search\\time (s)} & \makecell{\#expanded\\nodes} & \makecell{\#edge\\ groups} & 
        \makecell{search\\time (s)} & \makecell{\#expanded\\nodes} & \makecell{\#edge\\ groups}\\
        \midrule
    
1 &GSES & 1.580 & 3699.3 & 1564.2 & 4.210 & 1080.9 & 14763.0 & 2.262 & 443.3 & 29775.0 & 5.457 & 273.6 & 39522.9\\
2 &+SG & 1.173 & 2383.1 & 1124.2 & 1.959 & 364.7 & 8744.3 & 1.115 & 153.5 & 18628.9 & 3.528 & 133.5 & 26525.9\\
3 &+FG & 0.996 & 1824.5 & 700.0 & 1.685 & 277.8 & 2283.7 & 0.907 & 105.7 & 5492.3 & 3.011 & 97.1 & 10396.3\\
4 &+FG +RB & 1.605 & 2970.8 & 700.0 & 2.968 & 496.8 & 2283.7 & 2.397 & 277.5 & 5492.3 & 4.120 & 133.8 & 10396.3\\
5 &+FG +EB & 1.716 & 3276.8 & 700.0 & 2.534 & 435.8 & 2283.7 & 1.426 & 170.6 & 5492.3 & 3.067 & 100.2 & 10396.3\\
6 &+FG +SB & 0.736 & 1317.0 & 700.0 & 1.564 & 253.6 & 2283.7 & 0.792 & 84.2 & 5492.3 & 2.680 & 86.2 & 10396.3\\
7 &+FG +SB +SH & 0.187 & 96.8 & 700.0 & 0.924 & 40.6 & 2283.7 & 0.752 & 26.3 & 5492.3 & 2.080 & 21.9 & 10396.3\\
8 &+FG +SB +SH +INC & 0.045 & 96.8 & 700.0 & 0.124 & 40.6 & 2283.7 & 0.205 & 26.3 & 5492.3 & 0.337 & 21.9 & 10396.3\\

        \bottomrule
        \bottomrule
    \end{tabular}
    }
\end{table*}

\begin{figure}[!tbh]
    \begin{subfigure}[b]{0.225\textwidth}
      \centering
      \includegraphics[width=1\textwidth]{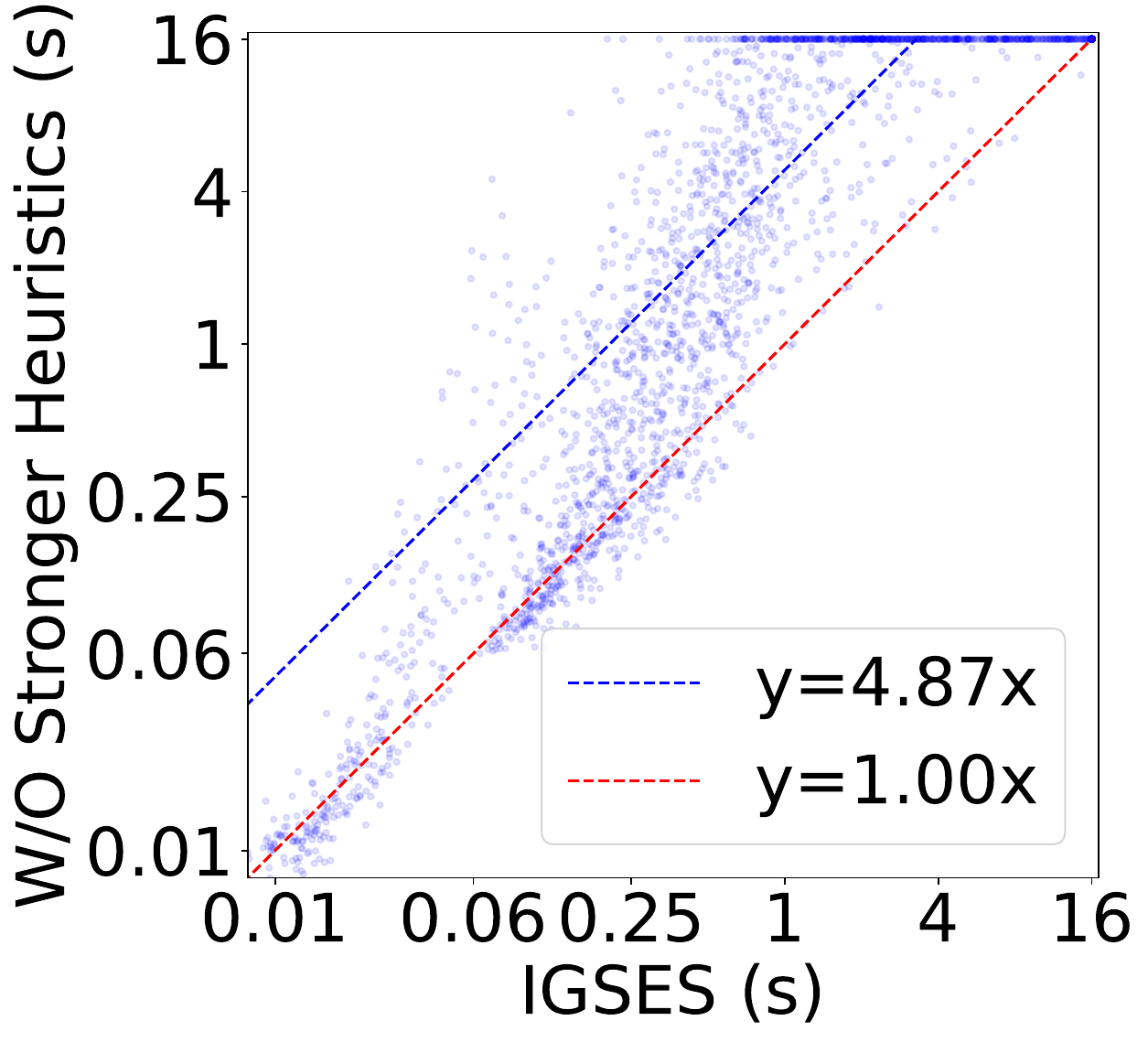}
      \caption{Heuristics}
      % \label{fig:random_corner}
    \end{subfigure}%
    \hfill
    %\par\bigskip 
    \begin{subfigure}[b]{0.225\textwidth}
      \centering
      \includegraphics[width=1\textwidth]{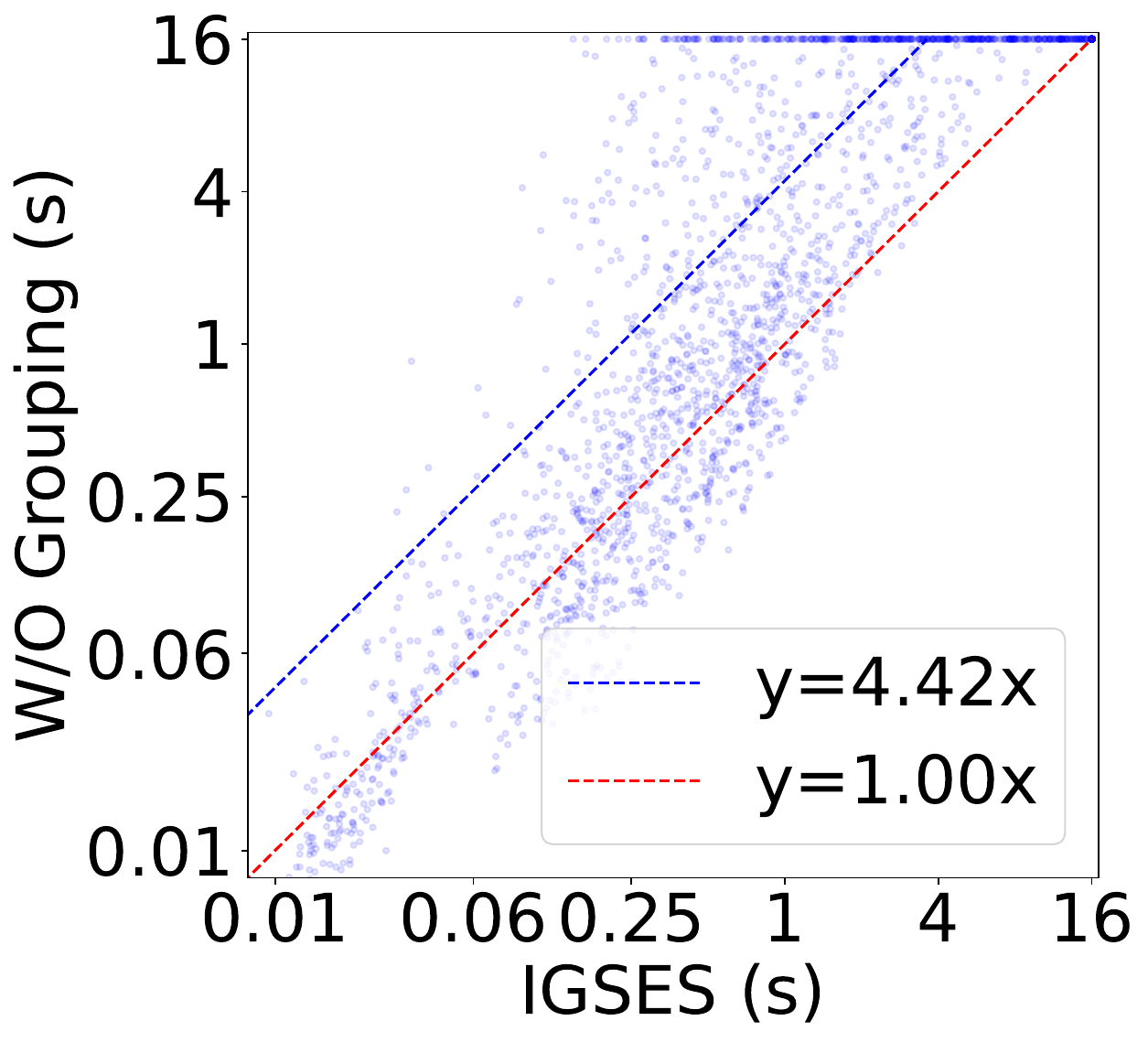}
      \caption{Grouping}
      % \label{fig:random_corner}
    \end{subfigure}%
    \hfill    %\par\bigskip 
    
    %\par\bigskip 
    \begin{subfigure}[b]{0.225\textwidth}
      \centering
      \includegraphics[width=1\textwidth]{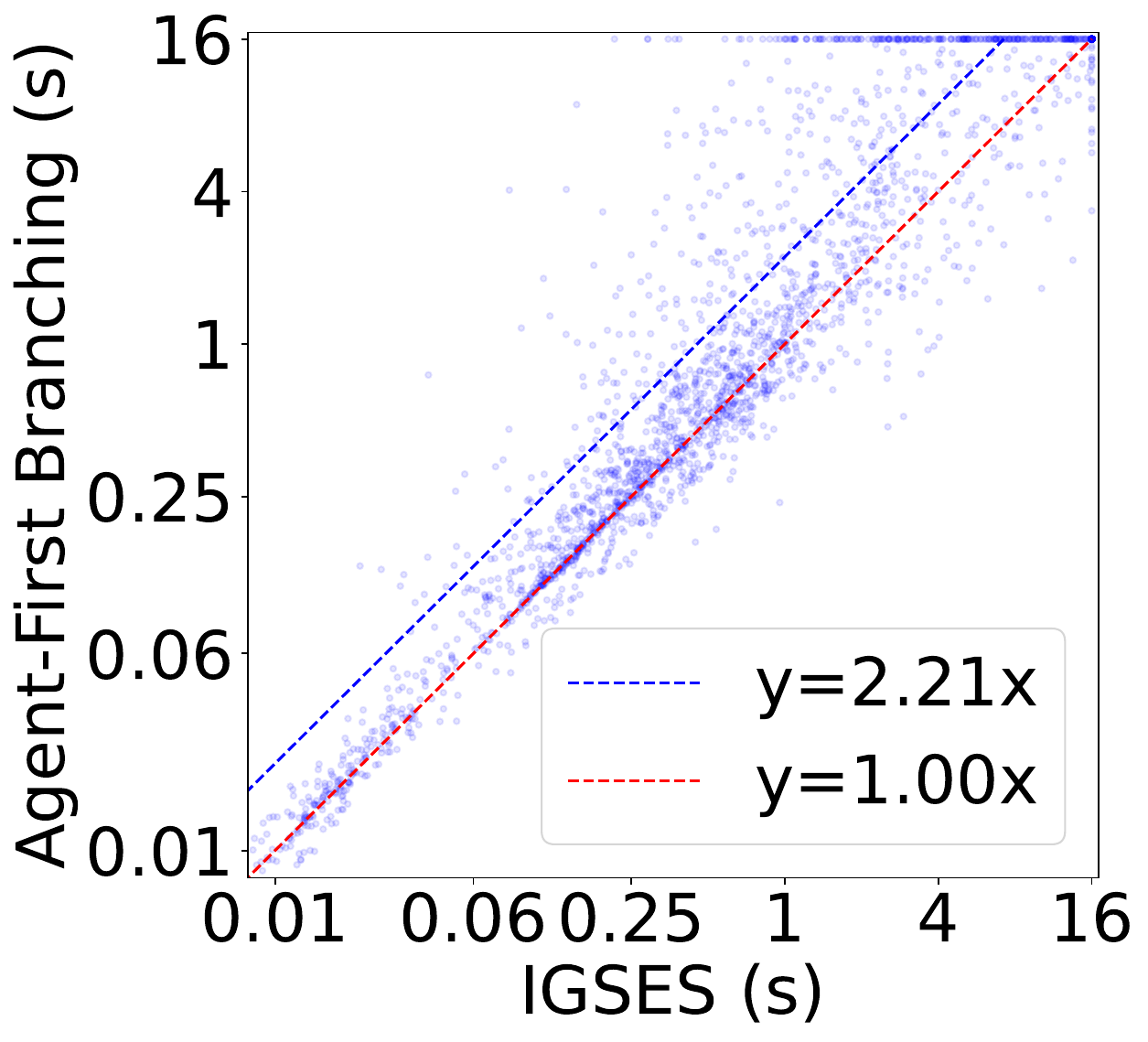}
      \caption{Branching}
      % \label{fig:random_corner}
    \end{subfigure}%
    \hfill
    \begin{subfigure}[b]{0.225\textwidth}
      \centering
      \includegraphics[width=1\textwidth]{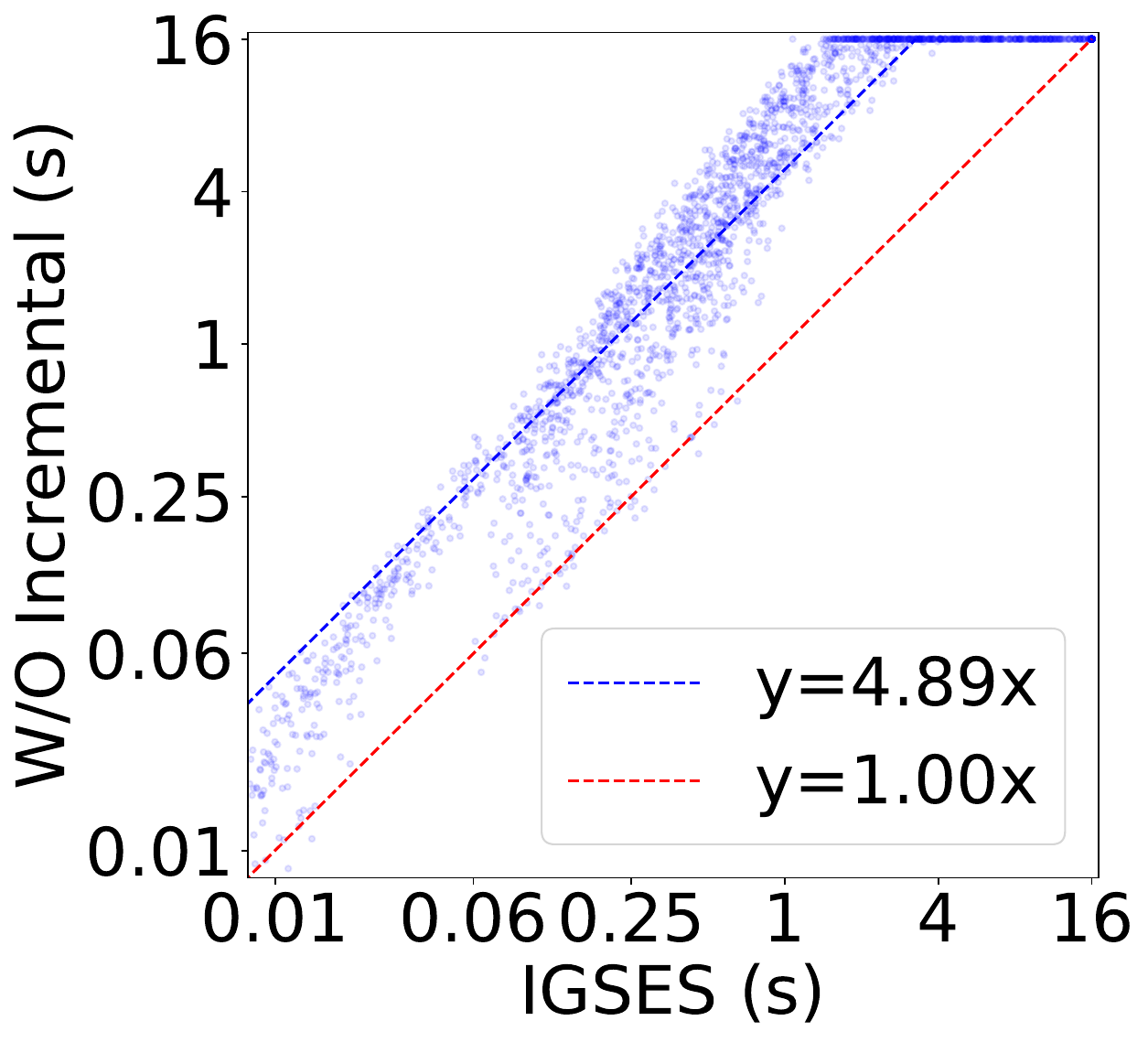}
      \caption{Incremental}
      % \label{fig:random_corner}
    \end{subfigure}%
    \hfill
    \caption{Ablation on four speedup techniques on all instances with delay probability $p=0.002$. In each figure, we compare IGSES to the setting that replaces one of its techniques by the GSES's choice. Each point in the graph represents an instance. The search time of an unsolved instance is set to 16 seconds.
    The blue line is the fitted on instances solved by at least one setting. Its sloped indicates the average speedup.}
    \label{fig: ablation_p002}
    % \begin{subfigure}[b]{0.225\textwidth}
    %   \centering
    %   \includegraphics[width=1\textwidth]{figures/ablation/compare_groups_grouping.pdf}
    %   \caption{Edge Groups}
    %   % \label{fig:random_corner}
    % \end{subfigure}%
    % \hfill
    % \begin{subfigure}[b]{0.225\textwidth}
    %   \centering
    %   \includegraphics[width=1\textwidth]{figures/ablation/compare_speed_milp_grouping.pdf}
    %   \caption{Grouping + MILP}
    %   % \label{fig:random_corner}
    % \end{subfigure}%
    % \hfill

    \begin{subfigure}[b]{0.225\textwidth}
      \centering
      \includegraphics[width=1\textwidth]{figures/ablation/compare_speed_heuristics_p01.pdf}
      \caption{Heuristics}
      % \label{fig:random_corner}
    \end{subfigure}%
    \hfill
    %\par\bigskip 
    \begin{subfigure}[b]{0.225\textwidth}
      \centering
      \includegraphics[width=1\textwidth]{figures/ablation/compare_speed_grouping_p01.pdf}
      \caption{Grouping}
      % \label{fig:random_corner}
    \end{subfigure}%
    \hfill    %\par\bigskip 
    
    %\par\bigskip 
    \begin{subfigure}[b]{0.225\textwidth}
      \centering
      \includegraphics[width=1\textwidth]{figures/ablation/compare_speed_branching_p01.pdf}
      \caption{Branching}
      % \label{fig:random_corner}
    \end{subfigure}%
    \hfill
    \begin{subfigure}[b]{0.225\textwidth}
      \centering
      \includegraphics[width=1\textwidth]{figures/ablation/compare_speed_incremental_p01.pdf}
      \caption{Incremental}
      % \label{fig:random_corner}
    \end{subfigure}%
    \hfill
    \caption{Ablation on four speedup techniques on all instances with delay probability $p=0.01$. In each figure, we compare IGSES to the setting that replaces one of its techniques by the GSES's choice. Each point in the graph represents an instance. The search time of an unsolved instance is set to 16 seconds.
    The blue line is the fitted on instances solved by at least one setting. Its sloped indicates the average speedup.}
    \label{fig: ablation_p01}
    % \begin{subfigure}[b]{0.225\textwidth}
    %   \centering
    %   \includegraphics[width=1\textwidth]{figures/ablation/compare_groups_grouping.pdf}
    %   \caption{Edge Groups}
    %   % \label{fig:random_corner}
    % \end{subfigure}%
    % \hfill
    % \begin{subfigure}[b]{0.225\textwidth}
    %   \centering
    %   \includegraphics[width=1\textwidth]{figures/ablation/compare_speed_milp_grouping.pdf}
    %   \caption{Grouping + MILP}
    %   % \label{fig:random_corner}
    % \end{subfigure}%
    % \hfill

\end{figure}

\begin{figure}[!tbh]
    \begin{subfigure}[b]{0.225\textwidth}
      \centering
      \includegraphics[width=1\textwidth]{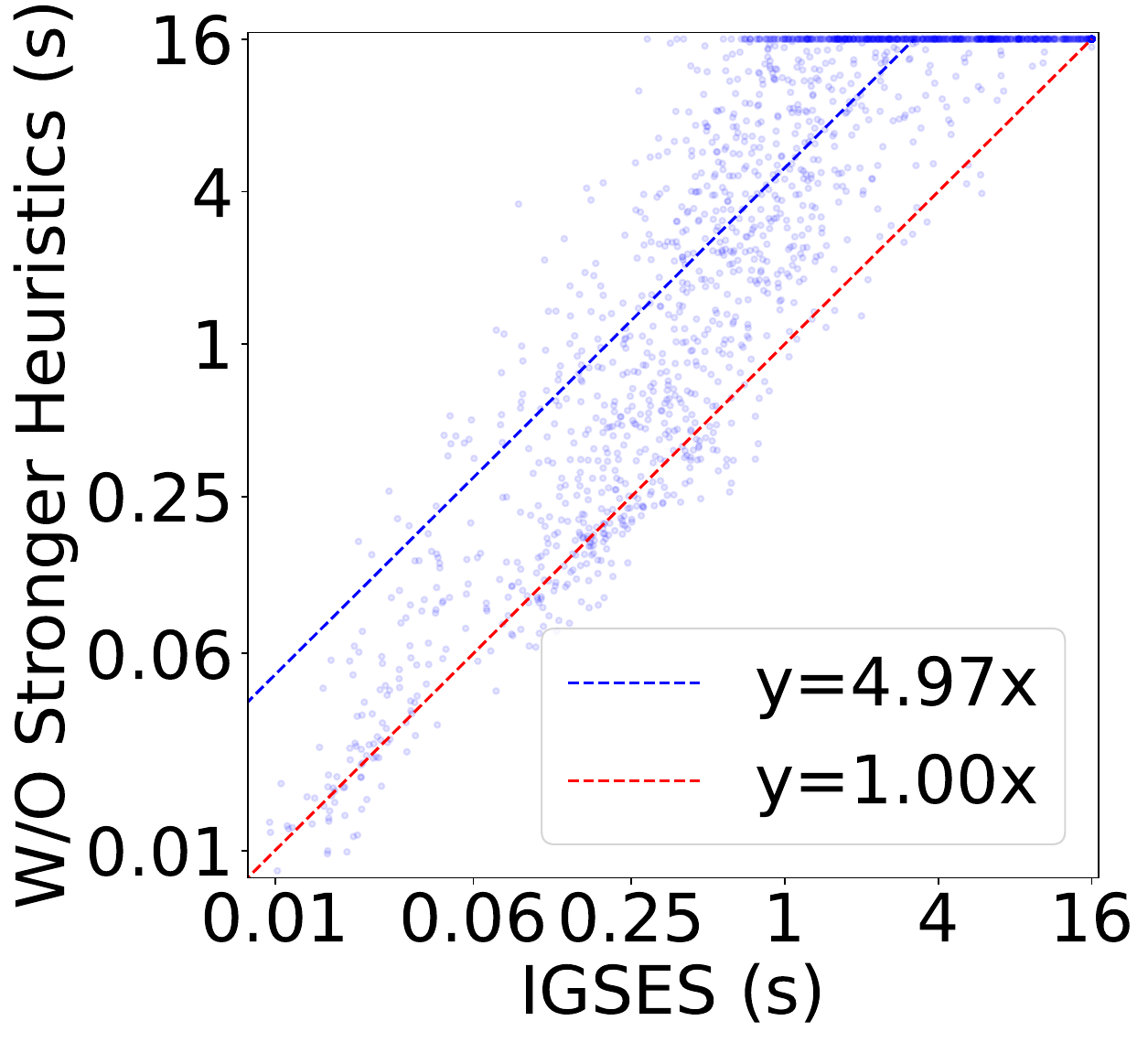}
      \caption{Heuristics}
      % \label{fig:random_corner}
    \end{subfigure}%
    \hfill
    %\par\bigskip 
    \begin{subfigure}[b]{0.225\textwidth}
      \centering
      \includegraphics[width=1\textwidth]{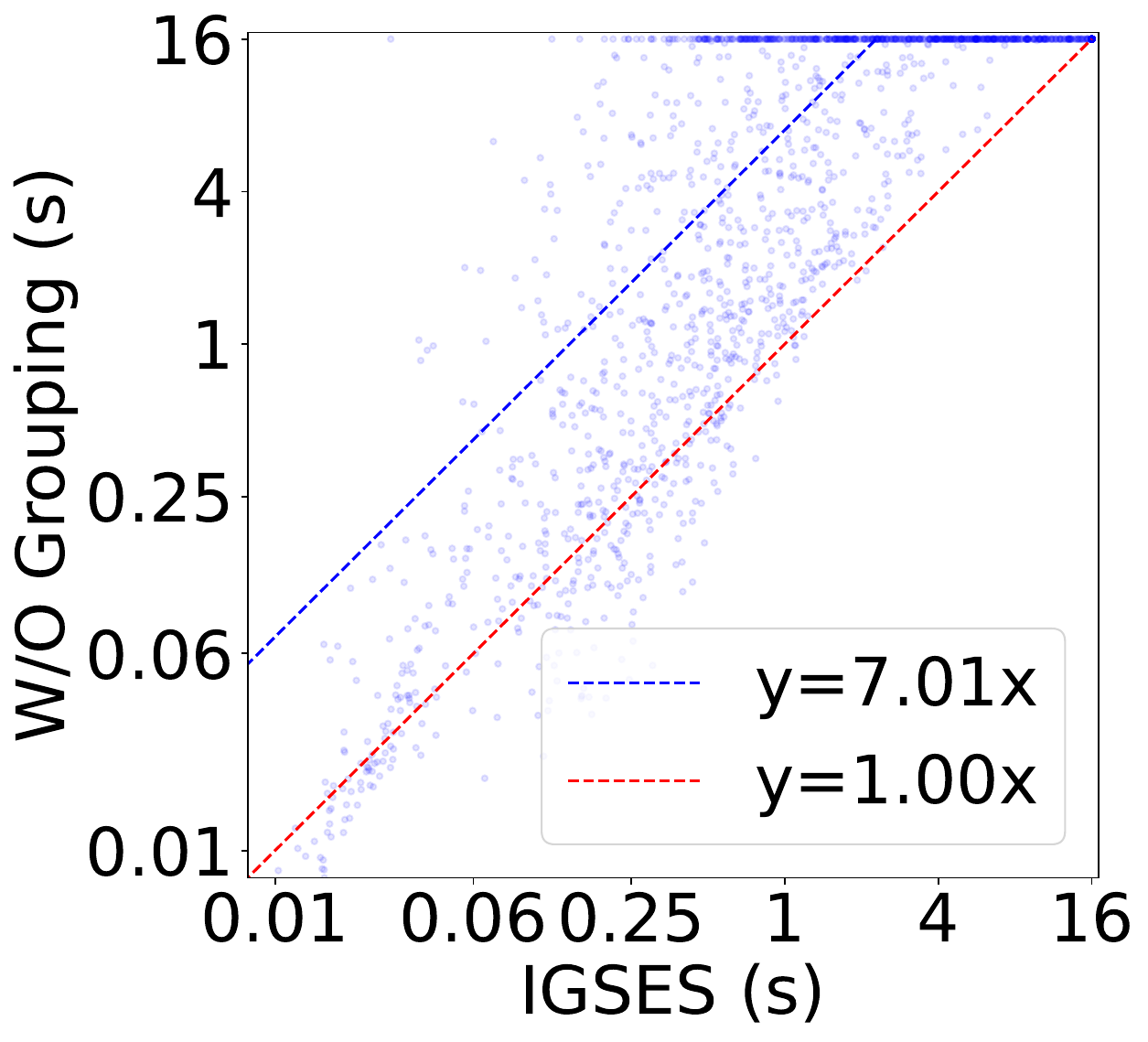}
      \caption{Grouping}
      % \label{fig:random_corner}
    \end{subfigure}%
    \hfill    %\par\bigskip 
    
    %\par\bigskip 
    \begin{subfigure}[b]{0.225\textwidth}
      \centering
      \includegraphics[width=1\textwidth]{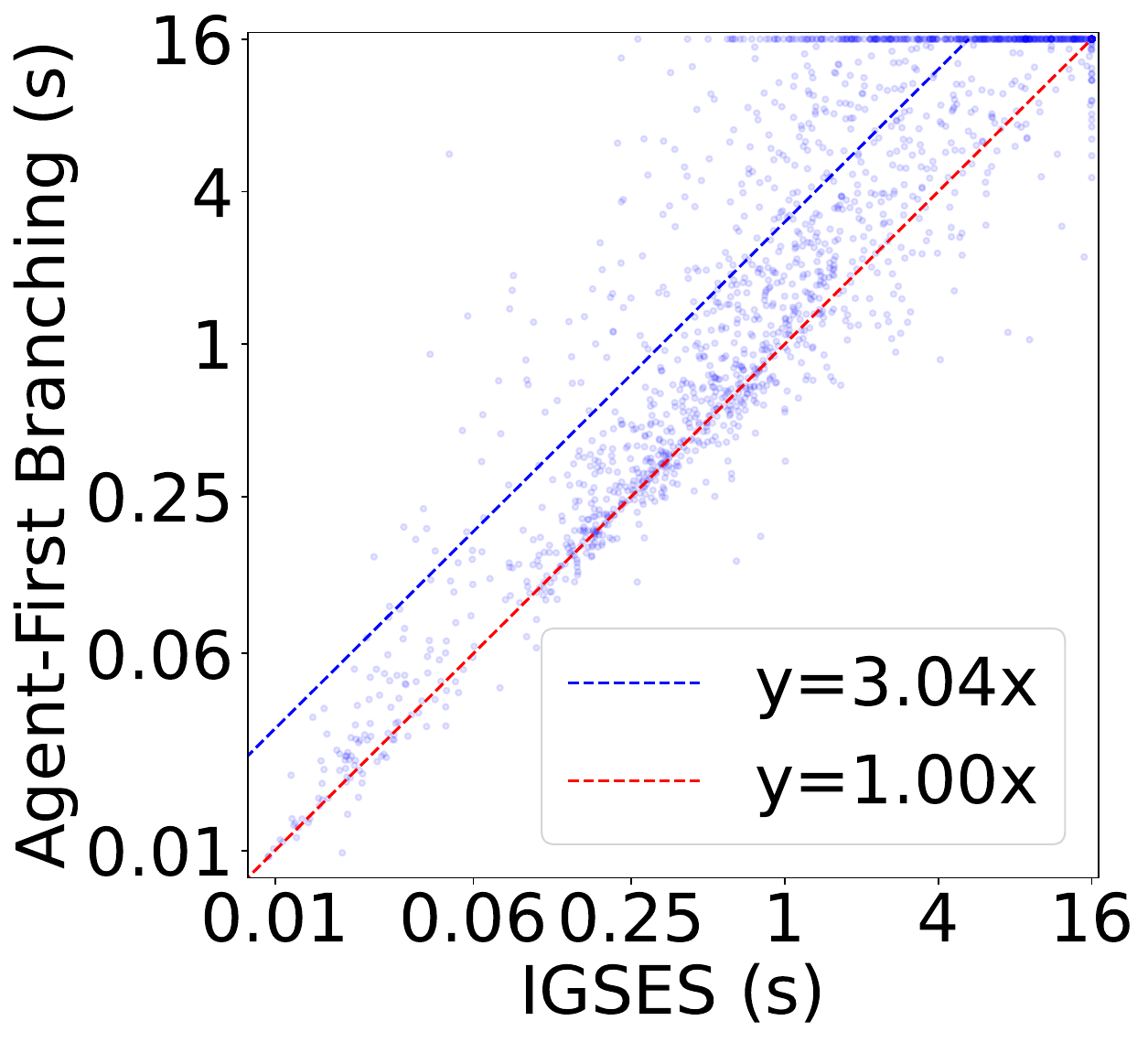}
      \caption{Branching}
      % \label{fig:random_corner}
    \end{subfigure}%
    \hfill
    \begin{subfigure}[b]{0.225\textwidth}
      \centering
      \includegraphics[width=1\textwidth]{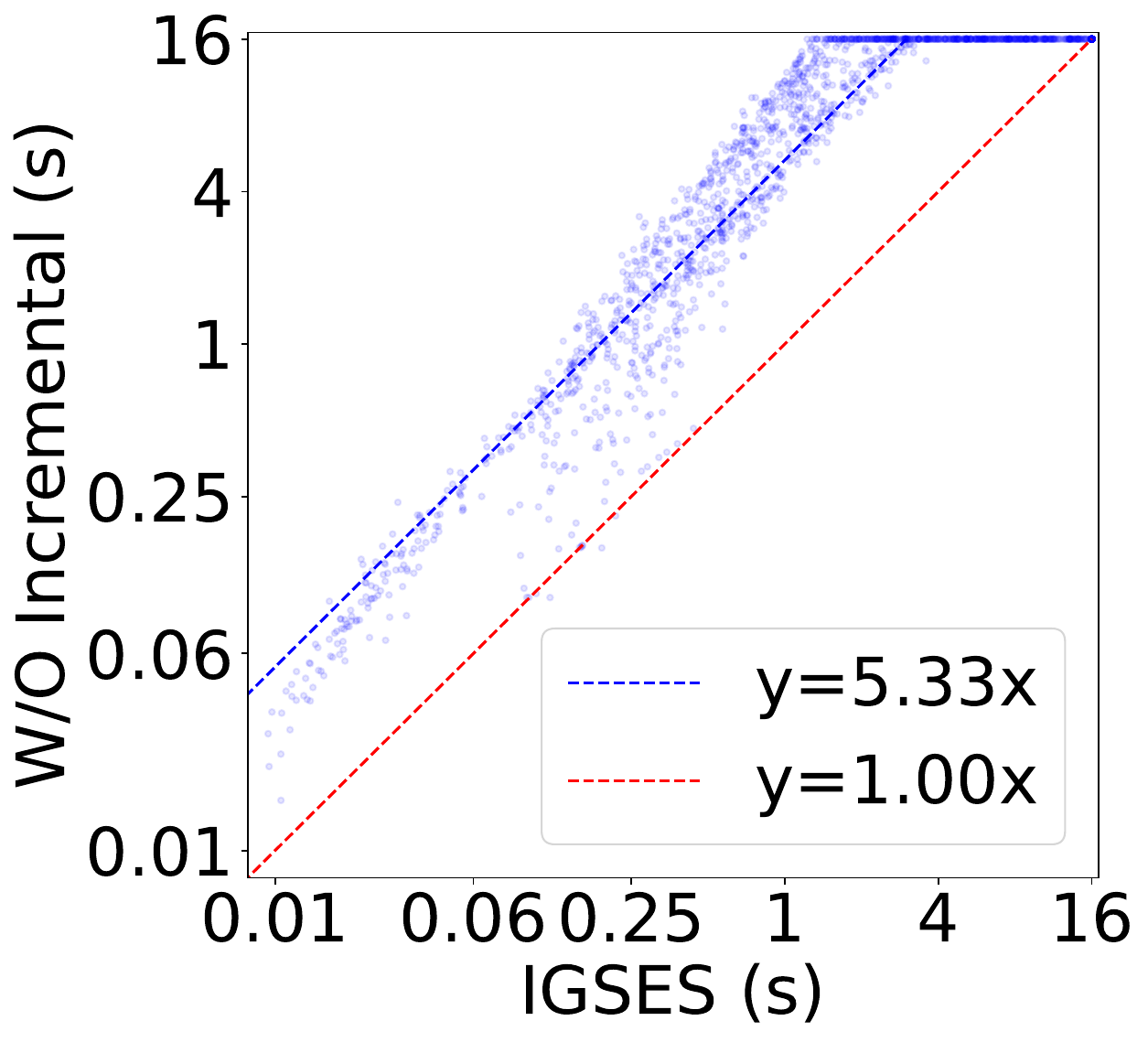}
      \caption{Incremental}
      % \label{fig:random_corner}
    \end{subfigure}%
    \hfill
    \caption{Ablation on four speedup techniques on all instances with delay probability $p=0.03$. In each figure, we compare IGSES to the setting that replaces one of its techniques by the GSES's choice. Each point in the graph represents an instance. The search time of an unsolved instance is set to 16 seconds.
    The blue line is the fitted on instances solved by at least one setting. Its sloped indicates the average speedup.}
    \label{fig: ablation_p03}

\end{figure}

\end{document}